\def\isdraft{0}
\tikzset{every state/.style={minimum size=0pt}}
\newtheorem{theorem}{Theorem}
\newtheorem{corollary}[theorem]{Corollary}
\newtheorem{fact}[theorem]{Fact}
\newtheorem{lemma}[theorem]{Lemma}
\newtheorem{proposition}[theorem]{Proposition}
\theoremstyle{definition} 
\newtheorem{convention}[theorem]{Convention}
\newtheorem{definition}[theorem]{Definition}
\newtheorem{example}[theorem]{Example}
\newtheorem{notation}[theorem]{Notation}
\newtheorem{pseudocode}[theorem]{Pseudocode}
\newtheorem{remark}[theorem]{Remark}
\newcommand{\righttherefore}{:\joinrel\cdot\,}
\title{
    Analogical proportions II
}
\author{
    Christian Anti\'c
}
\address{
    christian.antic@icloud.com\\
    Vienna University of Technology\\
    Vienna, Austria
}
\begin{document}


\begin{abstract} 
    Analogical reasoning is the ability to detect parallels between two seemingly distant objects or situations, a fundamental human capacity used for example in commonsense reasoning, learning, and creativity which is believed by many researchers to be at the core of human and artificial general intelligence. Analogical proportions are expressions of the form ``$a$ is to $b$ what $c$ is to $d$'' at the core of analogical reasoning. The author has recently introduced an abstract algebraic framework of analogical proportions within the general setting of universal algebra. It is the purpose of this paper to further develop the mathematical theory of analogical proportions within that framework as motivated by the fact that it has already been successfully applied to logic program synthesis in artificial intelligence.
\end{abstract}

\maketitle

\section{Introduction}

\textbf{Analogical reasoning} is the ability to detect parallels between two seemingly distant objects or situations, a fundamental human capacity used for example in commonsense reasoning, learning, and creativity which is believed by many researchers to be at the core of human and artificial general intelligence \cite<see e.g.>{Gentner01,Gentner12,Gust08,Hofstadter13,Krieger03,Polya54}. Notable models of analogical reasoning are \citeS{Gentner83} prominent \textit{structure-mapping theory} (SMT) and its implementation within the \textit{structure-mapping engine} \cite{Falklenhainer89} and \citeS{Hofstadter95a} \textit{copycat} algorithm. A formal model using second-order logic similar to SMT is \textit{heuristic-driven theory projection} \cite{Gust06}. \citeA{Winston80} is a classic paper demonstrating the use of analogy in learning. A formalization of analogical reasoning in law \cite{Koszowski19} using Gentzen's sequent calculus is provided by \citeA{Baaz05}. For a short and somewhat outdated introduction to analogical reasoning we refer the reader to \shortciteA{Prade14a}, and for a historic account of models of analogical reasoning we refer the reader to \citeA{Hall89}.

\textbf{Analogical proportions} are expressions of the form ``$a$ is to $b$ what $c$ is to $d$'' --- written $a:b::c:d$ --- at the core of analogical reasoning. Formal models of analogical proportions started to appear only recently, most notably \citeS{Lepage01,Lepage03} axiomatic approach in the linguistic setting, \shortciteS{Miclet09} logical approach in the 2-element boolean setting \shortcite<cf.>{Prade13,Prade18}, and \citeS{Stroppa06} and \shortciteS{Gust06} algebraic approaches. \citeA{Barbot19} use analogical proportions to formalize analogies between concepts. See \citeA{Prade21} for a short summary of applications of analogical proportions to AI \cite<and see>{Correa12}.

The author has recently introduced an abstract algebraic framework of analogical proportions in the general setting of universal algebra \cite{Antic22}. It is \textbf{justification-based} in nature and thus part of the emerging field of Explainable AI \cite<see e.g.>{Heder23}. The \textbf{purpose of this paper} is to contribute to the \textbf{mathematical foundations of analogical reasoning} by further developing the mathematical theory of that framework. This is \textbf{motivated} by the fact that it has already been successfully applied to \textbf{logic program synthesis} for automatic programming and artificial intelligence in \citeA{Antic23-23}, and by the fact that it is capable of capturing two different prominent modellings of a \textbf{boolean proportion} given by \citeA{Klein82} and \citeA{Miclet09} in a single framework \cite{Antic21-3}. For an analysis of the framework in monounary algebras see \citeA{Antic22-2}.
\todo[inline]{Add additional self references in final version if available}

Specifically, we first observe in \prettyref{§:J} that sets of justifications are principal filters, which motivates a peculiar change of notation. We then introduce some terminology which makes the Uniqueness Lemma and Functional Proportion Theorem in \citeA{Antic22} --- which are essential cornerstones of the framework --- easier to apprehend.

We then prove in \prettyref{§:HT} a \textbf{Homomorphism Theorem} as a generalization of the First Isomorphism Theorem in \citeA{Antic22} showing that \textit{arrow} proportions --- which are expressions of the form $a\to b\righttherefore c\to d$ (read as ``$a$ transforms into $b$ as $c$ transforms into $d$'', see \prettyref{§:P}) --- are compatible with homomorphisms.

In \prettyref{§:kl}, we initiate the study of \textbf{fragments} of the framework where the form of justifications is syntactically restricted. Particularly, in \prettyref{§:MAAP}--\ref{§:MMAPs}, we show that we can capture \textbf{difference and geometric proportions} already in the simplest \textit{monolinear} fragment consisting only of justifications with at most one occurrence of a single variable. This implies that \citeS{Stroppa06} notions of arithmetical proportions coincides with monolinear arithmetical proportions in our framework and is thus too restrictive to be considered \textit{the} notion of arithmetical proportions. Moreover, in \prettyref{§:MWP} we study monolinear word proportions.

Analogical proportions between words have found applications to computational linguistics and natural language processing and have been studied in that context by a number of authors \cite<see e.g.>{Hofstadter95a,Lepage01,Lepage03,Lim21,Stroppa06}. In \prettyref{§:WP}, we therefore study \textbf{word proportions} where in \prettyref{t:Stroppa06} and \prettyref{e:Stroppa06} we show that our framework interpreted in the word domain strictly generalizes the often used notion of a word proportion in \citeA{Stroppa06}. This is particularly interesting as our framework has not been geared towards the word domain.
\todo[inline]{Mention Lepage's postulates in the word domain and that we refute some of them justified by simple counterexamples}

Trees are often used in computational linguistics and natural language processing for the representation of the internal structure of complex words and sentences and in first-order predicates used in the representation of the semantics of linguistic entities \cite<see e.g. §2.2 in>{Stroppa06}. In \prettyref{§:TP}, we therefore study \textbf{tree proportions} and show that classical syntactic anti-unification \cite{Reynolds70,Plotkin70,Huet76} can be used to compute tree proportions via a simple syntactic check (\prettyref{t:pqru}). 

Motivated by the previous observation, in \prettyref{§:AU} we show that algebraic \textbf{anti-unification} --- which the author has recently introduced within the setting of universal algebra in \citeA{Antic23-19} --- is related to analogical proportions by giving an illustrative \prettyref{e:20_4_30_x}. We shall point out that other authors have noted the connection between analogical proportions and anti-unification in other settings before: \shortciteA{Krumnack07} is a paper dealing with restricted higher-order anti-unification and analogy making, and \citeA{Weller07} use anti-unification for computing analogical proportions using regular tree grammars.

Finally, in \prettyref{§:FA} we show that analogical proportions can be computed in \textbf{\textit{finite} algebras} using \textbf{tree automata} where we provide algorithms for the most important computational tasks.

\section{Preliminaries}\label{§:P}

This section recalls the abstract algebraic framework of analogical proportions in \citeA{Antic22} where we assume the reader to be fluent in basic universal algebra as it is presented for example in \citeA[§II]{Burris00}.

We denote the \textit{\textbf{empty word}} by $\varepsilon$. As usual, we denote the set of all words over an alphabet $A$ by $ A^\ast$ and define $ A^+:= A^\ast\cup\{\varepsilon\}$. The \textit{\textbf{reverse}} of a word $\mathbf a=a_1\ldots a_n$, $n\geq 1$, is given by $\mathbf a^r:=a_n\ldots a_1$.

A \textit{\textbf{language}} $L$ of algebras is a set of \textit{\textbf{function symbols}}\footnote{We omit constant symbols as we identify constants with 0-ary functions.} together with a \textit{\textbf{rank function}} $r:L\to\mathbb N$, and a denumerable set $X$ of \textit{\textbf{variables}} distinct from $L$. Terms are formed as usual from variables and function symbols.

An \textit{\textbf{$L$-algebra}} $\mathfrak A$ consists of a non-empty set $A$, the \textit{\textbf{universe}} of $\mathfrak A$, and for each function symbol $f\in L$, a function $f^\mathfrak A:A^{r(f)}\to A$, the \textit{\textbf{functions}} of $\mathfrak A$ (the \textit{\textbf{distinguished elements}} of $\mathfrak A$ are the 0-ary functions). Every term $s$ induces a function $s^ \mathfrak A$ on $\mathfrak A$ in the usual way.  We call a term $s$ \textit{\textbf{injective}} in $\mathfrak A$ iff $s^ \mathfrak A$ is an injective function. We denote the variables occurring in $s$ by $X(s)$.

Given a language $L$ and set of variables $X$, the \textit{\textbf{term algebra}} $\mathfrak T_{L,X}$ over $L$ and $X$ has as universe the set $T_{L,X}$ of all terms over $L$ and $X$ and each function symbol $f\in L$ is interpreted by itself, that is,
\begin{align*} 
    f^{\mathfrak T_{L,X}}(p_1,\ldots,p_{r(f)}):=f(p_1,\ldots,p_{r(f)}).
\end{align*} In the sequel, we do not distinguish between a function symbol $f$ and its induced function $f^{\mathfrak T_{L,X}}$ and we call both \textit{\textbf{term functions}}.

We say that a term $t$ is a \textit{\textbf{generalization}} of $s$ --- in symbols, $s\lesssim t$ --- iff there is a substitution $\sigma$ such that $s=t\sigma$. The generalization ordering is reflexive and transitive and thus a pre-order. It is well-known that any two terms $p$ and $q$ have a \textit{\textbf{least general generalization}} \cite{Plotkin70,Reynolds70} in $\mathfrak T_{L,X}$ --- in symbols, $p\Uparrow q$ --- computed by \citeS{Huet76} algorithm as follows. Given an injective mapping $\chi:T_{L,X}\times T_{L,X}\to X$ and two $L$-terms $p,q\in T_{L,X}$, if
\begin{align*} 
    p=f(p_1,\ldots,p_{r(f)}) \quad\text{and}\quad q=f(q_1,\ldots,q_{r(f)}),
\end{align*} for some $f\in L$ and $p_1,\ldots,p_{r(f)},q_1,\ldots,q_{r(f)}\in T_{L,X}$, then define
\begin{align*} 
    p\Uparrow_\chi q:=f(p_1,\ldots,p_{r(f)})\Uparrow_\chi f(q_1,\ldots,q_{r(f)}):=f(p_1\Uparrow_\chi q_1,\ldots,p_{r(f)}\Uparrow_\chi q_{r(f)});
\end{align*} otherwise define
\begin{align*} 
    p\Uparrow_\chi q &:= p\chi q.
\end{align*} Notice that since each constant symbol $a$ induces an 0-ary term function $a(\;)$, we have
\begin{align*} 
    a\Uparrow_\chi a=a(\;)\Uparrow_\chi a(\;)=a.
\end{align*}

\begin{convention} We will always write $s\to t$ instead of $(s,t)$, for any pair of $L$-terms $s$ and $t$ such that every variable in $t$ occurs in $s$, that is, $X(t)\subseteq X(s)$. We call such expressions \textit{\textbf{$L$-rewrite rules}} or \textit{\textbf{$L$-justifications}} where we often omit the reference to $L$. We denote the set of all $L$-justifications with variables among $X$ by $J_{L,X}$. We make the convention that $\to$ binds weaker than every other algebraic operation.
\end{convention}

\begin{definition}\label{d:abcd} We define the \textit{\textbf{analogical proportion relation}} as follows:
\begin{enumerate}
    \item Define the \textit{\textbf{set of justifications}} of an \textit{\textbf{arrow}} $a\to b$ in $\mathfrak A$ by\footnote{For a sequence of objects $\textbf{o}=o_1\ldots o_n$ define $|\textbf{o}|:=n$.}
    \begin{align*} 
        Jus_\mathfrak A(a\to b):=\left\{s\to t\in J_{L,X} \;\middle|\; a\to b=s^ \mathfrak A(\textbf{o})\to t^\mathfrak A\textbf{o},\text{ for some }\textbf{o}\in A^{|\mathbf x|}\right\},
    \end{align*} extended to an \textit{\textbf{arrow proportion}} $a\to b\righttherefore c\to d$\footnote{Read as ``$a$ transforms into $b$ as $c$ transforms into $d$''.} in $\mathfrak{(A,B)}$ by
    \begin{align*} 
        Jus_\mathfrak{(A,B)}(a\to b\righttherefore c\to d):=Jus_\mathfrak A(a\to b)\cap Jus_\mathfrak B(c\to d).
    \end{align*} 

    \item A justification is \textit{\textbf{trivial}} in $\mathfrak{(A,B)}$ iff it justifies every arrow proportion in $\mathfrak{(A,B)}$ and we denote the set of all such trivial justifications by $\emptyset_ \mathfrak{(A,B)}$ simply written $\emptyset$ in cases where the underlying algebras are understood from the context. Moreover, we say that $J$ is a \textit{\textbf{trivial set of justifications}} in $\mathfrak{(A,B)}$ iff every justification in $J$ is trivial. 

    \item Now we say that $a\to b\righttherefore c\to d$ \textit{\textbf{holds}} in $\mathfrak{(A,B)}$ --- in symbols,
    \begin{align*} 
        a\to b\righttherefore_\mathfrak{(A,B)}\, c\to d
    \end{align*} iff
    \begin{enumerate}
        \item either $Jus_\mathfrak A(a\to b)\cup Jus_\mathfrak B(c\to d)=\emptyset_ \mathfrak{(A,B)}$ consists only of trivial justifications, in which case there is neither a non-trivial relation from $a$ to $b$ in $\mathfrak A$ nor from $c$ to $d$ in $\mathfrak B$; or
        \item $Jus_\mathfrak{(A,B)}(a\to b\righttherefore c\to d)$ is maximal with respect to subset inclusion among the sets $Jus_\mathfrak{(A,B)}(a\to b\righttherefore c\to d')$, $d'\in B$, containing at least one non-trivial justification, that is, for any element $d'\in \mathfrak B$,\footnote{We ignore trivial justifications and write ``$\emptyset\subsetneq\ldots$'' instead of ``$\{\text{trivial justifications}\}\subsetneq\ldots$'' et cetera.}
        \begin{align*} 
            \emptyset_ \mathfrak{(A,B)}\subsetneq Jus_\mathfrak{(A,B)}(a\to b\righttherefore c\to d)&\subseteq Jus_\mathfrak{(A,B)}(a\to b\righttherefore c\to d')
        \end{align*} implies
        \begin{align*} 
            \emptyset_ \mathfrak{(A,B)}\subsetneq Jus_\mathfrak{(A,B)}(a\to b\righttherefore c\to d')\subseteq Jus_\mathfrak{(A,B)}(a\to b\righttherefore c\to d).
        \end{align*} We abbreviate the above definition by simply saying that $Jus_\mathfrak{(A,B)}(a\to b\righttherefore c\to d)$ is \textit{\textbf{$d$-maximal}}.
    \end{enumerate}

    \item Finally, the analogical proportion entailment relation is most succinctly defined by
    \begin{align*} 
        a:b::_ \mathfrak{(A,B)}c:d \quad:\Leftrightarrow\quad 
            &a\to b\righttherefore_ \mathfrak{(A,B)}\, c\to d \quad\text{and}\quad b\to a\righttherefore_ \mathfrak{(A,B)}\, d\to c\\
            &c\to d\righttherefore_ \mathfrak{(B,A)}\, a\to b \quad\text{and}\quad d\to c\righttherefore_ \mathfrak{(B,A)}\, b\to a.
    \end{align*}
\end{enumerate} We will always write $\mathfrak A$ instead of $\mathfrak{AA}$. Moreover, we define
\begin{align*} 
    \mathscr S_ \mathfrak{(A,B)}(a:b::c: \mathfrak x):=\{d\in B\mid a:b::_ \mathfrak{(A,B)}c:d\}.
\end{align*}
\end{definition}

\begin{example}[\citeA{Antic22}, Example 11] First consider the algebra $\mathfrak A_1:=(\{a,b,c,d\})$, consisting of four distinct elements with no functions and no constants:
\begin{center}
\begin{tikzpicture} 
    \node (a)               {$a$};
    \node (b) [above=of a]  {$b$};
    \node (c) [right=of a]  {$c$};
    \node (d) [above=of c]  {$d$};
\end{tikzpicture}
\end{center} Since $Jus_{\mathfrak A_1}(a'\to b')\cup Jus_{\mathfrak A_1}(c'\to d')$ contains only trivial justifications for \textit{any distinct} elements $a',b',c',d'\in A'$, we have, for example:
\begin{align*} 
    a:b::_{\mathfrak A_1}\,c:d \quad\text{and}\quad a:c::_{\mathfrak A_1}\,b:d.
\end{align*} On the other hand, since
\begin{align*} 
    Jus_{\mathfrak A_1}(a\to a)\cup Jus_{\mathfrak A_1}(a\to d)=\{x\to x\}\neq\emptyset
\end{align*} and
\begin{align*} 
    \emptyset=Jus_{\mathfrak A_1}(a\to a\righttherefore a\to d)\subsetneq Jus_{\mathfrak A_1}(a\to a\righttherefore a\to a)=\{x\to x\},
\end{align*} we have
\begin{align*} 
    a\to a\righttherefore_{\mathfrak A_1}\, a\to d,
\end{align*} which implies
\begin{align*} 
    a:a::_{\mathfrak A_1}a:d.
\end{align*}

Now consider the slightly different algebra $\mathfrak A_2:=(\{a,b,c,d\},f)$, where $f$ is the unary function defined by 
\begin{center}
\begin{tikzpicture} 
\node (a)               {$a$};
\node (b) [above=of a,yshift=1cm]  {$b$};
\node (c) [right=of a,xshift=1cm]  {$c$};
\node (d) [right=of b,xshift=1cm]  {$d$};

\draw[->] (a) to [edge label'={$f$}] (b);
\draw[->] (b) to [edge label'={$f$}] [loop] (b);
\draw[->] (c) to [edge label'={$f$}] [loop] (c);
\draw[->] (d) to [edge label'={$f$}] [loop] (d);
\end{tikzpicture}
\end{center} We expect the proportion $a:b::c:d$ to fail in $\mathfrak A_2$ as it has no non-trivial justification. In fact, 
\begin{align*} 
    Jus_{\mathfrak A_2}(a\to b)\cup Jus_{\mathfrak A_2}(c\to d)=\left\{x\to f^nx \;\middle|\; n\geq 1\right\}\neq\emptyset 
\end{align*} and
\begin{align*} 
    Jus_{\mathfrak A_2}(a\to b\righttherefore c\to d)=\emptyset
\end{align*} show
\begin{align*} 
    a:b::_{\mathfrak A_2}c:d.
\end{align*}

In the algebra $\mathfrak A_3$ given by 
\begin{center}
\begin{tikzpicture} 
    \node (a)               {$a$};
    \node (b) [above=of a]  {$b$};
    \node (c) [right=of a]  {$c$};
    \draw[->] (a) to [edge label'={$f$}] (b);
    \draw[->] (a) to [edge label'={$g$}] (c);
    \draw[->] (b) to [edge label'={$f,g$}] [loop] (b);
    \draw[->] (c) to [edge label'={$f,g$}] [loop] (c);
\end{tikzpicture}
\end{center} we have
\begin{align*} 
    a:b\not::_{\mathfrak A_3}a:c.
\end{align*} The intuitive reason is that $a:b::a:b$ is a more plausible proportion than $a:b::a:c$, which is reflected in the computation
\begin{align*} 
    \emptyset=Jus_{\mathfrak A_3}(a\to b\righttherefore a\to c)\subsetneq Jus_{\mathfrak A_3}(a\to b\righttherefore a\to b)=\{x\to f(x),\ldots\}.
\end{align*}
\end{example}

Computing all justifications of an arrow proportion is difficult in general, which fortunately can be omitted in many cases: 

\begin{definition} We call a set $J$ of justifications a \textit{\textbf{characteristic set of justifications}} of $a\to b\righttherefore c\to d$ in $\mathfrak{(A,B)}$ iff $J$ is a sufficient set of justifications in the sense that
\begin{enumerate}
    \item $J\subseteq Jus_\mathfrak{(A,B)}(a\to b\righttherefore c\to d)$, and
    \item $J\subseteq Jus_\mathfrak{(A,B)}(a\to b\righttherefore c\to d')$ implies $d'=d$, for each $d'\in B$.
\end{enumerate} In case $J=\{s\to t\}$ is a singleton set satisfying both conditions, we call $s\to t$ a \textit{\textbf{characteristic justification}} of $a\to b\righttherefore c\to d$ in $\mathfrak{(A,B)}$.
\end{definition}

In the tradition of the ancient Greeks, \citeA{Lepage03} introduced (in the linguistic context) a set of proportional properties as a guideline for formal models of analogical proportions. His proposed list has since been extended by a number of authors \cite<e.g.>{Miclet09,Prade13,Barbot19,Antic22} 
and can now be summarized as follows:\footnote{\citeA{Lepage03} uses different names for the axioms --- we have decided to remain consistent with the nomenclature in \citeA[§4.2]{Antic22}.}
\begin{align*}
    &a:b::_ \mathfrak A a:b \quad\text{(p-reflexivity)},\\
    &a:b::_\mathfrak{(A,B)} c:d \quad\Leftrightarrow\quad c:d::_{(\mathfrak{B,A})}a:b\quad\text{(p-symmetry)},\\
    &a:b::_\mathfrak{(A,B)} c:d \quad\Leftrightarrow\quad b:a::_\mathfrak{(A,B)}d:c\quad\text{(inner p-symmetry)},\\
    &a:a::_\mathfrak A a:d \quad\Leftrightarrow\quad d=a\quad\text{(p-determinism)},\\
    &a:a::_\mathfrak{(A,B)} c:c \quad\text{(inner p-reflexivity)},\\
    &a:b::_ \mathfrak A c:d \quad\Leftrightarrow\quad a:c::_ \mathfrak A b:d \quad\text{(central permutation)},\\
    &a:a::_ \mathfrak A c:d \quad\Rightarrow\quad d=c \quad\text{(strong inner p-reflexivity)},\\
    &a:b::_ \mathfrak A a:d \quad\Rightarrow\quad d=b \quad\text{(strong p-reflexivity)}.
\end{align*} Moreover, the following property is considered, for $a,b\in A\cap B$:
\begin{align*}
    a:b::_\mathfrak{(A,B)} b:a\quad\text{(p-commutativity).}
\end{align*} 

Furthermore, the following properties are considered, for $L$-algebras $\mathfrak{A,B,C}$ and elements $a,b\in A$, $c,d\in B$, $e,f\in C$:
\begin{prooftree}
    \AxiomC{$a:b::_\mathfrak{(A,B)} c:d$}
        \AxiomC{$c:d::_\mathfrak{(B,C)} e:f$}
        \RightLabel{(p-transitivity),}
    \BinaryInfC{$a:b::_\mathfrak{(A,C)} e:f$}
\end{prooftree} and, for elements $a,b,e\in A$ and $c,d,f\in B$, the property
\begin{prooftree}
    \AxiomC{$a:b::_\mathfrak{(A,B)} c:d$}
    \AxiomC{$b:e::_\mathfrak{(A,B)} d:f$}
    \RightLabel{(inner p-transitivity),}
    \BinaryInfC{$a:e::_\mathfrak{(A,B)} c:f$}
\end{prooftree} and, for elements $a\in A$, $b\in A\cap B$, $c\in B\cap C$, and $d\in C$, the property
\begin{prooftree}
    \AxiomC{$a:b::_\mathfrak{(A,B)}b:c$}
    \AxiomC{$b:c::_\mathfrak{(B,C)}c:d$}
    \RightLabel{(central p-transitivity).}
    \BinaryInfC{$a:b::_\mathfrak{(A,C)}c:d$}
\end{prooftree} Notice that central p-transitivity follows from p-transitivity. 

Finally, the following schema is considered, where $\mathfrak A'$ and $\mathfrak B'$ are $L'$-algebras, for some language $L\subseteq L'$:
\begin{prooftree}
    \AxiomC{$a:b::_\mathfrak{(A,B)}c:d$}
        \AxiomC{$\mathfrak A=\mathfrak A'\upharpoonright L$}
            \AxiomC{$\mathfrak B=\mathfrak B'\upharpoonright L$}
            \RightLabel{(p-monotonicity).}
    \TrinaryInfC{$a:b::_\mathfrak{(A',B')}c:d$}
\end{prooftree}

\section{Justifications}\label{§:J}

In this section, we observe that sets of justifications are principal filters, which will motivate a change of notation by replacing $Jus$ by $\uparrow$ thus expressing syntactically the close connection to generalizations more adequately.

Recall that a \textit{\textbf{filter}} $F$ on a pre-ordered set $(P,\leq)$ is a subset of $P$ satisfying:
\begin{enumerate}
    \item $F$ is non-empty.
    \item $F$ is downward directed, that is, for every $a,b\in F$, there is some $c\in F$ such that $c\leq a,b$.
    \item $F$ is an upper set or upward closed, that is, for every $a\in F$ and $b\in P$, if $a\leq b$ then $b\in F$.
\end{enumerate} The smallest filter containing an element $a$ is a \textit{\textbf{principal filter}} and $a$ is a \textit{\textbf{principal element}} --- it is given by
\begin{align*} 
    \uparrow_{(P,\leq)}a:=\{b\in P\mid a\leq b\}.
\end{align*}

We extend the generalization pre-ordering from terms to justification via
\begin{align*} 
   s\to t\;\lesssim\; s'\to t' \quad\Leftrightarrow\quad s\lesssim s' \quad\text{and}\quad t\lesssim t'.
\end{align*}

\begin{fact}\label{f:principal} The set of all generalizations of a term forms a principal filter with respect to the generalization pre-ordering generated by that term. Moreover, the set of all justifications of an arrow forms a principal filter with respect to the generalization pre-ordering generated by that justification.
\end{fact}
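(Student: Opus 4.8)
The plan is to reduce both claims to one order-theoretic observation and then match it against the definitions. The observation is: for every pre-ordered set $(P,\leq)$ and every $a\in P$, the principal up-set $\uparrow_{(P,\leq)}a=\{b\in P\mid a\leq b\}$ is a filter, and it is the smallest filter containing $a$; hence it is a principal filter with principal element $a$. I would check the three filter axioms directly. Non-emptiness is reflexivity: $a\leq a$, so $a\in\uparrow a$. Upward closure is transitivity: if $a\leq b$ and $b\leq c$, then $a\leq c$. Downward directedness --- the only axiom that might look delicate --- is immediate here, because for $b,b'\in\uparrow a$ the element $a$ itself belongs to $\uparrow a$ and satisfies $a\leq b$ and $a\leq b'$ (these are exactly the membership conditions), so $a$ is a common lower bound lying in the set. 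Minimality is just as short: any filter $F$ with $a\in F$ is upward closed, hence contains every $b$ with $a\leq b$, that is, $\uparrow a\subseteq F$.

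I would then apply this twice. First, $\lesssim$ is a pre-order on $T_{L,X}$: it is reflexive (taking the identity substitution) and transitive (composing the witnessing substitutions). By definition the set of all generalizations of a term $p$ is $\{t\in T_{L,X}\mid p\lesssim t\}=\uparrow_{(T_{L,X},\lesssim)}p$, so the observation gives that it is a principal filter generated by $p$. Second, the extension of $\lesssim$ to $J_{L,X}$ --- where $s\to t\lesssim s'\to t'$ means $s\lesssim s'$ and $t\lesssim t'$ --- is again a pre-order: reflexivity of $s\to t\lesssim s\to t$ follows from $s\lesssim s$ and $t\lesssim t$, and transitivity is inherited componentwise. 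Hence the set of all generalizations of a justification $s\to t$ is exactly $\uparrow_{(J_{L,X},\lesssim)}(s\to t)$, and the observation again gives that it is a principal filter generated by $s\to t$.

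I do not expect a real obstacle: the mathematical content is the order-theoretic observation, whose proof is a one-line verification of each axiom, and the remainder is bookkeeping --- recognizing that ``the set of all generalizations of $\cdot$'' is the corresponding principal up-set and that $\lesssim$ is a pre-order on both $T_{L,X}$ and $J_{L,X}$. In particular the least general generalization $\Uparrow$ of Huet's algorithm does not enter here, since each of the two filters is generated by a single element; $\Uparrow$ would be needed only if one wanted the filter generated by a set of two or more terms, where it would supply the generator as their least upper bound.
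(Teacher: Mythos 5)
Your order-theoretic observation is correct and it does settle the first claim: by the paper's definition, the set of all generalizations of a term $p$ is exactly $\{t\in T_{L,X}\mid p\lesssim t\}=\uparrow_{(T_{L,X},\lesssim)}p$, and the up-set of any element of a pre-order is the smallest filter containing that element. Since the paper offers no proof of this Fact, this is presumably the intended argument for the first half.

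The second half is where your proof misses the statement. What you prove is that \emph{the set of all generalizations of a given justification} $s\to t$, namely $\{s'\to t'\in J_{L,X}\mid s\to t\lesssim s'\to t'\}$, is a principal filter --- true, and again immediate from your observation. But the Fact asserts this of \emph{the set of all justifications of an arrow}, i.e.\ of $Jus_\mathfrak A(a\to b)$ as defined in Definition~\ref{d:abcd}: a semantic set, determined by which rules $s\to t$ admit an assignment $\textbf{o}$ with $a\to b=s^\mathfrak A(\textbf{o})\to t^\mathfrak A(\textbf{o})$. The entire content of the second claim is the identification of this semantic set with the up-set of a single rule, and your argument never engages with the semantic definition. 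Closing the gap requires (i) a generator --- presumably the ground rule $a\to b$, which already presupposes that $a$ and $b$ are distinguished elements; (ii) upward closure of $Jus_\mathfrak A(a\to b)$ under $\lesssim$; and (iii) that the generator lies below every justification. Neither (ii) nor (iii) is automatic. For (ii), with the componentwise ordering as defined (independent witnessing substitutions on the two sides), take $\mathfrak A=(\{1,2,3\},g)$ with $g(1,2)=3$ and $g(i,j)=1$ otherwise: then $g(x,y)\to x$ justifies $3\to 1$ while $g(x,y)\to y$ does not, yet $g(x,y)\to x\lesssim g(x,y)\to y$ because $x\lesssim y$. For (iii), in $(\mathbb{Z},+,\mathbb{Z})$ the rule $k+x\to\ell+x$ justifies $a\to b$ although $a\not\lesssim k+x$ as terms, so $Jus(a\to b)$ is not the syntactic up-set of $a\to b$. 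So the issue is not merely expository: a proof of the second claim must either replace $\lesssim$ on rules by a single-common-substitution ordering and work with a semantic instantiation order, or impose extra hypotheses on $\mathfrak A$; as written, your argument establishes a true but different statement.
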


\begin{notation} \prettyref{f:principal} motivates the following notation which we will use in the rest of the paper:
\begin{align*} 
    \uparrow_ \mathfrak A(a\to b):=Jus_ \mathfrak A(a\to b),
\end{align*} extended to an arrow proportion by
\begin{align*} 
    \uparrow_\mathfrak{(A,B)}(a\to b \righttherefore c\to d):=Jus_\mathfrak{(A,B)}(a\to b \righttherefore c\to d).
\end{align*}
\end{notation}

We shall now reformulate some key results in \citeA{Antic22} using a different --- hopefully more intuitive --- terminology. For this, we first define, for a term $s\in T_{L,X}$ and element $a\in A$, the set
\begin{align*} 
	\langle s,a\rangle_\mathfrak A:=\left\{\textbf{o}\in A^{r(s)} \;\middle|\; a=s^ \mathfrak A(\textbf{o})\right\},
\end{align*} consisting of all solutions to the polynomial equation $a=s( \mathbf x)$ in $\mathfrak A$. We can now depict every justification $s\to t$ of $a\to b\righttherefore c\to d$ as follows \cite<see>[Convention 15]{Antic22}:
\begin{center}
\begin{tikzpicture}[node distance=1cm and 0.5cm]
\node (a)               {$a$};
\node (d1) [right=of a] {$\to$};
\node (b) [right=of d1] {$b$};
\node (d2) [right=of b] {$\righttherefore $};
\node (c) [right=of d2] {$c$};
\node (d3) [right=of c] {$\to$};
\node (d) [right=of d3] {$d$.};
\node (s) [below=of b] {$s$};
\node (t) [above=of c] {$t$};

\draw (a) to [edge label'={$\langle s,a\rangle$}] (s); 
\draw (c) to [edge label={$\langle s,c\rangle$}] (s);
\draw (b) to [edge label={$\langle t,b\rangle$}] (t);
\draw (d) to [edge label'={$\langle t,d\rangle$}] (t);
\end{tikzpicture}
\end{center} Moreover, we have
\begin{align}\label{equ: 230824-langle} 
	s\to t\in\ \uparrow( a\to b\righttherefore c\to d) \quad\Leftrightarrow\quad \langle s,a\rangle\cap\langle t,b\rangle\neq\emptyset \quad\text{und}\quad \langle s,c\rangle\cap\langle t,d\rangle\neq\emptyset.
\end{align}

Define
\begin{align*} 
	\mathbbm 1_\mathfrak A(s):=\{a\in A\mid |\langle s,a\rangle_\mathfrak A|=1\}.
\end{align*}

We can now reformulate the rather opaque Uniqueness Lemma and Functional Proportion Theorem in \citeA{Antic22} more cleanly using the above notions:

\begin{lemma}[Uniqueness Lemma]\label{l:UL} We have the following implications:
\begin{prooftree}
	\AxiomC{$s\to t\in\ \uparrow_\mathfrak{(A,B)}(a\to b\righttherefore c\to d)$}
		\AxiomC{$c\in\mathbbm 1_\mathfrak B(s)$}
	\BinaryInfC{$a\to b\righttherefore_\mathfrak{(A,B)}\, c\to d$}
\end{prooftree} and
\begin{prooftree}
	\AxiomC{$s\to t\in\ \uparrow_\mathfrak{(A,B)}(a\to b\righttherefore c\to d)$}
		\AxiomC{$a\in\mathbbm 1_\mathfrak A(s)\qquad b\in \mathbbm 1_\mathfrak A(t)\qquad c\in \mathbbm 1_\mathfrak B(s)\qquad d\in \mathbbm 1_\mathfrak B(t)$}
        \RightLabel{.}
	\BinaryInfC{$a:b::_\mathfrak{(A,B)}c:d$}
\end{prooftree}
\end{lemma}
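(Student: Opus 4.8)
\emph{Approach.} Both implications reduce to one observation: under the stated hypotheses the fourth element is \emph{uniquely determined}, so that $s\to t$ (or its mirror $t\to s$) is a characteristic justification of the relevant arrow proportion, which by the very definition of $\righttherefore_\mathfrak{(A,B)}$ forces it to hold. For the first implication, put $\{\mathbf o\}:=\langle s,c\rangle_\mathfrak B$ --- a singleton since $c\in\mathbbm 1_\mathfrak B(s)$. By~\eqref{equ: 230824-langle}, membership $s\to t\in\ \uparrow_\mathfrak{(A,B)}(a\to b\righttherefore c\to d)$ gives $\langle s,c\rangle_\mathfrak B\cap\langle t,d\rangle_\mathfrak B\neq\emptyset$ and hence $d=t^\mathfrak B(\mathbf o)$; the same computation applied to an arbitrary $d'\in B$ shows $s\to t\in\ \uparrow_\mathfrak{(A,B)}(a\to b\righttherefore c\to d')\Rightarrow d'=t^\mathfrak B(\mathbf o)=d$. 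Thus $\{s\to t\}$ is a characteristic set for $a\to b\righttherefore c\to d$, and when $s\to t$ is non-trivial this is exactly the $d$-maximality required by clause (b) of \prettyref{d:abcd}: any $d'$ with $\emptyset\subsetneq\uparrow_\mathfrak{(A,B)}(a\to b\righttherefore c\to d)\subseteq\uparrow_\mathfrak{(A,B)}(a\to b\righttherefore c\to d')$ contains $s\to t$ in the larger set, so $d'=d$ and the two sets coincide. Hence $a\to b\righttherefore_\mathfrak{(A,B)} c\to d$.

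For the second implication I would bootstrap four applications of the first. Since $s\to t$ justifies $a\to b$ in $\mathfrak A$, \eqref{equ: 230824-langle} yields some $\mathbf p\in\langle s,a\rangle_\mathfrak A\cap\langle t,b\rangle_\mathfrak A$; as $a\in\mathbbm 1_\mathfrak A(s)$ and $b\in\mathbbm 1_\mathfrak A(t)$ both fibres are singletons, so $\langle s,a\rangle_\mathfrak A=\langle t,b\rangle_\mathfrak A=\{\mathbf p\}$, and symmetrically $\langle s,c\rangle_\mathfrak B=\langle t,d\rangle_\mathfrak B=\{\mathbf q\}$. In particular neither $s$ nor $t$ carries a variable on which it does not genuinely depend (a dummy variable would blow up a fibre once the universe has more than one element), so $X(s)=X(t)$ and $t\to s$ is a well-formed rewrite rule; moreover $\mathbf p$ witnesses $t\to s\in\ \uparrow_\mathfrak A(b\to a)$ and $\mathbf q$ witnesses $t\to s\in\ \uparrow_\mathfrak B(d\to c)$, so $t\to s\in\ \uparrow_\mathfrak{(A,B)}(b\to a\righttherefore d\to c)$, and --- the justification sets being insensitive to reordering the two algebras --- also $t\to s\in\ \uparrow_\mathfrak{(B,A)}(d\to c\righttherefore b\to a)$ and $s\to t\in\ \uparrow_\mathfrak{(B,A)}(c\to d\righttherefore a\to b)$. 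Applying the first implication to $(s\to t,\ c\in\mathbbm 1_\mathfrak B(s))$, to $(t\to s,\ d\in\mathbbm 1_\mathfrak B(t))$, to $(s\to t,\ a\in\mathbbm 1_\mathfrak A(s))$ in $\mathfrak{(B,A)}$, and to $(t\to s,\ b\in\mathbbm 1_\mathfrak A(t))$ in $\mathfrak{(B,A)}$ produces precisely the four arrow-proportion statements whose conjunction is $a:b::_{\mathfrak{(A,B)}}c:d$ by definition.

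The step I expect to cost the most care is the bookkeeping around \emph{trivial} justifications. ``Characteristic set'' only delivers $d$-maximality once the relevant justification set is known to be non-trivial, and a trivial $s\to t$ is a genuine (if degenerate) possibility: the characteristic property forces $B$ --- or, in the second part, also $A$ --- to collapse to a single point, in which case there is no competing $d'$ and the conclusion follows by inspection (it then also falls under clause (a) of \prettyref{d:abcd}). I would therefore peel off this one-point situation at the very start and run the main argument under the standing assumption that $s\to t$ is non-trivial and the underlying universes are non-degenerate; the verification of \eqref{equ: 230824-langle} in each of the four directions is then routine.
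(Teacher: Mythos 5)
Your overall strategy is the intended one (it is precisely the argument behind the Uniqueness Lemma in the cited source, which this paper only restates): a justification whose left-hand side has a one-point fibre over $c$ pins down $d$, hence is a characteristic justification, hence yields $d$-maximality via clause (b) of \prettyref{d:abcd}; and the full proportion is obtained by running this in all four directions. Your first implication, including the explicit quarantining of the trivial-justification corner case (where $c\in\mathbbm 1_\mathfrak B(s)$ forces $|B|=1$), is correct.

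The gap is in the second implication, at the sentence ``so $X(s)=X(t)$ and $t\to s$ is a well-formed rewrite rule''. The reason you give --- a dummy variable would blow up a fibre --- only excludes variables on which the induced term \emph{function} does not depend. It does not exclude a variable $y$ that occurs in $s$, genuinely matters for $s^{\mathfrak A}$, and simply does not occur in $t$: rewrite rules only require $X(t)\subseteq X(s)$, and if the fibre $\langle t,b\rangle_\mathfrak A$ is computed over $X(t)$ alone (which is what $\mathbf o\in A^{r(t)}$ suggests), all four singleton conditions can hold with $X(t)\subsetneq X(s)$. For instance $s=f(x,y)$ and $t=x$, with $f^{\mathfrak A}$ taking the value $a$ at exactly one pair $(b,e)$, give $a\in\mathbbm 1_{\mathfrak A}(s)$ and $b\in\mathbbm 1_{\mathfrak A}(t)$, yet $t\to s=x\to f(x,y)$ is not a rewrite rule; your derivation of the two reversed arrow proportions $b\to a\righttherefore d\to c$ and $d\to c\righttherefore b\to a$ then has no justification to work with, and one can even arrange $\uparrow_\mathfrak A(b\to a)=\emptyset\neq\ \uparrow_\mathfrak B(d\to c)$ so that the reversed arrow proportion genuinely fails. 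The argument is saved exactly when the fibres of $t$ are taken over the \emph{full} variable tuple of the rule $s\to t$: then $b\in\mathbbm 1_{\mathfrak A}(t)$ together with $|A|\geq 2$ really does force every variable of $s$ to occur in $t$, your ``dummy variable'' remark becomes the correct proof of $X(s)=X(t)$, and the four applications of the first implication go through. You need to state which convention you are using and derive $X(s)\subseteq X(t)$ from it explicitly; as written, that step is a non sequitur, and it is the one place where the proof (and, under the narrow reading of $\langle t,b\rangle$, the statement itself) is at risk.
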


\begin{theorem}[Functional Proportion Theorem]\label{t:FPT} For any $L$-term $t(x)$, we have the following implication:
\begin{prooftree}
	\AxiomC{$a\in\mathbbm 1_\mathfrak A(t)$}
		\AxiomC{$c\in\mathbbm 1_\mathfrak B(t)$}
        \RightLabel{.}
	\BinaryInfC{$a:t^\mathfrak A(a)::_\mathfrak{(A,B)}c:t^\mathfrak B(c)$}
\end{prooftree} In this case, we call $t^\mathfrak B(c)$ a \textit{\textbf{functional solution}} of $a:b::c: \mathfrak x$ in $\mathfrak{(A,B)}$ characteristically justified by $x\to t(x)$. This holds in particular in case $t$ is injective in $\mathfrak A$ and $\mathfrak B$.
\end{theorem}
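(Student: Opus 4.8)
The plan is to produce the single canonical justification $x\to t(x)$ and push it through the Uniqueness Lemma (\prettyref{l:UL}), whose second implication already outputs the full four‑part proportion $::$ in one shot. Write $b:=t^{\mathfrak A}(a)$ and $d:=t^{\mathfrak B}(c)$. First I would note that $x\to t(x)$ is a legitimate justification, since $X(t(x))\subseteq\{x\}=X(x)$. Evaluating the term functions $x^{\mathfrak A}$ and $t^{\mathfrak A}$ at the argument $a$ gives $x^{\mathfrak A}(a)=a$ and $t^{\mathfrak A}(a)=b$, so $x\to t(x)\in\ \uparrow_{\mathfrak A}(a\to b)$; symmetrically, evaluating at $c$ yields $x\to t(x)\in\ \uparrow_{\mathfrak B}(c\to d)$, and hence $x\to t(x)\in\ \uparrow_{\mathfrak{(A,B)}}(a\to b\righttherefore c\to d)$.

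Next I would feed this justification into \prettyref{l:UL}. Its second implication requires $a\in\mathbbm 1_{\mathfrak A}(x)$, $b\in\mathbbm 1_{\mathfrak A}(t(x))$, $c\in\mathbbm 1_{\mathfrak B}(x)$, $d\in\mathbbm 1_{\mathfrak B}(t(x))$. The two conditions on the source term $x$ hold for free, because $\langle x,e\rangle_{\mathfrak A}=\{e\}$ and $\langle x,e\rangle_{\mathfrak B}=\{e\}$ for every element $e$; the two conditions on the target term amount (via $\langle t(x),e\rangle=\langle t,e\rangle$) to $b\in\mathbbm 1_{\mathfrak A}(t)$ and $d\in\mathbbm 1_{\mathfrak B}(t)$, which is what the hypotheses guarantee --- and which, for the final ``in particular'' clause, is automatic, since an injective $t^{\mathfrak A}$ has every non‑empty fibre of size one, so every element of its image lies in $\mathbbm 1_{\mathfrak A}(t)$, and likewise for $t^{\mathfrak B}$. \prettyref{l:UL} then delivers $a:b::_{\mathfrak{(A,B)}}c:d$, i.e.\ $a:t^{\mathfrak A}(a)::_{\mathfrak{(A,B)}}c:t^{\mathfrak B}(c)$. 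The ancillary claim that $x\to t(x)$ is a characteristic justification I would verify directly from the definition: clause (1) is the membership already shown, and for clause (2), if $x\to t(x)\in\ \uparrow_{\mathfrak{(A,B)}}(a\to b\righttherefore c\to d')$ then some $o\in B$ satisfies $c=x^{\mathfrak B}(o)=o$ and $d'=t^{\mathfrak B}(o)=t^{\mathfrak B}(c)=d$, forcing $d'=d$ --- no uniqueness assumption is needed here, again because the source term is $x$.

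The step I expect to demand the most care is the internal bookkeeping of \prettyref{l:UL}: one must keep straight which term of the justification plays the ``source'' role and which the ``target'' role in each of the four constituent arrow proportions hidden inside $::$, so that it really is the bare variable $x$ (with its automatic singleton fibres) sitting in the source positions throughout. If one instead prefers to prove the statement arrow proportion by arrow proportion, using the first implication of \prettyref{l:UL}, then the two inner‑swapped proportions $b\to a\righttherefore d\to c$ and $d\to c\righttherefore b\to a$ must be handled with the reversed justification $t(x)\to x$, and there it is the fibres $\langle t,\cdot\rangle$ that have to be singletons --- once more precisely what the $\mathbbm 1$‑hypotheses (or injectivity of $t$) provide. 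One should also record that the degenerate ``only trivial justifications'' case of the definition of $\righttherefore$ is already subsumed by \prettyref{l:UL}, so no separate argument is needed for it.
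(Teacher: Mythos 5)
Your overall strategy --- exhibit $x\to t(x)$ as a justification and discharge the conclusion through the second implication of the Uniqueness \prettyref{l:UL} --- is the intended route (the paper itself gives no proof here, deferring to the cited original), and your verifications that $x\to t(x)\in\ \uparrow_{\mathfrak{(A,B)}}(a\to b\righttherefore c\to d)$, that the source-term conditions $a\in\mathbbm 1_\mathfrak A(x)$ and $c\in\mathbbm 1_\mathfrak B(x)$ are automatic, and that $x\to t(x)$ is characteristic are all correct, as is your treatment of the injective case.

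The gap is the sentence claiming that $b\in\mathbbm 1_\mathfrak A(t)$ and $d\in\mathbbm 1_\mathfrak B(t)$ ``is what the hypotheses guarantee.'' The stated hypotheses are $a\in\mathbbm 1_\mathfrak A(t)$ and $c\in\mathbbm 1_\mathfrak B(t)$, i.e.\ the fibres $\langle t,a\rangle_\mathfrak A$ and $\langle t,c\rangle_\mathfrak B$ are singletons, whereas what \prettyref{l:UL} requires is that the fibres over the \emph{images}, $\langle t,t^\mathfrak A(a)\rangle_\mathfrak A$ and $\langle t,t^\mathfrak B(c)\rangle_\mathfrak B$, be singletons. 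The set $\mathbbm 1_\mathfrak A(t)$ is not closed under $t^\mathfrak A$: with $A=\{0,1,2\}$ and $f(0)=1$, $f(1)=2$, $f(2)=2$ one has $1\in\mathbbm 1_\mathfrak A(f)$ but $f(1)=2\notin\mathbbm 1_\mathfrak A(f)$. So as written your argument establishes the theorem under the hypotheses $t^\mathfrak A(a)\in\mathbbm 1_\mathfrak A(t)$ and $t^\mathfrak B(c)\in\mathbbm 1_\mathfrak B(t)$, not under the literal ones, and the one-line identification of the two is not valid. In fairness, the image-fibre reading is almost certainly what the statement intends: it matches the condition in the cited original (``$t^\mathfrak A(a')=t^\mathfrak A(a)$ implies $a'=a$''), and it is the only reading under which the closing ``in particular if $t$ is injective'' clause is coherent --- an injective but non-surjective $t^\mathfrak A$ leaves every $a$ outside its image with an empty fibre and hence outside $\mathbbm 1_\mathfrak A(t)$, so the literal hypothesis would then fail rather than hold automatically. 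You should either flag this discrepancy explicitly and prove the corrected statement, or supply a separate argument for the hypotheses as literally written.
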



\section{Homomorphism Theorem}\label{§:HT}

Analogical proportions are compatible with structure-preserving mappings in the following way: The First Isomorphism Theorem in \citeA{Antic22} says that for any isomorphism $H: \mathfrak{A\to B}$,
\begin{align*} 
    a:b::_\mathfrak{(A,B)} Ha:Hb,\quad\text{for all $a,b\in A$}.
\end{align*} A simple counterexample shows that homomorphisms are in general \textit{not} compatible with analogical proportions in the same way (see \prettyref{e:H}). In this section, we shall recover a part of the result by showing that homomorphisms are compatible with \textit{arrow} proportions. We first show an auxiliary lemma (analogous to the Isomorphism Lemma in \citeA{Antic22}):

\begin{lemma}[Homomorphism Lemma]\label{l:HL} For any homomorphism $H: \mathfrak{A\to B}$ and $a,b\in A$,
\begin{align}\label{equ:a->b_subseteq_Ha->Hb}  
    \uparrow_ \mathfrak A(a\to b)\subseteq\ \uparrow_ \mathfrak B(Ha\to Hb).
\end{align} In case $H$ is an isomorphism, we have
\begin{align}
    \uparrow_ \mathfrak A(a\to b)=\ \uparrow_ \mathfrak B(Ha\to Hb).
\end{align}
\end{lemma}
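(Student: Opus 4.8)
The plan is to unwind the definitions on both sides of \prettyref{equ:a->b_subseteq_Ha->Hb} and push an arbitrary justification through the homomorphism $H$. So suppose $s\to t\in\ \uparrow_\mathfrak A(a\to b)$, i.e. $s\to t\in Jus_\mathfrak A(a\to b)$. By \prettyref{d:abcd}(1) there is a tuple $\mathbf o\in A^{|\mathbf x|}$ with $a=s^\mathfrak A(\mathbf o)$ and $b=t^\mathfrak A(\mathbf o)$; equivalently, in the $\langle-,-\rangle$ notation, $\mathbf o\in\langle s,a\rangle_\mathfrak A\cap\langle t,b\rangle_\mathfrak A$. I would then apply $H$ and invoke the standard universal-algebra fact that homomorphisms commute with term functions: $H(s^\mathfrak A(\mathbf o))=s^\mathfrak B(H\mathbf o)$ and $H(t^\mathfrak A(\mathbf o))=t^\mathfrak B(H\mathbf o)$, where $H\mathbf o$ denotes the componentwise image. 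Hence $Ha=s^\mathfrak B(H\mathbf o)$ and $Hb=t^\mathfrak B(H\mathbf o)$, so $H\mathbf o$ witnesses $s\to t\in Jus_\mathfrak B(Ha\to Hb)=\ \uparrow_\mathfrak B(Ha\to Hb)$. Since $s\to t$ was arbitrary, this establishes the inclusion.

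For the isomorphism case, I would simply note that if $H$ is an isomorphism then $H^{-1}:\mathfrak B\to\mathfrak A$ is also a homomorphism, and $H^{-1}(Ha)=a$, $H^{-1}(Hb)=b$. Applying the inclusion just proved to $H^{-1}$ with the elements $Ha,Hb$ gives $\uparrow_\mathfrak B(Ha\to Hb)\subseteq\ \uparrow_\mathfrak A(H^{-1}(Ha)\to H^{-1}(Hb))=\ \uparrow_\mathfrak A(a\to b)$, and combining the two inclusions yields equality.

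The argument is essentially routine, so there is no serious obstacle; the only point that needs a word of care is the bookkeeping around the variable tuple. One must be careful that $|\mathbf x|=r(s)$ (the arity of $s$ as a term function) and that $t$ uses only variables occurring in $s$, so that the single tuple $\mathbf o$ really does serve both $s$ and $t$ simultaneously — this is exactly what the rewrite-rule convention $X(t)\subseteq X(s)$ guarantees, and it is what makes the passage through $H$ legitimate without having to match up two separate tuples. The other mild subtlety is purely notational: $H\mathbf o$ should be read as applying $H$ to each coordinate, which is the natural extension of $H$ to $A^{|\mathbf x|}\to B^{|\mathbf x|}$, and the commutation $H\circ s^\mathfrak A=s^\mathfrak B\circ H$ then holds by a trivial induction on the structure of the term $s$ (base cases: variables and constants; inductive step: the defining clause $f^\mathfrak A$ together with the homomorphism property of $H$).
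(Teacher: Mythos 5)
Your proof is correct and follows essentially the same route as the paper: push the witnessing tuple $\mathbf o$ through $H$ and use that homomorphisms commute with term functions. The only (harmless) difference is that for the isomorphism case the paper defers to the Isomorphism Lemma of the earlier reference, whereas you give the short self-contained argument via $H^{-1}$, which amounts to the same thing.
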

\begin{proof} We have
\begin{align*} 
   s\to t\in\ \uparrow_ \mathfrak A(a\to b) 
        \quad&\Leftrightarrow\quad a\to b=s^ \mathfrak A(\textbf{o})\to t^ \mathfrak A(\textbf{o}),\quad\text{for some $\textbf{o}\in A^{r(s)}$}\\
        \quad&\Rightarrow\quad Ha\to Hb=H(s^ \mathfrak A(\textbf{o}))\to H(t^ \mathfrak A(\textbf{o}))=s^ \mathfrak B(H \textbf{o})\to t^ \mathfrak B(H \textbf{o})\\
        \quad&\Rightarrow\quad s\to t\in\ \uparrow_ \mathfrak B(Ha\to Hb).
\end{align*}

The second part has already been shown in the proof of \citeA[Isomorphism Lemma]{Antic22}.
\end{proof}

\begin{theorem}[Homomorphism Theorem]\label{t:HT} For any homomorphism $H: \mathfrak{A\to B}$ and elements $a,b\in A$, we have the following implication:
\begin{prooftree}
    \AxiomC{$\uparrow_ \mathfrak A(a\to b)=\emptyset_ \mathfrak A \quad\Rightarrow\quad \uparrow_ \mathfrak B(Ha\to Hb)=\emptyset_ \mathfrak B$}
    \RightLabel{.}
    \UnaryInfC{$a\to b \righttherefore_\mathfrak{(A,B)}\, Ha\to Hb$}
\end{prooftree}
\end{theorem}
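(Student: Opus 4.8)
The plan is to unwind the definition of when an arrow proportion holds (Definition \ref{d:abcd}(3)) and split into the two cases of that definition. Recall that $a\to b\righttherefore_\mathfrak{(A,B)}\, Ha\to Hb$ holds iff either (a) $\uparrow_\mathfrak A(a\to b)\cup\uparrow_\mathfrak B(Ha\to Hb)$ is trivial, or (b) $\uparrow_\mathfrak{(A,B)}(a\to b\righttherefore Ha\to Hb)$ is $Hb$-maximal among the sets $\uparrow_\mathfrak{(A,B)}(a\to b\righttherefore Ha\to d')$ for $d'\in B$. The hypothesis of the theorem is exactly the implication ``$\uparrow_\mathfrak A(a\to b)$ trivial $\Rightarrow$ $\uparrow_\mathfrak B(Ha\to Hb)$ trivial''; so the first thing to do is dispose of the degenerate case.

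First I would treat Case (a). Suppose $\uparrow_\mathfrak A(a\to b)$ consists only of trivial justifications. Then by the hypothesis $\uparrow_\mathfrak B(Ha\to Hb)$ is also trivial, hence the union $\uparrow_\mathfrak A(a\to b)\cup\uparrow_\mathfrak B(Ha\to Hb)$ is trivial and clause (3)(a) of Definition \ref{d:abcd} applies, giving $a\to b\righttherefore_\mathfrak{(A,B)}\, Ha\to Hb$ immediately. (Here one should note that ``trivial'' means ``equal to $\emptyset_\mathfrak{(A,B)}$'' in the convention of the paper, and that the Homomorphism Lemma guarantees $\emptyset_\mathfrak A \subseteq\ \uparrow_\mathfrak A(a\to b)$ always, so the relevant containment is automatic.)

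Now the main case: assume $\uparrow_\mathfrak A(a\to b)$ contains a non-trivial justification $s\to t$. By the Homomorphism Lemma (\prettyref{l:HL}, inclusion \eqref{equ:a->b_subseteq_Ha->Hb}) we have $s\to t\in\ \uparrow_\mathfrak B(Ha\to Hb)$, and since $s\to t$ is non-trivial, $\uparrow_\mathfrak{(A,B)}(a\to b\righttherefore Ha\to Hb)=\uparrow_\mathfrak A(a\to b)\cap\uparrow_\mathfrak B(Ha\to Hb)$ contains $s\to t$ and is thus non-trivial. It remains to verify the $Hb$-maximality condition (3)(b). So fix $d'\in B$ with $\emptyset_\mathfrak{(A,B)}\subsetneq\ \uparrow_\mathfrak{(A,B)}(a\to b\righttherefore Ha\to Hb)\subseteq\ \uparrow_\mathfrak{(A,B)}(a\to b\righttherefore Ha\to d')$; I must show the reverse inclusion $\uparrow_\mathfrak{(A,B)}(a\to b\righttherefore Ha\to d')\subseteq\ \uparrow_\mathfrak{(A,B)}(a\to b\righttherefore Ha\to Hb)$. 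Take any $u\to v\in\ \uparrow_\mathfrak{(A,B)}(a\to b\righttherefore Ha\to d')$. Then $u\to v\in\ \uparrow_\mathfrak A(a\to b)$, so by the Homomorphism Lemma $u\to v\in\ \uparrow_\mathfrak B(Ha\to Hb)$, whence $u\to v\in\ \uparrow_\mathfrak A(a\to b)\cap\uparrow_\mathfrak B(Ha\to Hb)=\ \uparrow_\mathfrak{(A,B)}(a\to b\righttherefore Ha\to Hb)$. This gives the desired inclusion and hence $Hb$-maximality, so $a\to b\righttherefore_\mathfrak{(A,B)}\, Ha\to Hb$ holds.

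The point I expect to be slightly delicate — and worth stating carefully rather than a genuine obstacle — is that in Case (b) we do not need to reason about arbitrary $d'$ at all beyond observing that every justification of $a\to b\righttherefore Ha\to d'$ is in particular a justification of $a\to b$ in $\mathfrak A$, and then pushing it through $H$: the inclusion $\uparrow_\mathfrak A(a\to b)\subseteq\ \uparrow_\mathfrak B(Ha\to Hb)$ does all the work, making $\uparrow_\mathfrak{(A,B)}(a\to b\righttherefore Ha\to Hb)$ equal to $\uparrow_\mathfrak A(a\to b)$ itself (intersected with $\uparrow_\mathfrak B(Ha\to Hb)$, but that intersection is all of $\uparrow_\mathfrak A(a\to b)$), which is manifestly the largest possible candidate. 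So $Hb$-maximality is essentially forced. One should double-check the edge convention about trivial justifications (the ``$\emptyset\subsetneq\ldots$'' abbreviation) to make sure the strictness in the hypothesis of (3)(b) is handled, but this is exactly what the non-triviality of $s\to t$ secures.
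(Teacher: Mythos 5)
Your proposal is correct and follows essentially the same route as the paper's proof: the key step in both is the Homomorphism Lemma's inclusion $\uparrow_\mathfrak A(a\to b)\subseteq\ \uparrow_\mathfrak B(Ha\to Hb)$, which forces $\uparrow_\mathfrak{(A,B)}(a\to b\righttherefore Ha\to Hb)=\ \uparrow_\mathfrak A(a\to b)$ and hence $Hb$-maximality, with the theorem's hypothesis invoked only to dispose of the degenerate all-trivial case. The paper merely orders the two steps differently (maximality first, then the degenerate case), so there is no substantive difference.
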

\begin{proof} By the Homomorphism \prettyref{l:HL}, we have
\begin{align*} 
    \uparrow_{ \mathfrak{(A,B)}}(a\to b\righttherefore Ha\to Hb)=\ \uparrow_ \mathfrak A(a\to b)\ \cap \uparrow_ \mathfrak B(Ha\to Hb)=\ \uparrow_ \mathfrak A(a\to b),
\end{align*} which shows the $Hb$-maximality of $\uparrow_{ \mathfrak{(A,B)}}(a\to b\righttherefore Ha\to Hb)$.

It remains to show that we cannot have
\begin{align*} 
    \uparrow_ \mathfrak A(a\to b)\ \cup\uparrow_ \mathfrak B(Ha\to Hb)\neq\emptyset_ \mathfrak{(A,B)} \quad\text{whereas}\quad \uparrow_{ \mathfrak{(A,B)}}(a\to b\righttherefore Ha\to Hb)=\emptyset_ \mathfrak{(A,B)}.
\end{align*} This is a direct consequence of \prettyref{equ:a->b_subseteq_Ha->Hb} and the assumption that $\uparrow_ \mathfrak A(a\to b)=\emptyset_ \mathfrak A$ implies $\uparrow_ \mathfrak B(Ha\to Hb)=\emptyset_ \mathfrak B$.
\end{proof}

\begin{example}\label{e:H} Let us now analyze the counterexample in \citeA[Example 39]{Antic22}. Let $\mathfrak A:=(\{a,b,c,d\},g)$, $\mathfrak B:=(\{e,f\},g)$, and $H: \mathfrak{A\to B}$ be given by
\begin{center}
\begin{tikzpicture} 
    \node (d) {$d$};
    \node (c) [below=of d,yshift=-1cm]  {$c$};
    \node (b) [below=of c,yshift=-1cm]  {$b$};
    \node (a) [below=of b,yshift=-1cm]  {$a$};
        \node (f) [right=of c,xshift=1cm]  {$f$};
        \node (e) [right=of b,xshift=1cm]  {$e$};
    \draw[->] (a) to [edge label={$g$}] (b);
    \draw[->] (c) to [edge label={$g$}] (d);
    \draw[->] (e) to [edge label'={$g$}] (f);
    \draw[->, dashed] (d) to [edge label={$H$}] (f);
    \draw[->, dashed] (c) to (e);
    \draw[->, dashed] (b) to (f);
    \draw[->, dashed] (a) to [edge label'={$H$}] (e);
    \draw[->] (b) to [edge label'={$g$}] [loop] (b);
    \draw[->] (d) to [edge label'={$g$}] [loop] (d);
        \draw[->] (f) to [edge label'={$g$}] [loop] (f);
\end{tikzpicture}
\end{center} In \citeA[Example 39]{Antic22} it is shown that
\begin{align*} 
    a\to d\not \righttherefore Ha\to Hd.
\end{align*} This is not a contradiction to the Homomorphism \prettyref{t:HT}, since
\begin{align*} 
    \uparrow(a\to d)=\emptyset \quad\not\Rightarrow\quad \uparrow(Ha\to Hd)=\emptyset,
\end{align*} shows that we \textit{cannot} apply the theorem. What we do have is
\begin{align*} 
    a\to b \righttherefore Ha\to Hb \quad\text{and}\quad c\to d \righttherefore Hc\to Hd.
\end{align*} In fact, we even have
\begin{align*} 
    a:b::Ha:Hb \quad\text{and}\quad c:d::Hc:Hd.
\end{align*}
\end{example}

\begin{definition}[\citeA{Antic22-4}] A \textit{\textbf{proportional homomorphism}} is a mapping $H: \mathfrak A\to \mathfrak B$ satisfying
\begin{align*} 
    a:b::_ \mathfrak A c:d \quad\Leftrightarrow\quad Ha:Hb::_ \mathfrak B Hc:Hd,\quad\text{for all $a,b,c,d\in A$}.
\end{align*}
\end{definition}

\begin{theorem}[Isomorphism Theorem] Every isomorphism is a proportional homomorphism.
\end{theorem}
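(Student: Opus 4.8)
The plan is to peel the definition of $a:b::_\mathfrak A c:d$ apart into its four constituent arrow proportions and push each of them through the isomorphism using the Homomorphism \prettyref{l:HL}. Fix an isomorphism $H:\mathfrak{A\to B}$. Recall from \prettyref{d:abcd} that $a:b::_\mathfrak A c:d$ abbreviates $a:b::_\mathfrak{(A,A)}c:d$, which is by definition the conjunction of the four arrow proportions $a\to b\righttherefore_\mathfrak{(A,A)}c\to d$, $\ b\to a\righttherefore_\mathfrak{(A,A)}d\to c$, $\ c\to d\righttherefore_\mathfrak{(A,A)}a\to b$, and $d\to c\righttherefore_\mathfrak{(A,A)}b\to a$; likewise $Ha:Hb::_\mathfrak B Hc:Hd$ is the conjunction of the corresponding four arrow proportions in $\mathfrak{(B,B)}$. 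Hence it suffices to prove, for all $x,y,z,w\in A$, the single equivalence
\begin{align*}
    x\to y\righttherefore_\mathfrak{(A,A)}z\to w \quad\Leftrightarrow\quad Hx\to Hy\righttherefore_\mathfrak{(B,B)}Hz\to Hw,
\end{align*}
and then to instantiate it at $(x,y,z,w)\in\{(a,b,c,d),\,(b,a,d,c),\,(c,d,a,b),\,(d,c,b,a)\}$, which turns the four clauses of the left-hand proportion into exactly the four clauses of the right-hand one.

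To prove the equivalence I would first record two consequences of $H$ being an isomorphism. By the isomorphism case of the Homomorphism \prettyref{l:HL}, $\uparrow_\mathfrak A(x\to y)=\ \uparrow_\mathfrak B(Hx\to Hy)$ for all $x,y\in A$; intersecting two such identities gives
\begin{align*}
    \uparrow_\mathfrak{(A,A)}(x\to y\righttherefore z\to w)=\ \uparrow_\mathfrak{(B,B)}(Hx\to Hy\righttherefore Hz\to Hw)\qquad\text{for all }x,y,z,w\in A.
\end{align*}
Second, a justification is trivial in $\mathfrak{(A,A)}$ precisely when it lies in $\uparrow_\mathfrak A(x\to y)$ for every $x,y\in A$ (take $z=x$, $w=y$ in the intersection defining triviality); since $H$ is a bijection, the pairs $(Hx,Hy)$ with $x,y\in A$ exhaust $B\times B$, so together with $\uparrow_\mathfrak A(x\to y)=\ \uparrow_\mathfrak B(Hx\to Hy)$ this yields the equality of trivial-justification sets $\emptyset_\mathfrak{(A,A)}=\emptyset_\mathfrak{(B,B)}$. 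In particular $\uparrow_\mathfrak A(x\to y)\cup\uparrow_\mathfrak A(z\to w)=\emptyset_\mathfrak{(A,A)}$ if and only if $\uparrow_\mathfrak B(Hx\to Hy)\cup\uparrow_\mathfrak B(Hz\to Hw)=\emptyset_\mathfrak{(B,B)}$.

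With these two facts in hand, both defining clauses of $x\to y\righttherefore_\mathfrak{(A,A)}z\to w$ in \prettyref{d:abcd}(3) translate clause for clause. The ``trivial'' clause transfers by the last sentence of the previous paragraph. The $w$-maximality clause transfers because each set $\uparrow_\mathfrak{(A,A)}(x\to y\righttherefore z\to w')$ equals $\uparrow_\mathfrak{(B,B)}(Hx\to Hy\righttherefore Hz\to Hw')$, the strict inclusions ``$\emptyset_\mathfrak{(A,A)}\subsetneq\cdots$'' become ``$\emptyset_\mathfrak{(B,B)}\subsetneq\cdots$'' by $\emptyset_\mathfrak{(A,A)}=\emptyset_\mathfrak{(B,B)}$, and $w'$ ranges over $A$ exactly when $Hw'$ ranges over $B$ by surjectivity of $H$. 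This establishes the equivalence, and the four instantiations above give $a:b::_\mathfrak A c:d\Leftrightarrow Ha:Hb::_\mathfrak B Hc:Hd$, i.e.\ $H$ is a proportional homomorphism. The only place that demands care is the identity $\emptyset_\mathfrak{(A,A)}=\emptyset_\mathfrak{(B,B)}$: genuine equality (not merely one inclusion) is needed so that the strict-inclusion bookkeeping in the $w$-maximality clause survives in \emph{both} directions; this is exactly where surjectivity of $H$ enters, and it is also why the one-directional Homomorphism \prettyref{t:HT} does not upgrade to a biconditional for arbitrary homomorphisms, as \prettyref{e:H} shows.
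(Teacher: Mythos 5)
Your proof is correct and follows the same route as the paper, which simply declares the theorem ``a direct consequence of the Homomorphism Lemma'' --- you use exactly that lemma's isomorphism case ($\uparrow_\mathfrak A(x\to y)=\ \uparrow_\mathfrak B(Hx\to Hy)$) as the engine. The details you supply that the paper omits --- the reduction to a single arrow-proportion equivalence, the identification $\emptyset_\mathfrak{(A,A)}=\emptyset_\mathfrak{(B,B)}$ via bijectivity, and the use of surjectivity so that $w'$ ranging over $A$ matches $Hw'$ ranging over $B$ in the maximality clause --- are exactly the right ones.
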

\begin{proof} A direct consequence of the Homomorphism \prettyref{l:HL}.
\end{proof}

\begin{remark} The Isomorphism Theorem shows that isomorphisms are \textit{\textbf{analogy-preserving functions}} in the sense of \citeA[Definition 6]{Couceiro23} which are a variant of the \textit{\textbf{analogical jump}} in \citeA{Davies87}.
\end{remark}

\section{\texorpdfstring{The $(k,\ell)$-fragments}{Fragments}}\label{§:kl}

Since computing the set of \textit{all} justifications is rather difficult in general, it is reasonable to study fragments of the framework. For this, we introduce in this section the $(k,\ell)$-fragments: 

\begin{definition} Let $X_k:=\{x_1,\ldots,x_k\}$, for some $k\in\mathbb N\cup\{\infty\}$, so that $X_\infty=X$. Let $\ell\in \mathbb N\cup\{\infty\}$. Define
\begin{align*} 
    \uparrow^{(k,\ell)}_\mathfrak A a:=(\uparrow_\mathfrak A a)\cap\{s(x_1,\ldots,x_k)\in T_{L,X_k}\mid\text{each of the $k$ variables in $X_k$ occurs at most $\ell$ times in $s$}\}.
\end{align*} We write $k$ instead of $(k,\infty)$ so that
\begin{align*} 
    \uparrow^k_ \mathfrak A a=(\uparrow_ \mathfrak A a)\cap T_{L,X_k}.
\end{align*} We extend the above notions from elements to arrows by
\begin{align*} 
    \uparrow^{(k,\ell)}_\mathfrak A(a\to b):= \left\{s\to t\in\ \uparrow_ \mathfrak A(a\to b) \;\middle|\; s\in\ \uparrow^{(k,\ell)}_ \mathfrak A a\text{ and }t\in\ \uparrow^{(k,\ell)}_ \mathfrak A b \right\},
\end{align*} extended to arrow proportions by
\begin{align*} 
    \uparrow^{(k,\ell)}_\mathfrak{(A,B)}(a\to b \righttherefore c\to d):=\ \uparrow^{(k,\ell)}_\mathfrak A(a\to b)\ \cap \uparrow^{(k,\ell)}_\mathfrak B(c\to d).
\end{align*} The \textit{\textbf{analogical proportion $(k,\ell)$-relation}} $::_{(k,\ell)}$ is defined in the same way as $::$ with $\uparrow$ replaced by $\uparrow^{(k,\ell)}$. In case the underlying algebras are clear from the context, we will often write $a:b::_{(k,\ell)} c:d$ to denote the analogical proportion relation in the $(k,\ell)$-fragment.  The \textit{\textbf{monolinear fragment}} $(1,1)$ consists only of justifications with at most one occurrence of a single variable on each side and is denoted by $m$. 
\end{definition}

\begin{remark} The unrestricted framework occurs as the special ``fragment'' $(\infty,\infty)$ where every justification may have arbitrary many variables occurring arbitrary often.
\end{remark}

\subsection{Monolinear additive arithmetical proportions}\label{§:MAAP}

In this subsection, we work in $(\mathbb{Z,+,Z})$ (see the forthcoming \prettyref{r:(Z,+)}). We begin by noting that the set of monolinear justifications of $a\to b$ is given by
\begin{align*} 
    \uparrow^m(a\to b)=&\left\{k+x\to \ell+x \;\middle|\; a\to b=k+o\to \ell+o,\text{ for some $o\in\mathbb Z$}\right\}\\
        &\cup\{k+x\to b \mid a\to b=k+o\to b,\text{ for some $o\in\mathbb Z$}\}\cup\{a\to b\}.
\end{align*}

\begin{remark}\label{r:(Z,+)} In $(\mathbb Z,+)$ containing no constants, the only monolinear rewrite rule is $x\to x$ which justifies only inner reflexive proportions of the form $a:a::_m c:c$. This explains why we instead consider the algebra $(\mathbb{Z,+,Z})$ in which \textit{every} integer is a distinguished element.
\end{remark}

Interestingly, it turns out that monolinear additive number proportions are characterized by difference proportions:

\begin{theorem}[Difference Proportion Theorem]\label{t:DPT}
\begin{align*} 
    a:b::_{(\mathbb Z,+, \mathbb Z),m} c:d 
        \quad&\Leftrightarrow\quad a=k+o,\quad b=\ell+o,\quad c=k+u,\quad d=\ell+u,\quad k,\ell,o,u\in\mathbb Z\\
        \quad&\Leftrightarrow\quad a-b=c-d\quad\text{\textit{\textbf{(difference proportion)}}}.
\end{align*}
\end{theorem}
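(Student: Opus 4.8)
The plan is to prove the two biconditionals by treating the middle condition as a bridge. The equivalence of the second and third lines is pure arithmetic in $\mathbb Z$: from $a=k+o$, $b=\ell+o$, $c=k+u$, $d=\ell+u$ one gets $a-b=k-\ell=c-d$, and conversely, given $a-b=c-d$, one sets $k:=a-b$ (equivalently $=c-d$), $\ell:=0$, $o:=b$, $u:=d$, so that $a=k+o$, $b=\ell+o$, $c=k+u$, $d=\ell+u$. So the real content is the first biconditional, relating the analogical proportion relation in the monolinear fragment to the existence of such a common shift pattern.

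For the direction ``$\Leftarrow$'', suppose $a=k+o$, $b=\ell+o$, $c=k+u$, $d=\ell+u$. I would exhibit $x\to (\ell-k)+x$ as a characteristic justification. In $(\mathbb Z,+,\mathbb Z)$ every term $s(x)$ of the monolinear fragment is injective (it is $m+x$ for some constant $m$, or the constantless $x$), so $\langle s, a\rangle$ is always a singleton; hence every integer lies in $\mathbbm 1(s)$ for every monolinear $s$. This is exactly the situation of the Functional Proportion Theorem \prettyref{t:FPT} applied with $t(x)=(\ell-k)+x$: since $t^{(\mathbb Z,+,\mathbb Z)}(a)=(\ell-k)+(k+o)=\ell+o=b$ and $t^{(\mathbb Z,+,\mathbb Z)}(c)=(\ell-k)+(k+u)=\ell+u=d$, we get $a:b::_{(\mathbb Z,+,\mathbb Z),m}c:d$, with the caveat that one must check the Functional Proportion Theorem (and the Uniqueness Lemma behind it) still goes through when one restricts to the monolinear fragment — but since the witnessing justification $x\to t(x)$ is itself monolinear, and maximality/$d$-maximality only gets easier when the ambient set of justifications shrinks, this restriction causes no trouble. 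One should also handle the degenerate inner-reflexive subcase $k=\ell$ (then $a=b$, $c=d$) via $x\to x$, which is subsumed by the same argument.

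For the direction ``$\Rightarrow$'', suppose $a:b::_{(\mathbb Z,+,\mathbb Z),m}c:d$. In particular $a\to b\righttherefore_{(\mathbb Z,+,\mathbb Z),m}c\to d$, so either both arrow-justification sets are trivial or $\uparrow^m(a\to b\righttherefore c\to d)$ contains a non-trivial justification. Using the explicit description of $\uparrow^m(a\to b)$ recalled just before the theorem, the first alternative forces $a=b$ and $c=d$ (so the pattern holds with $k=\ell$). In the second alternative, pick a non-trivial monolinear $s\to t\in\uparrow^m(a\to b)\cap\uparrow^m(c\to d)$; from the form of monolinear rules in $(\mathbb Z,+,\mathbb Z)$, either $s\to t$ is $k+x\to \ell+x$ — which immediately gives $a=k+o$, $b=\ell+o$, $c=k+u$, $d=\ell+u$ for suitable $o,u$ — or it is of the degenerate form $k+x\to b$ (or $a\to b$) that fixes the target to a constant. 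The last case is where care is needed: I would argue that a justification of the form $k+x\to b$ (constant right-hand side) forces $b=d$, and then the $d$-maximality clause in the definition of $\righttherefore$ — together with the availability of the genuinely shifting justification $x\to (b-a)+x\in\uparrow^m(a\to b)$ whenever $a:b::c:d$ — rules this out in favour of the affine pattern, analogously to how Example 11 shows $a:b\not::a:c$ when a ``more specific'' proportion is available. The main obstacle, then, is exactly this bookkeeping: showing that the maximality conditions in the definition of the analogical proportion relation, restricted to the monolinear fragment, single out the affine justifications $k+x\to\ell+x$ over the constant-target ones, so that the common difference $k-\ell=a-b=c-d$ is forced. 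Everything else is routine once the monolinear justification sets are written out explicitly.
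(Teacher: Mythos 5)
Your overall skeleton matches the paper's: both directions rest on the explicit enumeration of monolinear justifications; the converse is handled by exhibiting $x\to (b-a)+x$ as a characteristic justification (your appeal to the Functional Proportion \prettyref{t:FPT} and the paper's appeal to the Uniqueness \prettyref{l:UL} amount to the same thing, and your remark that a monolinear characteristic justification survives restriction to the fragment is correct); and the forward direction is a case analysis on the form of a common non-trivial justification. The arithmetic bridge between the two right-hand conditions is also fine. One harmless imprecision: the ``both justification sets trivial'' branch does not force $a=b$ and $c=d$ — it simply never occurs, because every integer is a distinguished element, so the constant rule $a\to b$ is always a non-trivial member of $\uparrow^m(a\to b)$.

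The genuine gap is exactly where you locate it, and your proposed repair does not close it. In the case where the common justification has constant right-hand side, $k+x\to b$, you obtain $b=d$ and then want to force $a=c$ via $d$-maximality together with the availability of $x\to(b-a)+x$. But $d$-maximality only bites when $\uparrow^m(a\to b\righttherefore c\to d)\subseteq\ \uparrow^m(a\to b\righttherefore c\to d')$: if $a\neq c$ and $d':=c+b-a\neq b=d$, the set for $d$ still contains the non-trivial justification $k+x\to b$, which does \emph{not} justify $c\to d'$, so the two justification sets are incomparable and no violation of maximality arises — unlike the $\mathfrak A_3$ example you invoke, where the justification set for the losing candidate is empty and hence genuinely contained in that of the competitor. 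The paper closes this case by a different device: since $a:b::c:d$ also requires the inner-symmetric arrow proportion $b\to a\righttherefore_m\, d\to c$, i.e.\ $b\to a\righttherefore_m\, b\to c$, one runs the same case analysis on a non-trivial justification of \emph{that} proportion, and each subcase ($b\to a$ or $b\to c$ itself, $k'+x\to\ell'+x$, or $k'+x\to a$) forces $a=c$, which together with $b=d$ gives $a-b=c-d$. Some such second round of analysis using one of the other three arrow proportions in the definition of $::$ is needed; $d$-maximality of the single arrow proportion $a\to b\righttherefore c\to d$ is not enough.
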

\begin{proof} We first show
\begin{align}\label{equ:a=k+o_iff_a-b=c-d} 
    a=k+o,\quad b=\ell+o,\quad c=k+u,\quad d=\ell+u \quad\Leftrightarrow\quad a-b=c-d,
\end{align} for some integers $k,\ell,o,u\in\mathbb Z$. The direction from left to right holds trivially. For the other direction, we proceed as follows. We can always write $a=k+o$ and $b=\ell+o$, for some $k,\ell,o\in\mathbb Z$. We then have $a-b=k-\ell$. Analogously, we can always write $c=k+u$ and $d=\ell'+u$, for some $\ell',u\in\mathbb Z$. We then have $c-d=k-\ell'$. By assumption, we have $a-b=c-d$ which implies $k-\ell=k-\ell'$ and therefore $\ell=\ell'$ and finally $d=\ell+u$.

We now proceed to show the first equivalence in the statement of the theorem:

($\Rightarrow$) By assumption, we have $a\to b\righttherefore_m\, c\to d$ which holds iff either\footnote{Recall from \prettyref{d:abcd} that $\emptyset$ here really means the set of all trivial monolinear justifications $\emptyset_{(\mathbb Z,+, \mathbb Z)}$.}
\begin{align*} 
    \uparrow^m( a\to b)\ \cup \uparrow^m(c\to d)=\emptyset
\end{align*} or $\uparrow^m(a\to b\righttherefore c\to d)$ contains at least one non-trivial monolinear justification and is subset maximal with respect to $d$. In the first case, notice that neither $\uparrow^m(a\to b)$ nor $\uparrow^m(c\to d)$ can consist only of trivial justifications as we always have
\begin{align*} 
    a\to b\in\ \uparrow^m(a\to b) \quad\text{and}\quad c\to d\in\ \uparrow^m(c\to d).
\end{align*} In the second case, by assumption we must have some monolinear justification $s(x)\to t(x)$ of $a\to b\righttherefore c\to d$ in $(\mathbb{Z,+,Z})$. We distinguish the following cases:
\begin{enumerate}
    \item If $s(x)\to t(x)$ equals $a\to b$ or $c\to d$, we must have $a=c$ and $b=d$.
    \item Else if $s(x)\to t(x)$ equals $k+x\to\ell+x$, we must have $a=k+o$, $b=\ell+o$, $c=k+u$, and $d=\ell+u$, for some integers $o,u\in\mathbb Z$, which is equivalent to $a-b=c-d$ by \prettyref{equ:a=k+o_iff_a-b=c-d}.
    \item Else if $s(x)\to t(x)$ equals $k+x\to b$, we must have $b=d$. Then, by assumption, we must also have
    \begin{align*} 
        a\to b \righttherefore_m\, c\to b \quad\text{and}\quad b\to a \righttherefore_m\, b\to c.
    \end{align*} So, either we have
    \begin{align*} 
        \uparrow^m(b\to a)\ \cup \uparrow^m(b\to c)=\emptyset
    \end{align*} or $\uparrow^m(b\to a \righttherefore b\to c)$ is non-empty and subset maximal with respect to $c$. Again, the sets $\uparrow^m(b\to a)$ and $\uparrow^m(b\to c)$ cannot be empty as they certainly contain $b\to a$ and $b\to c$, respectively. Hence, $\uparrow^m(b\to a \righttherefore b\to c)$ must contain at least one non-trivial monolinear justification $s'(x)\to t'(x)$. We distinguish the following cases:
    \begin{enumerate}
        \item If $s'(x)\to t'(x)$ equals $b\to a$ or $b\to c$, we must have $c=a$.
        \item Else if $s'(x)\to t'(x)$ equals $k'+x\to\ell'+x$, we must have $b=k'+o$, $a=\ell'+o$, $b=k'+u$, and $c=\ell'+u$, for some $o,u\in\mathbb Z$, which implies $b=k'+o=k'+u$ and therefore $o=u$ and hence $a=\ell'+o=\ell'+u=c$.
        \item Finally, if $s'(x)\to t'(x)$ equals $k'+x\to a$, we must have $a=c$.
    \end{enumerate}
\end{enumerate}

($\Leftarrow$) Every justification of the form $k+x\to\ell+x$ is a characteristic justification by the Uniqueness \prettyref{l:UL} since $k+x$ and $\ell+x$ are injective in $(\mathbb{Z,+,Z})$. Since $a-b=c-d$ holds by assumption, $x\to x+b-a$ is a characteristic justification of $ a:b::c:d$ in $(\mathbb{Z,+,Z})$.
\end{proof}

Interestingly, additive monolinear number proportions are equivalent to number proportions in the domain of natural numbers $(\mathbb N,S)$ with the \textit{\textbf{successor function}} $S(x):=x+1$.

\begin{corollary}\label{c:(N,S)} $a:b::_{(\mathbb N,+, \mathbb N),m} c:d \quad\Leftrightarrow\quad a:b::_{(\mathbb N,S)} c:d$.
\todo[inline]{check ob man $\mathbb Z$ in \prettyref{t:DPT} durch $\mathbb N$ ersetzen darf}
\end{corollary}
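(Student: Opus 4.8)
The plan is to show that both sides of the asserted equivalence are characterized by one and the same arithmetic condition, namely $a+d=b+c$ (equivalently, the difference proportion $a-b=c-d$ read in $\mathbb Z$), and then to read off the equivalence. So I would first prove $a:b::_{(\mathbb N,+, \mathbb N),m}c:d \Leftrightarrow a+d=b+c$ and $a:b::_{(\mathbb N,S)}c:d \Leftrightarrow a+d=b+c$ separately.

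For the first of these I would establish the $\mathbb N$-analogue of the Difference Proportion \prettyref{t:DPT}, thereby also settling the question in the todo note (whether $\mathbb Z$ may be replaced by $\mathbb N$). The proof of \prettyref{t:DPT} transfers almost verbatim: the description of $\uparrow^m(a\to b)$ at the beginning of \prettyref{§:MAAP} holds over $\mathbb N$ with $k,\ell,o$ ranging over $\mathbb N$; there are again no trivial monolinear justifications, since $a\to b$ always lies in $\uparrow^m(a\to b)$ and the rule $a\to b$ is non-trivial; and the case split over the shape of a common monolinear justification of $a\to b\righttherefore c\to d$ is identical. At the one place where the $\mathbb Z$-proof invokes \prettyref{equ:a=k+o_iff_a-b=c-d} one uses instead the elementary fact that $a+d=b+c$ holds if and only if there are $k,\ell,o,u\in\mathbb N$ with $a=k+o$, $b=\ell+o$, $c=k+u$, $d=\ell+u$, which follows by taking $o:=\max\{a-c,\ b-d,\ 0\}$, checking $o\leq\min\{a,b\}$, and setting $u:=c-a+o=d-b+o\geq 0$. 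For the converse one picks, according to the sign of $b-a$, the characteristic justification $x\to (b-a)+x$ or else $(a-b)+x\to x$ --- in either case a genuine $(\mathbb N,+, \mathbb N)$-term --- which is characteristic by the Uniqueness \prettyref{l:UL} since translations are injective in $(\mathbb N,+, \mathbb N)$; inner p-symmetry reduces everything to the case $b\geq a$.

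For the $(\mathbb N,S)$-side I would argue directly, which is in fact easier. Since $(\mathbb N,S)$ has a single unary symbol and no constants, every term over it has the form $S^n(x_i)$; consequently, because a justification $s\to t$ must satisfy $X(t)\subseteq X(s)$ and $(\mathbb N,S)$ has no ground terms, every $(\mathbb N,S)$-justification is a variant of $S^m x\to S^n x$. In particular each one is already monolinear, and none is trivial, since such a rule justifies only those arrows $p\to q$ with $p-q=m-n$. For the direction from right to left, assume $a+d=b+c$ and, via inner p-symmetry, reduce to the case $b\geq a$ (so that also $d\geq c$); then $x\to S^{b-a}x$ lies in $\uparrow_{(\mathbb N,S)}(a\to b \righttherefore c\to d)$, and the four sets $\langle x,a\rangle$, $\langle x,c\rangle$, $\langle S^{b-a}x,b\rangle$, $\langle S^{b-a}x,d\rangle$ are all singletons, so the Uniqueness \prettyref{l:UL} (in its second form) yields $a:b::_{(\mathbb N,S)}c:d$. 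For the direction from left to right, from $a:b::_{(\mathbb N,S)}c:d$ we extract $a\to b\righttherefore c\to d$ in $(\mathbb N,S)$; since $S^a x\to S^b x\in\ \uparrow_{(\mathbb N,S)}(a\to b)$ and likewise $\uparrow_{(\mathbb N,S)}(c\to d)\neq\emptyset$, and since there are no trivial justifications, $\uparrow_{(\mathbb N,S)}(a\to b \righttherefore c\to d)$ cannot be empty; any $S^m x\to S^n x$ in it supplies $o,u\in\mathbb N$ with $a=m+o$, $b=n+o$, $c=m+u$, $d=n+u$, whence $a-b=m-n=c-d$.

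Putting the two characterizations together gives $a:b::_{(\mathbb N,+, \mathbb N),m}c:d \Leftrightarrow a+d=b+c \Leftrightarrow a:b::_{(\mathbb N,S)}c:d$, which is the asserted equivalence. I expect the only real obstacle to be the $\mathbb N$-version of the Difference Proportion Theorem: one must verify that dropping negative translations, together with the constant-target rules $k+x\to b$ and the constant-to-constant rule $a\to b$, do not disturb the case analysis of \prettyref{t:DPT} --- intuitively they only ever force the degenerate situation $a=c$, $b=d$, which is the p-reflexive case and is consistent with $a+d=b+c$ --- while the rest is either a mechanical transfer or a short direct verification.
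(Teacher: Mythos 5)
Your proof is correct and follows the same route as the paper: both relations are identified with the difference proportion $a-b=c-d$ (equivalently $a+d=b+c$ over $\mathbb N$), the left-hand side via a Difference Proportion Theorem for $(\mathbb N,+,\mathbb N)$ and the right-hand side via the characterization of proportions in $(\mathbb N,S)$. The only difference is that the paper disposes of both ingredients by citation --- \prettyref{t:DPT} (stated over $\mathbb Z$) and the Difference Proportion Theorem of the companion paper on monounary algebras --- whereas you actually carry out the $\mathbb N$-adaptation of \prettyref{t:DPT}, including the necessary replacement of the characteristic justification $x\to(b-a)+x$ by $(a-b)+x\to x$ when $b<a$; this is precisely the verification the paper's own marginal note flags as outstanding, so your version is, if anything, more complete than the one in the paper.
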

\begin{proof} A direct consequence of the Difference Proportion \prettyref{t:DPT} and the Difference Proportion Theorem in \citeA{Antic22-2}.
\end{proof}

\begin{theorem} All the properties listed in \prettyref{§:P} hold in $(\mathbb{Z,+,Z})$ in the monolinear fragment except for p-commutativity.
\end{theorem}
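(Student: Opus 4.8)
The plan is to lean entirely on the Difference Proportion \prettyref{t:DPT}, which reduces the $(1,1)$-fragment relation over $(\mathbb{Z},+,\mathbb{Z})$ to the single arithmetic condition $a-b=c-d$. For every property listed in \prettyref{§:P} that involves only one background algebra --- which is all of them except p-monotonicity --- I would instantiate all occurring algebras to $(\mathbb{Z},+,\mathbb{Z})$, rewrite each side of the asserted equivalence or implication via \prettyref{t:DPT}, and then observe that the resulting statement about differences of integers is a triviality.

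Concretely: p-reflexivity becomes $a-b=a-b$; p-symmetry and inner p-symmetry become $a-b=c-d\Leftrightarrow c-d=a-b$ and $a-b=c-d\Leftrightarrow b-a=d-c$; p-determinism becomes $a-a=a-d\Leftrightarrow d=a$ and inner p-reflexivity becomes $a-a=c-c$; strong inner p-reflexivity and strong p-reflexivity become $a-a=c-d\Rightarrow c=d$ and $a-b=a-d\Rightarrow b=d$; central permutation becomes the identity $a-b=c-d\Leftrightarrow a-c=b-d$; p-transitivity (hence central p-transitivity) becomes transitivity of equality, $a-b=c-d$ and $c-d=e-f$ give $a-b=e-f$; and inner p-transitivity becomes: $a-b=c-d$ and $b-e=d-f$ add to $a-e=c-f$. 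Each is immediate. For p-commutativity, $a:b::b:a$ would by \prettyref{t:DPT} force $a-b=b-a$, i.e.\ $a=b$, so it fails --- for instance $0:1\not::_{(\mathbb{Z},+,\mathbb{Z}),m}1:0$.

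For p-monotonicity the Difference Proportion Theorem is not available on the conclusion side, since the language is enlarged, so I would transplant the mechanism from the $(\Leftarrow)$ part of its proof. Given $a:b::_{(\mathbb{Z},+,\mathbb{Z}),m}c:d$ we have $a-b=c-d$ by \prettyref{t:DPT}. Let $\mathfrak{A}',\mathfrak{B}'$ be $L'$-algebras with $L\subseteq L'$ restricting to $(\mathbb{Z},+,\mathbb{Z})$ on $L$. The term $x+(b-a)$, built from symbols of $L$ only, is interpreted in both $\mathfrak{A}'$ and $\mathfrak{B}'$ as the bijection $z\mapsto z+(b-a)$ of $\mathbb{Z}$, hence is injective there, and it sends $a\mapsto b$ and $c\mapsto c+(b-a)=c+(d-c)=d$. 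So, exactly as in the proof of \prettyref{t:DPT}, the monolinear justification $x\to x+(b-a)$ is a characteristic justification of $a:b::c:d$ in $(\mathfrak{A}',\mathfrak{B}')$ by the Uniqueness \prettyref{l:UL}, which gives $a:b::_{(\mathfrak{A}',\mathfrak{B}'),m}c:d$.

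The step I expect to be the main obstacle is p-monotonicity: it is the only property not handed to us directly by \prettyref{t:DPT}, and --- unlike the others --- it is not valid in the framework in general, so one genuinely has to exhibit a characteristic justification that is robust under language extension. The additive-shift justification $x\to x+(b-a)$ works precisely because integer translations remain bijections no matter which extra operations are present; everything else is routine bookkeeping on top of \prettyref{t:DPT}.
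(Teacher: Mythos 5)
Your proposal is correct and follows essentially the same route as the paper: both reduce every property to an identity about integer differences via the Difference Proportion Theorem (\prettyref{t:DPT}), and both refute p-commutativity by noting $a-b\neq b-a$ for $a\neq b$. The only substantive difference is that you explicitly treat p-monotonicity --- via the characteristic justification $x\to x+(b-a)$, whose injectivity survives any language extension --- whereas the paper's itemized proof silently omits that property; your addition is sound (the Uniqueness \prettyref{l:UL} argument is indeed insensitive to which extra justifications the larger language introduces) and actually closes a small gap in the paper's own case analysis.
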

\begin{proof} We have the following proofs:
\begin{itemize}
    \item The proofs for p-symmetry, inner p-symmetry, inner p-reflexivity, p-reflexivity, and p-determinism are analogous to the original proofs in the proof of Theorem 28 in \citeA{Antic22}.

    \item p-Commutativity fails since $a-b\neq b-a$ whenever $a\neq b$.

    \item Central permutation follows from the fact that $a-b=c-d$ iff $a-c=b-d$.

    \item Strong inner p-reflexivity follows from the fact that $a-a=c-d$ implies $d=c$

    \item Strong p-reflexivity follows from the fact that $a-b=a-d$ implies $d=b$.

    \item p-Determinism follows from the fact that $a-a=a-d$ iff $d=a$.

    \item p-Transitivity follows from
    \begin{align*} 
        a-b=c-d \quad\text{and}\quad c-d=e-f \quad\Rightarrow\quad a-b=e-f.
    \end{align*}

    \item Inner p-transitivity follows from
    \begin{prooftree}
        \AxiomC{$a-b=c-d$}
        \AxiomC{$b-e=d-f$}
        \BinaryInfC{$a-b+b-e=c-d+d-f$}
        \UnaryInfC{$a-e=c-f$.}
    \end{prooftree}

    \item Central p-transitivity is a direct consequence of p-transitivity. Explicitly, we have
    \begin{align*} 
        a-b=b-c \quad\text{and}\quad b-c=c-d \quad\Rightarrow\quad a-b=c-d.
    \end{align*}
\end{itemize}
\end{proof}

\begin{fact}\label{f:con_+}
    \AxiomC{$a:b::_m c:d$}
        \AxiomC{$e:f::_m g:h$}
        \RightLabel{.}
    \BinaryInfC{$a+e:b+f::_m c+g:d+h$}
    \DisplayProof
\end{fact}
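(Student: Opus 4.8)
The plan is to reduce everything to the Difference Proportion \prettyref{t:DPT}, which converts the monolinear proportion relation in $(\mathbb{Z,+,Z})$ into a plain arithmetic identity about differences. Concretely, I would first apply \prettyref{t:DPT} to the hypothesis $a:b::_m c:d$ to obtain $a-b=c-d$, and likewise apply it to $e:f::_m g:h$ to obtain $e-f=g-h$.

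Next I would simply add these two identities in $\mathbb Z$:
\begin{align*}
    (a+e)-(b+f)=(a-b)+(e-f)=(c-d)+(g-h)=(c+g)-(d+h).
\end{align*}

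Finally, I would invoke the other direction of \prettyref{t:DPT} with the four integers $a+e$, $b+f$, $c+g$, $d+h$ in place of $a,b,c,d$: since $(a+e)-(b+f)=(c+g)-(d+h)$, the theorem yields $a+e:b+f::_m c+g:d+h$ in $(\mathbb{Z,+,Z})$, with $x\to x+(b+f)-(a+e)$ serving as a characteristic justification. This completes the argument.

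There is essentially no hard step here; the only thing to be careful about is that the relation $::_m$ in the statement is implicitly the one in $(\mathbb{Z,+,Z})$ (as throughout \prettyref{§:MAAP}), so that \prettyref{t:DPT} applies, and that all of $a,\dots,h$ are integers so the additions make sense. One could alternatively give a direct proof exhibiting the justification $k+x\to\ell+x$ obtained by summing the witnessing decompositions $a=k+o$, $b=\ell+o$, $e=k'+o'$, $f=\ell'+o'$, etc., but routing through \prettyref{t:DPT} is cleaner.
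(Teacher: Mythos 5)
Your proof is correct and is essentially identical to the paper's: both apply the Difference Proportion Theorem (\prettyref{t:DPT}) to each hypothesis, add the resulting identities $a-b=c-d$ and $e-f=g-h$ to get $(a+e)-(b+f)=(c+g)-(d+h)$, and then apply \prettyref{t:DPT} in the reverse direction to conclude.
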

\begin{proof} \hfill
\begin{prooftree}
    \AxiomC{$a:b::_m c:d$}
    \RightLabel{\ref{t:DPT}}
    \UnaryInfC{$a-b=c-d$}
        \AxiomC{$e:f::_m g:h$}
        \RightLabel{\ref{t:DPT}}
        \UnaryInfC{$e-f=g-h$}
    \BinaryInfC{$(a+e)-(b+f)=a-b+e-f=c-d+g-h=(c+g)-(d+h)$}
    \UnaryInfC{$a+e:b+f::_m c+g:d+h$.}
\end{prooftree}
\end{proof}

\prettyref{f:con_+} shows that we can decompose number proportions; for example,
\begin{align*} 
  4:5::0:1 = (2:3::0:1)+(2:2::0:0).
\end{align*}

The following notion of a number proportion is an instance of the more general definition due to \citeA[Proposition 2]{Stroppa06} given for abelian semigroups, adapted to the additive setting of this section:

\begin{definition} For any integers $a,b,c,d\in\mathbb Z$, define (recall that the underlying algebra is $(\mathbb{Z,+,Z})$)
\begin{align*} 
    a:b::_{SY}c:d \quad:\Leftrightarrow\quad &a=k+o,\quad b=\ell+o,\quad c=k+u,\quad d=\ell+u,\quad\text{for some $k,\ell,o,u\in\mathbb Z$.}
\end{align*}
\end{definition}

We have the following equivalence which, surprisingly, shows that the notion of an additive number proportion in \citeA{Stroppa06} coincides with the rather restricted monolinear fragment of our framework and is therefore too weak to be used as a \textit{general} definition of an additive number proportion (and see \prettyref{e:2436}):

\begin{theorem}\label{t:SY=>m} $a:b::_{SY}c:d \quad\Leftrightarrow\quad a:b::_{( \mathbb Z,+, \mathbb Z),m} c:d$.
\end{theorem}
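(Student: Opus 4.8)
The plan is almost entirely bookkeeping: the defining condition of $a:b::_{SY}c:d$ --- the existence of integers $k,\ell,o,u$ with $a=k+o$, $b=\ell+o$, $c=k+u$, $d=\ell+u$ --- is \emph{verbatim} the first of the two equivalent characterizations of $a:b::_{(\mathbb Z,+,\mathbb Z),m}c:d$ supplied by the Difference Proportion \prettyref{t:DPT}. So the proof reduces to quoting that theorem: unfold the definition of $::_{SY}$, observe it matches the right-hand side of the first equivalence in \prettyref{t:DPT}, and conclude.

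If one prefers not to invoke \prettyref{t:DPT} in the one-line form, the alternative route is to re-derive the elementary arithmetic identity \prettyref{equ:a=k+o_iff_a-b=c-d} directly: from $a=k+o$, $b=\ell+o$, $c=k+u$, $d=\ell+u$ one computes $a-b=k-\ell=c-d$, and conversely, given $a-b=c-d$, one takes $o:=0$, $k:=a$, $\ell:=b$, and $u:=c-a$, so that $c=k+u$ and $d=c-(a-b)=c-(k-\ell)=\ell+u$. Then $a:b::_{SY}c:d \Leftrightarrow a-b=c-d \Leftrightarrow a:b::_{(\mathbb Z,+,\mathbb Z),m}c:d$, the last equivalence again being \prettyref{t:DPT}.

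There is no real obstacle here: all the genuine content --- that the $d$-maximality/triviality machinery of \prettyref{d:abcd}, specialized to the monolinear fragment over $(\mathbb{Z,+,Z})$, collapses to the difference condition --- has already been carried out in the proof of \prettyref{t:DPT}. The point of stating this theorem separately is conceptual: it pins down that \citeA{Stroppa06}'s additive number proportion is \emph{exactly} our monolinear fragment, hence strictly weaker than the full framework, which is the moral drawn in the surrounding discussion (and illustrated by \prettyref{e:2436}).
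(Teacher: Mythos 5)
Your proof is correct and is essentially the paper's own argument: the paper also derives the theorem as a direct consequence of the Difference Proportion Theorem (\prettyref{t:DPT}), whose first characterization is verbatim the defining condition of $::_{SY}$. The explicit witness check in your second paragraph is a fine (and correct) elaboration of the identity \prettyref{equ:a=k+o_iff_a-b=c-d} already established inside the proof of \prettyref{t:DPT}.
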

\begin{proof} A direct consequence of the Difference Proportion \prettyref{t:DPT}.
\end{proof}

\begin{example}\label{e:2436} The natural number proportion
\begin{align*} 
    2:4::3:6
\end{align*} is characteristically justified within our framework by $x\to x+x$, which is non-monolinear since $x$ occurs more than once on the right-hand side; on the other hand, this simple proportion is not captured within \citeS{Stroppa06} framework, that is, 
\begin{align}\label{equ:not_models_SY} 
    2:4\not::_{SY} 3:6.
\end{align}
\end{example}



\subsection{Monolinear multiplicative arithmetical proportions}\label{§:MMAPs}

In this subsection, we work in $(\mathbb Q,\cdot, \mathbb Q)$, where we study multiplicative monolinear number proportions. We begin by noting that the set of justifications of $a\to b$ is given by
\begin{align*} 
    \uparrow^m(a\to b)= \{kx\to \ell x\mid\; &a\to b=ko\to\ell o,\text{ for some }k,\ell,o\in\mathbb Q\}\\
        &\cup\{kx\to b\mid a\to b=ko\to b,\text{ for some }k,o\in\mathbb Q\}\cup\{a\to b\}.
\end{align*} This leads to the following characterization of the monolinear entailment relation with respect to multiplication:

\begin{theorem}[Geometric Proportion Theorem]\label{t:GPT} For any $a,b,c,d\in\mathbb Q$,
\begin{align*} 
    a:b::_{(\mathbb Q,\cdot, \mathbb Q),m} c:d \quad&\Leftrightarrow\quad a=ko,\quad b=\ell o,\quad c=ku,\quad d=\ell u,\quad k,\ell,o,u\in\mathbb Q\\
        &\Leftrightarrow\quad \frac a b=\frac c d\quad\text{\textit{\textbf{(geometric proportion)}}}.
\end{align*} The first equivalence holds in $(\mathbb{N,\cdot,N})$ as well.\footnote{This will be essential in \prettyref{§:Primes} when we study primes.}
\end{theorem}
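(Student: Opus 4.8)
The plan is to prove \prettyref{t:GPT} by transcribing the proof of the Difference Proportion \prettyref{t:DPT}, replacing the additive structure of $(\mathbb Z,+,\mathbb Z)$ by the multiplicative structure of $(\mathbb Q,\cdot,\mathbb Q)$ and using throughout the explicit description of $\uparrow^m(a\to b)$ displayed just before the theorem. I would begin with the purely arithmetical second equivalence, the multiplicative analogue of \prettyref{equ:a=k+o_iff_a-b=c-d}: for entries with $b,d\neq 0$, there exist $k,\ell,o,u\in\mathbb Q$ with
\[
a=ko,\quad b=\ell o,\quad c=ku,\quad d=\ell u \qquad\Leftrightarrow\qquad \frac ab=\frac cd .
\]
Left-to-right is immediate after cancelling $o$ and $u$; for the converse one writes $a=ko$, $b=\ell o$ freely (so $\frac ab=\frac k\ell$), keeps the same $k$ to write $c=ku$, then $d=\ell'u$, and deduces from $\frac ab=\frac cd$ that $\frac k\ell=\frac k{\ell'}$, hence $\ell=\ell'$ and $d=\ell u$. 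The entries equal to $0$ fall outside the ``$\frac ab=\frac cd$'' formulation and are recorded directly in terms of the existential form (e.g.\ $a:0::c:0$ holds for all $a,c$, witnessed by $\ell=0$), which is exactly why the theorem states the $\mathbb N$-addendum only for the first equivalence.

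For the implication $(\Rightarrow)$ I would unfold $a\to b\righttherefore_m c\to d$ just as in the proof of \prettyref{t:DPT}. Since $a\to b\in\ \uparrow^m(a\to b)$ and $c\to d\in\ \uparrow^m(c\to d)$, the first alternative in \prettyref{d:abcd} is impossible, so there is a non-trivial monolinear justification $s(x)\to t(x)$ of $a\to b\righttherefore c\to d$. By the displayed formula for $\uparrow^m$ it has one of three shapes: a pure constant $a\to b$ (or $c\to d$), forcing $a=c$ and $b=d$; a rule $kx\to\ell x$, yielding $a=ko$, $b=\ell o$, $c=ku$, $d=\ell u$ outright; or a rule $kx\to b$ with constant right-hand side, forcing $b=d$ and licensing the remaining half of the definition of $::$, namely $b\to a\righttherefore_m b\to c$, on which I rerun the same three-case split to obtain $a=c$ (in the sub-case $k'x\to\ell'x$ one gets $b=k'o=k'u$, hence $o=u$ and $a=\ell'o=\ell'u=c$ when $k'\neq0$; the leftover degenerate sub-case has $b=d=0$, already covered by the existential form). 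In every terminal case the conclusion matches the existential form (take $k=a$, $\ell=b$, $o=u=1$ when $a=c$, $b=d$).

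For $(\Leftarrow)$, given the existential form I would exhibit a characteristic justification: by the Uniqueness \prettyref{l:UL}, any rule $kx\to\ell x$ with $k,\ell\neq0$ is characteristic because $kx$ and $\ell x$ are injective in $(\mathbb Q,\cdot,\mathbb Q)$; concretely, if $a\neq0$ then $x\to\frac ba\,x$ (note $\frac ba=\frac dc$) lies in $\uparrow^m(a\to b\righttherefore c\to d)$ and is characteristic, so $a:b::_{(\mathbb Q,\cdot,\mathbb Q),m}c:d$, the degenerate entries being dispatched by the constant-type rules directly. For $(\mathbb N,\cdot,\mathbb N)$ the same argument applies verbatim: writing $\frac ab=\frac cd$ in lowest terms $\frac pq$ and setting $k=p$, $\ell=q$, $o=\gcd(a,b)$, $u=\gcd(c,d)$ gives natural-number witnesses, and the characteristic rule $px\to qx$ has coefficients in $\mathbb N$ and is still injective there; this is what the primes section will need.

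I expect the only genuine obstacle to be the bookkeeping around $0$: unlike $k+x$, the term $kx$ is injective only for $k\neq0$, so the clean appeal to the Uniqueness Lemma — and the very meaning of $\frac ab=\frac cd$ — breaks down precisely when one of $a,b,c,d$ vanishes, and these cases must be checked by hand. Everything else is a routine translation of the proof of \prettyref{t:DPT}.
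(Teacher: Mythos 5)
Your proposal is correct and follows essentially the same route as the paper, which likewise proves the arithmetical equivalence first and then simply declares the remainder ``similar to the proof of Theorem~\ref{t:DPT}''; the only cosmetic difference is that for the second equivalence the paper picks the explicit witnesses $a=(c/d)b$, $b=1\cdot b$, $c=(c/d)d$, $d=1\cdot d$ rather than re-running the additive argument as you do. Your explicit bookkeeping around the zero entries and around the failure of injectivity of $kx$ when $k=0$ is in fact more careful than the published proof, which glosses over these degenerate cases.
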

\begin{proof} We first show the second equivalence
\begin{align*} 
    a=ko,\quad b=\ell o,\quad c=ku,\quad d=\ell u \quad\Leftrightarrow\quad \frac a b=\frac c d,
\end{align*} for some $k,\ell,o,u\in\mathbb Q$. The direction from left to right holds trivially. For the other direction, notice that $\frac ab=\frac cd$ implies
\begin{align*} 
    a=\left(\frac cd\right)b, \quad b=1b, \quad c=\left(\frac cd\right)d, \quad d=1d.
\end{align*}

The rest of the proof is similar to the proof of \prettyref{t:DPT}.
\end{proof}

The Geometric Proportion \prettyref{t:GPT} shows that monolinear multiplicative number proportions can be geometrically interpreted as analogical proportions between rectangles. Moreover, the simple characterization of the monolinear relation in \prettyref{t:GPT} allows us to analyze the proportional properties within the monolinear setting:

\begin{theorem}\label{t:properties_cdot} All the properties listed in \prettyref{§:P} hold in $(\mathbb Q,\cdot, \mathbb Q)$ in the monolinear fragment.
\end{theorem}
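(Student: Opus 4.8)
The engine of the proof is the Geometric Proportion \prettyref{t:GPT}, which collapses the monolinear relation to a statement about ratios: $a:b::_{(\mathbb Q,\cdot,\mathbb Q),m}c:d$ holds exactly when $a=ko,\ b=\ell o,\ c=ku,\ d=\ell u$ for some $k,\ell,o,u\in\mathbb Q$, equivalently $\frac ab=\frac cd$ as soon as no entry is $0$. The plan is therefore to translate each property from \prettyref{§:P} into an elementary arithmetic identity about ratios and verify it directly, in exact parallel to the proof of the additive counterpart \prettyref{t:DPT}; I would keep the $k,\ell,o,u$-form rather than the bare $\frac ab=\frac cd$ precisely so that the degenerate cases in which some entry equals $0$ are absorbed into the same bookkeeping.

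With that reduction in hand the order-algebraic properties are routine. p-Reflexivity is $\frac ab=\frac ab$; p-symmetry and inner p-symmetry are the two symmetries $\frac ab=\frac cd\Leftrightarrow\frac cd=\frac ab$ and $\frac ab=\frac cd\Leftrightarrow\frac ba=\frac dc$ (both underlying pairs of algebras being $(\mathbb Q,\cdot,\mathbb Q)$); inner p-reflexivity is $\frac aa=\frac cc$, i.e.\ $1=1$; central permutation is $\frac ab=\frac cd\Leftrightarrow\frac ac=\frac bd$; p-determinism, strong p-reflexivity and strong inner p-reflexivity come from $\frac aa=\frac ad\Leftrightarrow d=a$, from $\frac ab=\frac ad\Rightarrow d=b$, and from $\frac aa=\frac cd\Rightarrow d=c$; p-transitivity and central p-transitivity are transitivity of equality of ratios; and inner p-transitivity follows by multiplying $\frac ab=\frac cd$ with $\frac be=\frac df$ to obtain $\frac ae=\frac cf$. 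For p-monotonicity I would not go through \prettyref{t:GPT} but observe instead that whenever $a,b,c,d$ are all nonzero the proportion is characteristically justified by the injective linear rule $x\to\frac ba\,x$ (Functional Proportion \prettyref{t:FPT}, Uniqueness \prettyref{l:UL}); this rule stays monolinear and stays a characteristic justification in any language expansion $L\subseteq L'$, so the proportion persists, and the handful of degenerate shapes with a $0$ present are checked against the characterization by hand.

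The one step I expect to be the genuine obstacle — and the one I would scrutinize before trusting the statement verbatim — is p-commutativity $a:b::_m b:a$. Under \prettyref{t:GPT} this demands $\frac ab=\frac ba$, i.e.\ $a^2=b^2$, which fails whenever $a\neq\pm b$, precisely as $a-b\neq b-a$ excluded p-commutativity from the additive theorem. So I would expect p-commutativity to be the lone exception here as well, with the statement intended to read ``except for p-commutativity''; if it is meant to hold as written, that would be the surprising point, hinging on some extra rigidity of $(\mathbb Q,\cdot,\mathbb Q)$ that the ratio heuristic conceals and for which a dedicated argument would be required. Everything else is the same routine ratio arithmetic as in the additive case.
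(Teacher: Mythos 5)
Your reduction via the Geometric Proportion \prettyref{t:GPT} is exactly the route the paper takes: each property is translated into an identity about ratios (equivalently about the $k,\ell,o,u$-decomposition) and checked by elementary arithmetic, with p-symmetry, inner p-symmetry, (inner) p-reflexivity and p-determinism deferred to the generic proofs from the unrestricted framework, central permutation and the strong reflexivity properties read off the decomposition, and inner p-transitivity obtained by combining $\frac ab=\frac cd$ with $\frac be=\frac df$ to get $\frac ae=\frac cf$ --- all as you do. Your explicit treatment of p-monotonicity is a small bonus; the paper's proof does not discuss it.

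The one divergence is p-commutativity, and your scrutiny there is warranted. The paper asserts that p-commutativity \emph{does} hold, offering as justification the claim that $ko:\ell o::_m\ell o:ko$ for all $k,\ell,o\in\mathbb Q$. That claim is inconsistent with \prettyref{t:GPT} itself: the theorem gives $ko:\ell o::_m\ell o:ko$ iff $\frac{ko}{\ell o}=\frac{\ell o}{ko}$, i.e.\ $k^2=\ell^2$, so for instance $1:2\not::_m 2:1$. One can also see this directly: $1\to 2$ and $2\to 1$ share no non-trivial monolinear justification in $(\mathbb Q,\cdot,\mathbb Q)$ (the variable rules justifying $1\to 2$ have the form $kx\to 2kx$, those justifying $2\to 1$ the form $kx\to\tfrac k2x$, and the constant-target and ground rules disagree), while each arrow separately has non-trivial justifications, so $1\to 2\righttherefore_m 2\to 1$ fails. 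Hence your expected reading --- that the statement should exclude p-commutativity, in exact parallel with the additive Difference Proportion case --- is the mathematically correct one; there is no hidden rigidity of $(\mathbb Q,\cdot,\mathbb Q)$ that rescues the property, and the theorem as stated (and its proof of this item) is in error. Apart from this point your argument and the paper's coincide.
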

\begin{proof} We have the following proofs:
\begin{itemize}
    \item The proofs for p-symmetry, inner p-symmetry, inner p-reflexivity, p-reflexivity, and p-determinism are analogous to the original proofs in the proof of Theorem 28 in \citeA{Antic22}.

    \item p-Commutativity follows from \prettyref{t:GPT} together with
    \begin{align*} 
        ko:\ell o::_m \ell o:ko,\quad\text{for all $k,\ell,o\in\mathbb Q$.}
    \end{align*}

    \item Central permutation follows from \prettyref{t:GPT} together with
    \begin{align*} 
        ko:\ell o::_m ku:\ell u\quad\Leftrightarrow\quad ok:uk::_m o\ell:u\ell.
    \end{align*}

    \item Strong inner p-reflexivity follows from \prettyref{t:GPT} together with 
    \begin{align*} 
        ko:ko::_m ku:d \quad\Leftrightarrow\quad d=ku.
    \end{align*}

    \item Strong p-reflexivity follows from \prettyref{t:GPT} together with
    \begin{align*} 
            ko:\ell o::_m ko:d \quad\Leftrightarrow\quad d=\ell o.
    \end{align*}

    \item p-Determinism follows from $\frac a a=\frac a d$ iff $d=a$.

    \item p-Transitivity follows from \prettyref{t:GPT} together with
        \begin{align*} 
            \frac a b=\frac c d \quad\text{and}\quad \frac c d=\frac e f \quad\Rightarrow\quad \frac a b=\frac e f.
        \end{align*}

    \item Inner p-transitivity follows from the following derivation:
    \begin{prooftree}
        \AxiomC{$a:b::_m c:d$}
        \RightLabel{\prettyref{t:GPT}}
        \UnaryInfC{$\frac ab=\frac cd$}
        \UnaryInfC{$a=\frac{bc}d$}
        \AxiomC{$b:e::_m d:f$}
        \RightLabel{\prettyref{t:GPT}}
        \UnaryInfC{$\frac be=\frac df$}
        \UnaryInfC{$e=\frac{bf}d$} 
        \BinaryInfC{$\frac ae=\frac{\frac{bc}d}{\frac{bf}d}=\frac{bcd}{bfd}=\frac cf$}
        \UnaryInfC{$a:e::_m c:f$.}
    \end{prooftree}

    \item Central p-transitivity is an immediate consequence of transitivity.
\end{itemize}
\end{proof}


\begin{theorem}\label{t:con_.}
    \AxiomC{$a:b::_m c:d$}
        \AxiomC{$a':b'::_m c':d'$}
        \RightLabel{.}
    \BinaryInfC{$aa':bb'::_m cc':dd'$}
    \DisplayProof
\end{theorem}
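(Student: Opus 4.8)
The plan is to mirror the proof of \prettyref{f:con_+}, replacing the Difference Proportion Theorem by the Geometric Proportion \prettyref{t:GPT}. Since the underlying algebra in this subsection is $(\mathbb Q,\cdot,\mathbb Q)$, the relation $::_m$ abbreviates $::_{(\mathbb Q,\cdot,\mathbb Q),m}$, and by the first equivalence of \prettyref{t:GPT} each of the two hypotheses is equivalent to a multiplicative parametrization of its four entries. The whole proof is then a one-line algebraic manipulation of these parametrizations.

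First I would apply the first equivalence of \prettyref{t:GPT} to $a:b::_m c:d$, obtaining $k,\ell,o,u\in\mathbb Q$ with
\begin{align*}
a=ko,\quad b=\ell o,\quad c=ku,\quad d=\ell u,
\end{align*}
and likewise apply it to $a':b'::_m c':d'$, obtaining $k',\ell',o',u'\in\mathbb Q$ with
\begin{align*}
a'=k'o',\quad b'=\ell'o',\quad c'=k'u',\quad d'=\ell'u'.
\end{align*}
Using commutativity and associativity of multiplication in $\mathbb Q$, I would multiply the two parametrizations entrywise and set $K:=kk'$, $L:=\ell\ell'$, $O:=oo'$, $U:=uu'$, so that
\begin{align*}
aa'=KO,\quad bb'=LO,\quad cc'=KU,\quad dd'=LU.
\end{align*}
This is exactly a multiplicative parametrization of the tuple $(aa',bb',cc',dd')$ with parameters in $\mathbb Q$, so a final appeal to the first equivalence of \prettyref{t:GPT} read from right to left yields $aa':bb'::_m cc':dd'$. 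Compactly, the argument is the derivation
\begin{prooftree}
\AxiomC{$a:b::_m c:d$}
\RightLabel{\prettyref{t:GPT}}
\UnaryInfC{$a=ko,\ b=\ell o,\ c=ku,\ d=\ell u$}
\AxiomC{$a':b'::_m c':d'$}
\RightLabel{\prettyref{t:GPT}}
\UnaryInfC{$a'=k'o',\ b'=\ell'o',\ c'=k'u',\ d'=\ell'u'$}
\BinaryInfC{$aa'=(kk')(oo'),\ bb'=(\ell\ell')(oo'),\ cc'=(kk')(uu'),\ dd'=(\ell\ell')(uu')$}
\RightLabel{\prettyref{t:GPT}}
\UnaryInfC{$aa':bb'::_m cc':dd'$.}
\end{prooftree}

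I do not anticipate any real obstacle. The one point worth flagging is that I deliberately work with the parametrized form of \prettyref{t:GPT} rather than the ratio form $\frac ab=\frac cd$: the parametrized form is valid for all rationals (including zero entries such as $b=0$ or $d=0$), whereas the ratio form would force an awkward case distinction. If one preferred the ratio form, the same computation reads $\frac{aa'}{bb'}=\frac ab\cdot\frac{a'}{b'}=\frac cd\cdot\frac{c'}{d'}=\frac{cc'}{dd'}$ under the assumption that all denominators are nonzero, which is why routing through the parametrization is cleaner. The analogous statement in $(\mathbb N,\cdot,\mathbb N)$ follows by the identical argument, since \prettyref{t:GPT} records that its first equivalence holds there as well.
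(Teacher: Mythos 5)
Your proof is correct and is essentially identical to the paper's: both apply the parametrized (first) equivalence of the Geometric Proportion Theorem to each hypothesis, multiply the parametrizations entrywise, regroup by commutativity and associativity, and conclude by reading the equivalence back. Your remark about preferring the parametrized form over the ratio form to avoid zero denominators is a sensible added justification, but the argument itself matches the paper's.
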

\begin{proof} \hfill
\begin{prooftree}
    \AxiomC{$a:b::_m c:d$}
    \RightLabel{\ref{t:GPT}}
    \UnaryInfC{$ko:\ell o::_m ku:\ell u$}
        \AxiomC{$a':b'::_m c':d'$}
        \RightLabel{\ref{t:GPT}}
        \UnaryInfC{$k'o':\ell' o'::_m k'u':\ell' u'$}
    \BinaryInfC{$(ko)(k'o'):(\ell o)(\ell' o')::_m (ku)(k'u'):(\ell u)(\ell' u')$}
    \UnaryInfC{$(kk')(oo'):(\ell\ell')(oo')::_m (kk')(uu'):(\ell\ell')(uu')$}
    \UnaryInfC{$aa':bb'::_m cc':dd'$.}
\end{prooftree} 
\end{proof}

\subsubsection{Primes}\label{§:Primes}

We shall now prove some properties of the monolinear entailment relation with respect to primes. In this subsection, the underlying algebra is $(\mathbb{N,\cdot,N})$ where $\mathbb N$ denotes the natural numbers.

\begin{proposition} Let $p,q,p',q'$ be primes. We have
\begin{align*} 
    p:q::_m p':q' \quad\Leftrightarrow\quad (p=q \quad\text{and}\quad p'=q') \quad\text{or}\quad (p=p' \quad\text{and}\quad q=q').
\end{align*}
\end{proposition}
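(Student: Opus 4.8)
The plan is to derive everything from the Geometric Proportion \prettyref{t:GPT}, whose first equivalence — as the theorem itself records — remains valid in $(\mathbb{N},\cdot,\mathbb N)$. That equivalence says that $p:q::_m p':q'$ holds iff there exist $k,\ell,o,u\in\mathbb N$ with
\begin{align*}
    p=ko,\quad q=\ell o,\quad p'=ku,\quad q'=\ell u.
\end{align*}
So the whole argument reduces to a number-theoretic analysis of this system under the assumption that $p,q,p',q'$ are prime.

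For the ``$\Leftarrow$'' direction I would simply exhibit witnesses. If $p=p'$ and $q=q'$, the choice $k=p$, $\ell=q$, $o=u=1$ works; if $p=q$ and $p'=q'$, the choice $k=\ell=1$, $o=p$, $u=p'$ works. In each case the four equations above are satisfied, so \prettyref{t:GPT} yields $p:q::_m p':q'$.

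For the ``$\Rightarrow$'' direction I would fix witnesses $k,\ell,o,u$ as above and first note that, since $p=ko$, $q=\ell o$, $p'=ku$, $q'=\ell u$ are all primes and hence all nonzero, each of $k,\ell,o,u$ is a positive integer. Cancelling common factors then gives
\begin{align*}
    \frac pq=\frac{ko}{\ell o}=\frac k\ell=\frac{ku}{\ell u}=\frac{p'}{q'},
\end{align*}
i.e.\ $pq'=qp'$. Unique factorisation into primes now forces the multiset $\{p,q'\}$ to coincide with $\{q,p'\}$, and comparing entries gives either ($p=q$ and $q'=p'$) or ($p=p'$ and $q'=q$) — precisely the two disjuncts claimed.

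I do not expect a genuine obstacle here; the only things to keep in mind are to invoke the $(\mathbb{N},\cdot,\mathbb N)$-version of \prettyref{t:GPT} (rather than the $\mathbb Q$-version) and to justify the cancellation step, which is legitimate exactly because primality makes all of $k,\ell,o,u$ nonzero. If one prefers to avoid unique factorisation, the same conclusion follows from a direct case split on $p=ko$: either $k=1$ and $o=p$, in which case $q=\ell p$ prime forces $\ell=1$, $q=p$ and then $p'=u=q'$; or $k=p$ and $o=1$, in which case $q=\ell$ and $p'=pu$ prime forces $u=1$, $p'=p$ and then $q'=\ell=q$.
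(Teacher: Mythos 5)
Your proposal is correct, and its core — reducing everything to the first equivalence of the Geometric Proportion \prettyref{t:GPT} in $(\mathbb N,\cdot,\mathbb N)$ and then doing elementary number theory on the system $p=ko$, $q=\ell o$, $p'=ku$, $q'=\ell u$ — is exactly the paper's strategy. The difference is in how the forward direction is finished: the paper argues directly by splitting on the two factorizations of the prime $p=ko$ (either $k=1,o=p$ or $k=p,o=1$), which is precisely the ``alternative'' case split you sketch in your last sentence, whereas your primary route cancels to get $pq'=qp'$ and invokes unique factorization to match the multisets $\{p,q'\}$ and $\{q,p'\}$. Both are sound; the cross-multiplication argument is arguably slicker, while the case split stays entirely inside the divisor structure of a single prime and avoids appealing to unique factorization. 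Your write-up is in fact more complete than the paper's: you explicitly verify the ``$\Leftarrow$'' direction by exhibiting witnesses (the paper leaves this implicit, though it is immediate from the ``iff'' in \prettyref{t:GPT}), and you justify why $k,\ell,o,u$ are nonzero before cancelling — a point worth making if $0\in\mathbb N$. No gaps.
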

\begin{proof} By \prettyref{t:GPT}, we have
\begin{align*} 
    p:q::_m p' :q' \quad\Leftrightarrow\quad p=ko,\quad q=\ell o,\quad c=ko',\quad d=\ell o',\quad\text{for some $k,\ell,o,o'\in\mathbb N$}.
\end{align*} We distinguish two cases. First, if $k=1$ and $o=p$, then $q=\ell p$ which implies $\ell=1$ and therefore $q'=o'$ and $p'=o'$. Second, if $o=p$ and $o=1$, then $q=\ell$ and $p'=po'$ which implies $o'=1$ and therefore $q'=\ell=q$.
\end{proof}

\begin{proposition} Let $p,q$ be primes, and let $c,d\in\mathbb N$. We have
\begin{align*} 
    p:q::_m c:d \quad\Leftrightarrow\quad (p=q& \quad\text{and}\quad c=d) \quad\text{or}\\ 
        &(p\neq q \quad\text{and}\quad c=pu \quad\text{and}\quad d=qu,\text{ for some }u\in\mathbf).
\end{align*}
\end{proposition}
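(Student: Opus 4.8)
The plan is to reduce everything to the first equivalence of the Geometric Proportion \prettyref{t:GPT}, which is valid in $(\mathbb{N,\cdot,N})$: we have $p:q::_m c:d$ if and only if there are $k,\ell,o,u\in\mathbb N$ with $p=ko$, $q=\ell o$, $c=ku$, and $d=\ell u$. Since $p$ and $q$ are primes they are at least $2$, so none of the factors involved can be $0$, and the factorization $p=ko$ forces $(k,o)\in\{(1,p),(p,1)\}$; likewise $o$ divides $q$, so $o\in\{1,q\}$. Intersecting, $o\in\{1,p\}\cap\{1,q\}$, and the whole argument splits into the case $o=1$ and the case $o=p=q$.

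For the direction ``$\Rightarrow$'', fix such a representation $k,\ell,o,u$. If $o=1$, then $k=p$ and $\ell=q$, hence $c=pu$ and $d=qu$; this is the second disjunct of the right-hand side when $p\neq q$, and it forces $c=d$ (the first disjunct) when $p=q$. If instead $o=p$, then $k=1$ and $q=\ell p$, and since $q$ is prime while $p\neq 1$ we must have $\ell=1$ and $p=q$, whence $c=u=d$; again the first disjunct holds. Either way the right-hand side is satisfied.

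For the direction ``$\Leftarrow$'', I would just exhibit a witnessing tuple and apply \prettyref{t:GPT} in the other direction. If $p=q$ and $c=d$, take $k=\ell=1$, $o=p$, $u=c$ (equivalently, invoke inner p-reflexivity, already established in \prettyref{t:properties_cdot}). If $p\neq q$ and $c=pu$, $d=qu$, take $k=p$, $\ell=q$, $o=1$ and the given $u$; then $p=ko$, $q=\ell o$, $c=ku$, $d=\ell u$, so $p:q::_m c:d$ by \prettyref{t:GPT}.

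The only step requiring a little care — rather than mechanical substitution into the criterion of \prettyref{t:GPT} — is the primality bookkeeping in the forward direction: one needs both that $p$ and $q$ are prime, so that each of $p=ko$ and $q=\ell o$ admits only the two trivial factorizations, and that a prime differs from $1$, which is what rules out the spurious branch $\ell=q$, $p=1$ in the case $o=p$. No genuine obstacle arises beyond that.
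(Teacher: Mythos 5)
Your proof is correct and follows essentially the same route as the paper: reduce to the first equivalence of the Geometric Proportion Theorem in $(\mathbb{N},\cdot,\mathbb{N})$ and then exploit primality to restrict the factorizations $p=ko$, $q=\ell o$ to the two cases $o=1$ and $o=p=q$. You are in fact somewhat more complete than the paper's own two-line case analysis, since you spell out the converse direction with explicit witnessing tuples and note that the $o=1$ case collapses to the first disjunct when $p=q$.
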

\begin{proof} By \prettyref{t:GPT}, we have
\begin{align*} 
    p :q::_m c:d \quad\Leftrightarrow\quad p=ko,\quad q=\ell o,\quad c=ku,\quad d=\ell u,\quad\text{for some $k,\ell,o,u\in\mathbb N$}.
\end{align*} We distinguish two cases. First, if $k=1$ and $o=p$, then $q=\ell p$ and thus $\ell=1$ and $q=p$ and $c=d=u$. Second, if $k=p$ and $o=1$, then $q=\ell$, $c=pu$, and $d=qu$, for some $u\in\mathbb N$.
\end{proof}

\subsection{Monolinear word proportions}\label{§:MWP}

In this subsection, we work in the word domain $( A^\ast,\cdot, A^\ast)$, where the set of monolinear justifications of $\mathbf{a\to b}$ is given by
\begin{align*} 
    \uparrow^m(\mathbf{a\to b})=\left\{\mathbf a_1x\mathbf a_3\to\mathbf b_1x\mathbf b_3 \;\middle|\;\mathbf a=\mathbf a_1\mathbf a_2\mathbf a_3\to\mathbf b_1\mathbf b_2\mathbf b_3,\;\mathbf a_1,\mathbf a_2,\mathbf a_3,\mathbf b_1,\mathbf b_3\in A^\ast\right\}\\
    \cup\{\mathbf a_1x\mathbf a_3\to\mathbf b\mid\mathbf a=\mathbf a_1\mathbf a_2\mathbf a_3\to\mathbf b;\mathbf a_1,\mathbf a_2,\mathbf a_3\in A^\ast\}\cup\{\mathbf{a\to b}\}.
\end{align*} This implies
\begin{align*} 
    \uparrow^m(\mathbf{a\to b \righttherefore c\to d})=&\left\{\mathbf a_1x\mathbf a_3\to\mathbf b_1x\mathbf b_3 \;\middle|\; 
        \begin{array}{l}
            \mathbf{a\to b}=\mathbf a_1\mathbf a_2\mathbf a_3\to\mathbf b_1\mathbf a_2\mathbf b_3\\
            \mathbf{c\to d}=\mathbf a_1\mathbf c_2\mathbf a_3\to\mathbf b_1\mathbf c_2\mathbf b_3\\
            \mathbf a_1,\mathbf a_2,\mathbf a_3,\mathbf b_1,\mathbf b_3,\mathbf c_2\in A^\ast
        \end{array}
        \right\}\\
            &\qquad\cup\left\{\mathbf a_1x\mathbf a_3\to\mathbf b \;\middle|\;
            \begin{array}{l}
                \mathbf{b=d}\\
                \mathbf{a\to b}=\mathbf a_1\mathbf a_2\mathbf a_3\to\mathbf b\\
                \mathbf{c\to d}=\mathbf a_1\mathbf c_2\mathbf a_3\to\mathbf b\\
                \mathbf a_1,\mathbf a_2,\mathbf a_3,\mathbf c_2\in A^\ast
            \end{array}
            \right\}\cup\{\mathbf{a\to b}\mid\mathbf{a=c,b=d}\}.
\end{align*}


This leads to the following characterization of the monolinear entailment relation:

\begin{theorem}\label{t:m_words} $\mathbf a: \mathbf b::_m \mathbf c: \mathbf d \quad\Leftrightarrow\quad \mathbf a=\mathbf a_1\mathbf a_2\mathbf a_3,\quad\mathbf b=\mathbf b_1\mathbf a_2\mathbf b_3,\quad\mathbf c=\mathbf a_1\mathbf b_2\mathbf a_3,\quad\mathbf d=\mathbf b_1\mathbf b_2\mathbf b_3$.
\end{theorem}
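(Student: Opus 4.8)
The plan rests on the explicit description of $\uparrow^m(\mathbf{a\to b\righttherefore c\to d})$ displayed just above, together with one structural fact about the ambient algebra $(A^\ast,\cdot,A^\ast)$: the free monoid is left- and right-cancellative, so every term $\mathbf w_1 x\mathbf w_3$ is injective there and hence each word has at most one preimage under it. Note moreover that the constant justification $\mathbf a\to\mathbf b\in\ \uparrow^m(\mathbf{a\to b})$ is non-trivial --- it justifies the single arrow $\mathbf a\to\mathbf b$ and nothing else --- and likewise $\mathbf c\to\mathbf d\in\ \uparrow^m(\mathbf{c\to d})$; hence the first alternative in the definition of $\righttherefore_m$ in \prettyref{d:abcd} never applies, so whenever $\mathbf{a\to b\righttherefore_m c\to d}$ holds the set $\uparrow^m(\mathbf{a\to b\righttherefore c\to d})$ must contain a non-trivial justification.

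For ($\Leftarrow$) I would put $s\to t:=\mathbf a_1 x\mathbf a_3\to\mathbf b_1 x\mathbf b_3$. By the first line of the displayed formula (taking the filler coming from $\mathbf c$ to be $\mathbf b_2$), $s\to t\in\ \uparrow^m(\mathbf{a\to b\righttherefore c\to d})$, witnessed by $x=\mathbf a_2$ and $x=\mathbf b_2$; similarly $s\to t\in\ \uparrow^m(\mathbf c\to\mathbf d\righttherefore\mathbf a\to\mathbf b)$, and its reverse $\mathbf b_1 x\mathbf b_3\to\mathbf a_1 x\mathbf a_3$ lies in $\uparrow^m(\mathbf b\to\mathbf a\righttherefore\mathbf d\to\mathbf c)$ and in $\uparrow^m(\mathbf d\to\mathbf c\righttherefore\mathbf b\to\mathbf a)$. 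Since $\mathbf a_1 x\mathbf a_3$ and $\mathbf b_1 x\mathbf b_3$ are injective, each of $\mathbf a,\mathbf b,\mathbf c,\mathbf d$ has a unique preimage under the term on its side, so these justifications are characteristic in the monolinear fragment (the required $\mathbf d$-maximality being automatic), and the Uniqueness \prettyref{l:UL} --- whose argument relativizes to the monolinear fragment without change --- yields all four arrow proportions, i.e. $\mathbf a:\mathbf b::_m\mathbf c:\mathbf d$.

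For ($\Rightarrow$), from $\mathbf{a\to b\righttherefore_m c\to d}$ I would pick a non-trivial $s\to t\in\ \uparrow^m(\mathbf{a\to b\righttherefore c\to d})$ and split according to the three shapes in the displayed formula. If $s\to t=\mathbf a_1 x\mathbf a_3\to\mathbf b_1 x\mathbf b_3$, the accompanying side conditions are exactly the asserted decomposition, and there is nothing more to do. If $s\to t=\mathbf a\to\mathbf b$ is the constant justification, then $\mathbf a=\mathbf c$ and $\mathbf b=\mathbf d$, so the decomposition holds with $\mathbf a_1:=\mathbf a$, $\mathbf b_1:=\mathbf b$ and $\mathbf a_2=\mathbf a_3=\mathbf b_2=\mathbf b_3:=\varepsilon$.

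The remaining shape $s\to t=\mathbf a_1 x\mathbf a_3\to\mathbf b$ is the \emph{main obstacle}: on its own it yields only $\mathbf b=\mathbf d$ (together with a shared context for $\mathbf a$ and $\mathbf c$), not the full decomposition. I would dispose of it exactly as case~3 of the proof of the Difference Proportion \prettyref{t:DPT}: also invoke the second arrow proportion $\mathbf b\to\mathbf a\righttherefore_m\mathbf d\to\mathbf c$ contained in $\mathbf a:\mathbf b::_m\mathbf c:\mathbf d$, which since $\mathbf b=\mathbf d$ reads $\mathbf b\to\mathbf a\righttherefore_m\mathbf b\to\mathbf c$; pick a non-trivial $s'\to t'\in\ \uparrow^m(\mathbf b\to\mathbf a\righttherefore\mathbf b\to\mathbf c)$ and run the three-shape split again. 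Now each shape forces $\mathbf a=\mathbf c$: in the shape $\mathbf p_1 x\mathbf p_3\to\mathbf q_1 x\mathbf q_3$ the two evaluations of the left term both equal $\mathbf b$, so cancellation makes the two fillers coincide and hence the two evaluations $\mathbf a$ and $\mathbf c$ of the right term coincide; in the other two shapes $\mathbf a=\mathbf c$ is itself a side condition of the formula. Thus $\mathbf a=\mathbf c$ and $\mathbf b=\mathbf d$, and we conclude as in the constant case. The only ingredient genuinely new relative to the pattern of \prettyref{t:DPT} is the cancellativity of concatenation, and this last shape is the one I expect to require the most care.
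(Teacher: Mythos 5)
Your proposal is correct and follows essentially the same route as the paper's proof: the same three-shape case analysis on a non-trivial monolinear justification, the same reduction of the troublesome shape $\mathbf a_1x\mathbf a_3\to\mathbf b$ via the arrow proportion $\mathbf b\to\mathbf a\righttherefore_m\mathbf b\to\mathbf c$ followed by a second three-shape split using cancellativity, and the same use of injectivity of $\mathbf a_1x\mathbf a_3$ and $\mathbf b_1x\mathbf b_3$ with the Uniqueness Lemma for the converse. Your explicit remarks that the ground justifications rule out the ``all-trivial'' alternative and that the degenerate cases fit the decomposition with empty fillers are points the paper leaves implicit, but they do not change the argument.
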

\begin{proof} $(\Rightarrow)$ By assumption, we have $\mathbf a\to \mathbf b\righttherefore_m\, \mathbf c\to \mathbf d$ which holds iff either\footnote{Recall from \prettyref{d:abcd} that $\emptyset$ here really means the set of all trivial monolinear justifications $\emptyset_{(A^+,\cdot,A^+)}$.}
\begin{align*} 
    \uparrow^m(\mathbf{a\to b})\ \cup \uparrow^m(\mathbf{c\to d})=\emptyset,
\end{align*} or $\uparrow^m(\mathbf{a\to b\righttherefore c\to d})$ is non-empty and subset maximal with respect to $d$. In the first case, notice that neither $\uparrow^m(\mathbf{a\to b})$ nor $\uparrow^m(\mathbf{c\to d})$ can be empty since we always have
\begin{align*} 
    \mathbf{a\to b}\in\ \uparrow^m(\mathbf{a\to b}) \quad\text{and}\quad \mathbf{c\to d}\in\ \uparrow^m(\mathbf{c\to d}).
\end{align*} In the second case, by assumption we must have some monolinear justification $s(x)\to t(x)$ of $\mathbf{ a\to b\righttherefore c\to d}$ in $( A^\ast,\cdot, A^\ast)$. We distinguish the following cases:
\begin{enumerate}
    \item If $s(x)\to t(x)$ equals $\mathbf{a\to b}$ or $\mathbf{c\to d}$, we must have
    \begin{align*} 
        \mathbf{a=c} \quad\text{and}\quad \mathbf{b=d}.
    \end{align*}

    \item Else if $s(x)\to t(x)$ equals $\mathbf a_1x\mathbf a_3\to\mathbf b_1x\mathbf b_3$, we must have
    \begin{align*} 
        \mathbf a=\mathbf a_1\mathbf a_2\mathbf a_3 \quad\text{and}\quad \mathbf b=\mathbf b_1\mathbf a_2\mathbf b_3 \quad\text{and}\quad \mathbf c=\mathbf a_1\mathbf b_2\mathbf a_3 \quad\text{and}\quad \mathbf d=\mathbf b_1\mathbf b_2\mathbf b_3
    \end{align*} for some $\mathbf a_1,\mathbf a_2,\mathbf a_3,\mathbf b_1,\mathbf b_2,\mathbf b_3\in A^\ast$.

    \item Else if $s(x)\to t(x)$ equals $\mathbf a_1x\mathbf a_3\to\mathbf b$, we must have $\mathbf a=\mathbf a_1\mathbf a_2\mathbf a_3$, $\mathbf c=\mathbf a_1\mathbf b_2\mathbf a_3$, and $\mathbf d=\mathbf b$, for some $\mathbf a_1,\mathbf a_2,\mathbf a_3,\mathbf b_2\in A^\ast$. Then, by assumption, we must also have $\mathbf a: \mathbf b::_m \mathbf c: \mathbf b$ and, by inner p-symmetry, $\mathbf b: \mathbf a::_m \mathbf b: \mathbf c$ and therefore $\mathbf b\to \mathbf a \righttherefore_m\, \mathbf b\to \mathbf c$. So, either we have $\uparrow^m(\mathbf{b\to a})\ \cup \uparrow^m(\mathbf{b\to c})=\emptyset$ or $\uparrow^m(\mathbf{ b\to a \righttherefore b\to c})$ is non-empty and subset maximal with respect to $\mathbf c$. Again, the sets $\uparrow^m(\mathbf{b\to a})$ and $\uparrow^m(\mathbf{b\to c})$ cannot be empty as they contain $\mathbf{b\to a}$ and $\mathbf{b\to c}$, respectively. Hence, $\uparrow^m(\mathbf{ b\to a \righttherefore b\to c})$ contains at least one non-trivial monolinear justification $s'(x)\to t'(x)$. We distinguish the following cases:
    \begin{enumerate}
        \item If $s'(x)\to t'(x)$ equals $\mathbf{b\to a}$ or $\mathbf{b\to c}$, we must have $\mathbf{a=c}$.

        \item Else if $s'(x)\to t'(x)$ equals $\mathbf b'_1x\mathbf b'_3\to\mathbf a'_1x\mathbf a'_3$, for some $\mathbf{b'_1,b'_3,a'_1,a'_3}\in A^\ast$, we must have $\mathbf b=\mathbf b'_1\mathbf b'_2\mathbf b'_3=\mathbf b'_1\mathbf c'_2\mathbf b'_3$ and $\mathbf a=\mathbf a'_1\mathbf b'_2\mathbf a'_3$ and $\mathbf c=\mathbf a'_1\mathbf c'_2\mathbf a'_3$, for some $\mathbf b'_2,\mathbf c'_2\in A^\ast$. The identity $\mathbf b'_1\mathbf b'_2\mathbf b'_3=\mathbf b'_1\mathbf c'_2\mathbf b'_3$ implies $\mathbf b'_2=\mathbf c'_2$ and again $\mathbf{a=c}$.

        \item Finally, if $s'(x)\to t'(x)$ equals $\mathbf b'_1x\mathbf b'_3\to\mathbf a$, we must also have $\mathbf{a=c}$.
    \end{enumerate} 
\end{enumerate}

$(\Leftarrow)$ The monolinear justification $\mathbf a_1x\mathbf a_3\to\mathbf b_1x\mathbf b_3$ is a characteristic justification of
\begin{align*} 
    \mathbf a_1\mathbf a_2\mathbf a_3\to\mathbf b_1\mathbf a_2\mathbf b_3 \righttherefore\mathbf a_1\mathbf c_2\mathbf a_3\to\mathbf b_1\mathbf c_2\mathbf b_3 
    \quad\text{and}\quad 
    \mathbf a_1\mathbf c_2\mathbf a_3\to\mathbf b_1\mathbf c_2\mathbf b_3 \righttherefore\mathbf a_1\mathbf a_2\mathbf a_3\to\mathbf b_1\mathbf a_2\mathbf b_3
\end{align*} in $( A^\ast,\cdot, A^\ast)$ by the Uniqueness \prettyref{l:UL} since $\mathbf a_1x\mathbf a_3$ and $\mathbf b_1x\mathbf b_3$ both induce injective word functions. Analogously, $\mathbf b_1x\mathbf b_3\to\mathbf a_1x\mathbf a_3$ is a characteristic justification of 
\begin{align*} 
    \mathbf a_1\mathbf c_2\mathbf a_3\to\mathbf b_1\mathbf c_2\mathbf b_3 \righttherefore \mathbf a_1\mathbf a_2\mathbf a_3\to\mathbf b_1\mathbf a_2\mathbf b_3
    \quad\text{and}\quad 
    \mathbf a_1\mathbf a_2\mathbf a_3\to\mathbf b_1\mathbf a_2\mathbf b_3 \righttherefore \mathbf a_1\mathbf c_2\mathbf a_3\to\mathbf b_1\mathbf c_2\mathbf b_3.
\end{align*} Hence, we have shown the theorem.
\end{proof}

\begin{corollary} $\mathbf a: \mathbf{eaf}::_m \mathbf c: \mathbf{ecf}$.
\end{corollary}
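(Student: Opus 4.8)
The plan is to obtain the corollary as an immediate instance of \prettyref{t:m_words}. That theorem tells us that $\mathbf a:\mathbf{eaf}::_m\mathbf c:\mathbf{ecf}$ holds as soon as the four words $\mathbf a$, $\mathbf{eaf}$, $\mathbf c$, $\mathbf{ecf}$ can be written, respectively, in the shapes $\mathbf a_1\mathbf a_2\mathbf a_3$, $\mathbf b_1\mathbf a_2\mathbf b_3$, $\mathbf a_1\mathbf b_2\mathbf a_3$, $\mathbf b_1\mathbf b_2\mathbf b_3$ for suitable $\mathbf a_1,\mathbf a_2,\mathbf a_3,\mathbf b_1,\mathbf b_2,\mathbf b_3\in A^\ast$. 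So the whole task is to choose these six pieces.

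The choice is forced. I would take $\mathbf a_1:=\varepsilon$ and $\mathbf a_3:=\varepsilon$, so that the first word is just $\mathbf a_2$; hence $\mathbf a_2:=\mathbf a$. Reading off the second word, I would then set $\mathbf b_1:=\mathbf e$ and $\mathbf b_3:=\mathbf f$, so that $\mathbf b_1\mathbf a_2\mathbf b_3=\mathbf{eaf}$. Since with $\mathbf a_1=\mathbf a_3=\varepsilon$ the third word $\mathbf a_1\mathbf b_2\mathbf a_3$ collapses to $\mathbf b_2$, I would finally set $\mathbf b_2:=\mathbf c$, which simultaneously gives $\mathbf b_1\mathbf b_2\mathbf b_3=\mathbf{ecf}$ for the fourth word. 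Feeding this decomposition into \prettyref{t:m_words} closes the argument. Equivalently, one can read off the characteristic justification directly: $x\to\mathbf e x\mathbf f$ lies in $\uparrow^m(\mathbf a\to\mathbf{eaf}\righttherefore\mathbf c\to\mathbf{ecf})$ (instantiate the variable by $\mathbf a$, resp. by $\mathbf c$), and since both $x$ and $\mathbf e x\mathbf f$ induce injective word functions, the Uniqueness \prettyref{l:UL} already yields the proportion.

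There is essentially no obstacle here; the only thing that requires a little care is bookkeeping, because the letters $\mathbf a$ and $\mathbf c$ of the corollary coincide notationally with names used for the decomposition blocks in \prettyref{t:m_words} --- the corollary's $\mathbf a$ playing the role of the theorem's inner block $\mathbf a_2$, and the corollary's $\mathbf c$ the role of $\mathbf b_2$. I would therefore spell out the substitution explicitly, as above, rather than pattern-matching informally.
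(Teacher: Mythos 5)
Your proposal is correct and matches the paper's intent exactly: the corollary is stated immediately after \prettyref{t:m_words} as a direct instance of its right-to-left direction, obtained by the very decomposition you give ($\mathbf a_1=\mathbf a_3=\varepsilon$, $\mathbf a_2=\mathbf a$, $\mathbf b_1=\mathbf e$, $\mathbf b_2=\mathbf c$, $\mathbf b_3=\mathbf f$), whose characteristic justification $x\to\mathbf e x\mathbf f$ is precisely the specialization of the justification $\mathbf a_1x\mathbf a_3\to\mathbf b_1x\mathbf b_3$ used in the paper's proof of that direction. Your explicit bookkeeping of the name clash between the corollary's $\mathbf a,\mathbf c$ and the theorem's blocks is a welcome clarification but not a deviation.
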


\begin{corollary}\label{c:ab=cd} $\mathbf{ab=cd} \quad\not\Rightarrow\quad \mathbf a: \mathbf b::_m \mathbf c: \mathbf d$.
\end{corollary}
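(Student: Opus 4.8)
The statement is a non-implication, so the plan is to exhibit one explicit witness: words $\mathbf a,\mathbf b,\mathbf c,\mathbf d$ with $\mathbf{ab}=\mathbf{cd}$ for which $\mathbf a:\mathbf b::_m\mathbf c:\mathbf d$ fails. The choice is guided by \prettyref{t:m_words}. First I would record the following \emph{length shadow} of that theorem: if $\mathbf a:\mathbf b::_m\mathbf c:\mathbf d$, then by \prettyref{t:m_words} there are words with $\mathbf a=\mathbf a_1\mathbf a_2\mathbf a_3$, $\mathbf b=\mathbf b_1\mathbf a_2\mathbf b_3$, $\mathbf c=\mathbf a_1\mathbf b_2\mathbf a_3$, $\mathbf d=\mathbf b_1\mathbf b_2\mathbf b_3$, and taking lengths and adding gives
\[
|\mathbf a|+|\mathbf d|=\sum_i|\mathbf a_i|+\sum_i|\mathbf b_i|=|\mathbf b|+|\mathbf c|.
\]
On the other hand $\mathbf{ab}=\mathbf{cd}$ only forces $|\mathbf a|+|\mathbf b|=|\mathbf c|+|\mathbf d|$. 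So any quadruple satisfying the latter identity but not the former refutes the implication.

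Concretely, over the one-letter alphabet $A=\{a\}$ I would take
\[
\mathbf a:=a,\qquad \mathbf b:=aaa,\qquad \mathbf c:=aa,\qquad \mathbf d:=aa,
\]
so that $\mathbf{ab}=aaaa=\mathbf{cd}$, i.e.\ the hypothesis holds. If $\mathbf a:\mathbf b::_m\mathbf c:\mathbf d$ held as well, the displayed length identity would give $1+2=|\mathbf a|+|\mathbf d|=|\mathbf b|+|\mathbf c|=3+2$, which is absurd; hence $\mathbf a:\mathbf b\not::_m\mathbf c:\mathbf d$. (Equivalently, writing $\alpha_i:=|\mathbf a_i|$, $\beta_i:=|\mathbf b_i|$, the four decomposition identities read $\alpha_1+\alpha_2+\alpha_3=1$, $\beta_1+\alpha_2+\beta_3=3$, $\alpha_1+\beta_2+\alpha_3=2$, $\beta_1+\beta_2+\beta_3=2$; subtracting the first from the third gives $\beta_2-\alpha_2=1$ while subtracting the fourth from the second gives $\beta_2-\alpha_2=-1$, a contradiction.)

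There is essentially no obstacle beyond picking the example: the whole content is the observation that equality of concatenations controls $|\mathbf a|+|\mathbf b|$ whereas a monolinear proportion controls $|\mathbf a|+|\mathbf d|$, and the chosen quadruple makes the two constraints diverge. This is the word-domain counterpart of the phenomenon already visible for $(\mathbb Z,+,\mathbb Z)$ in \prettyref{t:DPT} and \prettyref{e:2436}. If a genuinely non-unary witness is preferred, I would note that the same reasoning applies verbatim to $\mathbf a=ab$, $\mathbf b=a$, $\mathbf c=a$, $\mathbf d=ba$ over $A=\{a,b\}$, where $\mathbf{ab}=aba=\mathbf{cd}$ but $|\mathbf a|+|\mathbf d|=4\neq 2=|\mathbf b|+|\mathbf c|$ rules out any decomposition of the form required by \prettyref{t:m_words}.
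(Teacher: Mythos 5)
Your proof is correct and takes essentially the same route as the paper: exhibit a concrete quadruple with $\mathbf{ab}=\mathbf{cd}$ and invoke \prettyref{t:m_words} to rule out the monolinear proportion (the paper's witness is $a:b\not::_m\varepsilon:ab$). Your length invariant $|\mathbf a|+|\mathbf d|=|\mathbf b|+|\mathbf c|$, extracted from the decomposition in \prettyref{t:m_words}, is a clean mechanical certificate of failure that also covers the paper's own example, where $1+2\neq 1+0$.
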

\begin{proof} For example, by \prettyref{t:m_words} we have $a:b\not::_m \varepsilon:ab$.
\end{proof}


\begin{corollary}\label{c:^r} There are words such that $\mathbf a:\mathbf a^r\not::_m \mathbf c:\mathbf c^r$.
\end{corollary}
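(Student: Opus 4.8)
The plan is to exhibit an explicit counterexample over a two-letter alphabet and refute the proportion by the characterization in \prettyref{t:m_words}. Concretely I would work over $A=\{a,b\}$ and set $\mathbf a:=ab$ and $\mathbf c:=ba$, so that $\mathbf a^r=ba$ and $\mathbf c^r=ab$, and claim that
\[
    ab:ba\ \not::_m\ ba:ab .
\]

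To prove this, suppose for contradiction that $ab:ba::_m ba:ab$ held. By \prettyref{t:m_words} there would then be words $\mathbf a_1,\mathbf a_2,\mathbf a_3,\mathbf b_1,\mathbf b_2,\mathbf b_3\in A^\ast$ with
\[
    ab=\mathbf a_1\mathbf a_2\mathbf a_3,\qquad ba=\mathbf b_1\mathbf a_2\mathbf b_3,\qquad ba=\mathbf a_1\mathbf b_2\mathbf a_3,\qquad ab=\mathbf b_1\mathbf b_2\mathbf b_3 .
\]
From the first and third identities, $\mathbf a_1$ is a common prefix of $ab$ and $ba$; since these two words begin with different letters, this forces $\mathbf a_1=\varepsilon$. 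Dually, $\mathbf a_3$ is a common suffix of $ab$ and $ba$, and since the two words end in different letters, $\mathbf a_3=\varepsilon$. Hence the first identity collapses to $\mathbf a_2=ab$, and the second becomes $ba=\mathbf b_1(ab)\mathbf b_3$, i.e. $ab$ occurs as a factor of $ba$. This is impossible, since the factors of $ba$ are exactly $\varepsilon,a,b,ba$. The contradiction proves the corollary.

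I expect no genuine obstacle here: once the right pair is chosen, the argument is a short prefix/suffix/factor inspection. The only thing to take care of is picking $\mathbf a$ and $\mathbf c$ so that they share neither a nonempty prefix nor a nonempty suffix — which is exactly why $\mathbf a=ab$ paired with $\mathbf c=ba$ is convenient, as it pins down $\mathbf a_1=\mathbf a_3=\varepsilon$ immediately and makes the remaining factor obstruction transparent. (Note that $a:a::_m\mathbf c:\mathbf c^r$ and $ab:ba::_m ab:ba$ both \emph{do} hold, so the counterexample genuinely needs two ``incompatible'' non-palindromes, and $ab,ba$ is the smallest such choice.) An alternative presentation would grind through all the cases of \prettyref{t:m_words} directly, but the prefix/suffix shortcut is cleaner.
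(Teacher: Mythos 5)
Your proposal is correct and uses exactly the counterexample the paper uses ($\mathbf a=ab$, $\mathbf c=ba$, so $ab:ba\not::_m ba:ab$ by \prettyref{t:m_words}); the paper simply asserts the failure, while you supply the prefix/suffix verification that no decomposition of the required form exists. Same approach, just spelled out in more detail.
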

\begin{proof} For example, by \prettyref{t:m_words} we have $ab:ba\not::_m ba:ab$.
\end{proof}




The simple characterization of the monolinear proportion relation in \prettyref{t:m_words} allows us to analyze the proportional properties within the monolinear word setting:

\begin{theorem} The monolinear word proportion relation satisfies
\begin{itemize}
    \item symmetry,
    \item inner p-symmetry,
    \item p-reflexivity,
    \item p-determinism,
    \item strong inner p-reflexivity,
    \item strong p-reflexivity,
    \item transitivity,
    \item central transitivity,
\end{itemize} and, in general, it dissatisfies
\begin{itemize}
    \item central permutation,
    \item commutativity,
    \item inner transitivity.
\end{itemize}
\end{theorem}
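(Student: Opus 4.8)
The plan is to run everything through \prettyref{t:m_words}, which converts each assertion about $::_m$ into a statement about factorizations of words in the free monoid $A^\ast$; the theorem then becomes elementary word combinatorics, the one recurring tool being equidivisibility (if $\mathbf u\mathbf v=\mathbf u'\mathbf v'$ then one of $\mathbf u,\mathbf u'$ is a prefix of the other, so cancellation is available).

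For each positive property I would assume the hypothesis, apply \prettyref{t:m_words} to get a factorization $\mathbf a=\mathbf a_1\mathbf a_2\mathbf a_3$, $\mathbf b=\mathbf b_1\mathbf a_2\mathbf b_3$, $\mathbf c=\mathbf a_1\mathbf b_2\mathbf a_3$, $\mathbf d=\mathbf b_1\mathbf b_2\mathbf b_3$, and then exhibit the factorization demanded by \prettyref{t:m_words} for the conclusion. \emph{Symmetry} and \emph{inner p-symmetry} are immediate: the data above is invariant under the swap $\mathbf a_2\leftrightarrow\mathbf b_2$ (which witnesses $\mathbf c:\mathbf d::_m\mathbf a:\mathbf b$) and under $\mathbf a_1\leftrightarrow\mathbf b_1$, $\mathbf a_3\leftrightarrow\mathbf b_3$ (which witnesses $\mathbf b:\mathbf a::_m\mathbf d:\mathbf c$). \emph{p-reflexivity} comes from the degenerate factorization $\mathbf a_1=\mathbf a$, $\mathbf b_1=\mathbf b$, $\mathbf a_2=\mathbf b_2=\mathbf a_3=\mathbf b_3=\varepsilon$. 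For \emph{p-determinism}, \emph{strong p-reflexivity} and \emph{strong inner p-reflexivity} the mechanism is that when two of the four words coincide the matching endpoints of the relevant factorizations force the two movable slots to be equal by cancellation and the conclusion follows; e.g. $\mathbf a:\mathbf a::_m\mathbf a:\mathbf d$ yields $\mathbf a_1\mathbf a_2\mathbf a_3=\mathbf a_1\mathbf b_2\mathbf a_3$, hence $\mathbf a_2=\mathbf b_2$, hence $\mathbf d=\mathbf b_1\mathbf b_2\mathbf b_3=\mathbf b_1\mathbf a_2\mathbf b_3=\mathbf a$. \emph{Transitivity} and \emph{central transitivity} would be obtained by composing the factorization witnessing $\mathbf a:\mathbf b::_m\mathbf c:\mathbf d$ with the one witnessing $\mathbf c:\mathbf d::_m\mathbf e:\mathbf f$, which first requires reconciling the two independent factorizations of the shared pair $(\mathbf c,\mathbf d)$ — comparing the two frames via equidivisibility — and then reading off a common frame for $(\mathbf a,\mathbf e)$ and $(\mathbf b,\mathbf f)$.

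For the negative properties it suffices to give short explicit counterexamples and to verify via \prettyref{t:m_words} that \emph{no} admissible factorization exists. \emph{Commutativity} is already refuted by \prettyref{c:^r} ($ab:ba\not::_m ba:ab$). For \emph{central permutation} one checks $ab:ba::_m acb:bca$ (characteristically justified by $axb\to bxa$) while $ab:acb\not::_m ba:bca$: since $ab$ and $ba$ share only the empty prefix and the empty suffix, any witnessing factorization for the latter would force the middle slot of $ab$ to equal $ab$ itself, which is not a factor of $acb$. A similarly small example, checked the same way, refutes \emph{inner transitivity}.

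I expect \emph{transitivity} to be the main obstacle. In the additive and multiplicative settings the characterizations $a-b=c-d$ and $\frac ab=\frac cd$ are transitive for purely algebraic reasons, but the word characterization carries a positional frame alongside the movable middle, and two successive proportions may impose frames on $(\mathbf c,\mathbf d)$ that sit in genuinely different positions inside the words; disentangling them by equidivisibility, and pinning down exactly when a common frame can be extracted, is the real combinatorial core here — it is also the most delicate point, being precisely where transitivity and inner transitivity part ways. On the negative side the analogous care is in the non-membership proofs: one has to argue over all factorizations permitted by \prettyref{t:m_words}, not merely the obvious one.
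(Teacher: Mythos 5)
Your overall route is the same as the paper's: funnel every item through \prettyref{t:m_words} and argue about factorizations, with explicit counterexamples for the failing properties. The items you fully work out (p-symmetry, inner p-symmetry, p-reflexivity, p-determinism, strong p-reflexivity) are correct and match the paper's argument, and your central-permutation counterexample $ab:ba::_m acb:bca$ versus $ab:acb\not::_m ba:bca$ is sound (the paper uses an essentially identical one over a six-letter alphabet).

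There are, however, genuine gaps. First, your cancellation argument does not cover \emph{strong inner} p-reflexivity: from $\mathbf a:\mathbf a::_m\mathbf c:\mathbf d$ you only get $\mathbf a_1\mathbf a_2\mathbf a_3=\mathbf b_1\mathbf a_2\mathbf b_3$, and this does \emph{not} force $\mathbf a_1=\mathbf b_1$ and $\mathbf a_3=\mathbf b_3$, because the shared middle $\mathbf a_2$ may sit at different positions in the two factorizations of the same word. Concretely, $aa=\varepsilon\cdot a\cdot a=a\cdot a\cdot\varepsilon$ yields, via \prettyref{t:m_words} with $\mathbf b_2=b$, the proportion $aa:aa::_m ba:ab$ with $ba\neq ab$, so no cancellation argument can close this case. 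Second, you never actually prove transitivity; you correctly identify the reconciliation of the two frames on $(\mathbf c,\mathbf d)$ as the combinatorial core, but that reconciliation fails in general: one has $ab:ba::_m ac:ca$ (frame $\mathbf a_1=a$, $\mathbf a_3=\varepsilon$, $\mathbf b_1=\varepsilon$, $\mathbf b_3=a$, middles $b$ and $c$) and $ac:ca::_m dc:cd$ (frame $\mathbf a_1=\varepsilon$, $\mathbf a_3=c$, $\mathbf b_1=c$, $\mathbf b_3=\varepsilon$, middles $a$ and $d$), yet \prettyref{t:m_words} admits no factorization witnessing $ab:ba::_m dc:cd$, since $dc$ shares no letters with $ab$, forcing $\mathbf a_1=\mathbf a_3=\varepsilon$ and $\mathbf a_2=ab$, while $ab$ is not a factor of $ba$. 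So the transitivity step of your plan would fail if carried out, and central transitivity, which you derive from it, inherits the problem; the paper's own one-line dismissal of these items as immediate consequences of \prettyref{t:m_words} offers nothing to close this gap. Finally, for inner transitivity you promise but do not exhibit a counterexample, whereas a concrete one (the paper's uses a nine-letter alphabet) is needed, and as you yourself note it must be checked against \emph{all} factorizations permitted by \prettyref{t:m_words}, not just the obvious one.
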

\begin{proof} We have the following proofs:
\begin{itemize}
    \item The proofs for p-symmetry, inner p-symmetry, inner p-reflexivity, p-reflexivity, and p-determinism are analogous to the original proofs in the proof of Theorem 28 in \citeA{Antic22}.

    \item p-Determinism is by \prettyref{t:m_words} equivalent to
    \begin{align*} 
        \mathbf a_1\mathbf a_2\mathbf a_3:\mathbf b_1\mathbf a_2\mathbf b_3::_m \mathbf a_1\mathbf b_2\mathbf a_3:\mathbf b_1\mathbf b_2\mathbf b_3 \quad\Leftrightarrow\quad \mathbf b_1\mathbf b_2\mathbf b_3=\mathbf a_1\mathbf a_2\mathbf a_3
    \end{align*} where
    \begin{align*} 
        \mathbf a=\mathbf a_1\mathbf a_2\mathbf a_3=\mathbf b_1\mathbf a_2\mathbf b_3=\mathbf a_1\mathbf b_2\mathbf a_3 \quad\text{and}\quad \mathbf d=\mathbf b_1\mathbf b_2\mathbf b_3.
    \end{align*} This follows from
    \begin{align*} 
        \mathbf a_1\mathbf a_2\mathbf a_3=\mathbf a_1\mathbf b_2\mathbf a_3 \quad\Leftrightarrow\quad \mathbf b_2=\mathbf a_2 \quad\Leftrightarrow\quad \mathbf b_1\mathbf b_2\mathbf b_3=\mathbf a \quad\Leftrightarrow\quad \mathbf{d=a}.
    \end{align*}

    \item Central permutation fails\footnote{See \prettyref{r:Problem30_m}}, for example, given the alphabet $ A:=\{a_1,a_2,a_3,b_1,b_3,c_2\}$ since as a consequence of \prettyref{t:m_words}, we have
    \begin{align*} 
        a_1a_2a_3:b_1a_2b_3::_m a_1c_2a_3:b_1c_2b_3
    \end{align*} whereas
    \begin{align*} 
        a_1a_2a_3:a_1c_2a_3::_m b_1a_2b_3:b_1c_2b_3.
    \end{align*}

    \item Strong inner p-reflexivity and strong p-reflexivity are immediate consequences of \prettyref{t:m_words}.

    \item p-Commutativity fails, for example, in $ A:=\{a,b\}$ since as a consequence of \prettyref{t:m_words}, we have
    \begin{align*} 
        a:b::_m b:a.
    \end{align*}

    \item p-Transitivity is an immediate consequence of \prettyref{t:m_words}.

    \item Inner p-transitivity fails, for example, in $ A:=\{a_1,a_2,a_3,b_1,b_3,c_2,d_2,e_1,e_3\}$ since as a consequence of \prettyref{t:m_words}, we have
    \begin{align*} 
        a_1a_2a_3:b_1a_2b_3::_m a_1c_2a_3:b_1c_2a_3
    \end{align*} and
    \begin{align*} 
        b_1a_2b_3:e_1a_2e_3::_m b_1d_2b3:e_1d_2e_3
    \end{align*} whereas
    \begin{align*} 
         a_1a_2a_3:e_1a_2e_3::_m a_1c_2a_3:e_1d_2e_3.
    \end{align*}

    \item Finally, central p-transitivity is an immediate consequence of transitivity already shown above.
    \end{itemize}
\end{proof}

\begin{remark}\label{r:Problem30_m} The fact that central permutation fails gives a negative answer to Problem 30 in \citeA{Antic22} in the monolinear setting.
\end{remark}


\begin{remark} Notice that we cannot prove an analogue of Theorems \ref{f:con_+} and \ref{t:con_.} in the word domain since by \prettyref{t:m_words}, we in general have
\begin{align*} 
    \mathbf a: \mathbf b::_m \mathbf c: \mathbf d \quad\text{and}\quad \mathbf a': \mathbf b'::_m \mathbf c': \mathbf d' \quad\not\Rightarrow\quad \mathbf{aa'}: \mathbf{bb'}::_m \mathbf{cc'}: \mathbf{dd'}.
\end{align*}
\end{remark}

\section{Word proportions}\label{§:WP}

In \prettyref{§:MWP}, we studied word proportions in the monolinear fragment where justifications contain at most one occurrence of a single variable. This section studies word proportions in the full framework. Specifically, we 
show in \prettyref{t:Stroppa06} and \prettyref{e:Stroppa06} that our framework strictly subsumes \citeS{Stroppa06} notion of a word proportion.

In this section, we shall work in the word algebra $(A^+,\cdot,A^+)$, where $\cdot$ denotes concatenation (always omitted) and $A$ is a finite non-empty alphabet. We will often omit the reference to the word algebra for readability.

\begin{notation} In the other sections, we have used boldface letters to denote sequences of elements. In this section, we we shall use boldface letters to denote words (i.e. sequences of letters) and we use the vector symbol to denote sequences of words.
\end{notation}

\begin{theorem} We have 
\begin{align*} 
     \mathbf a_1\textbf{o}_1\mathbf a_2\ldots&\mathbf a_n\textbf{o}_n\mathbf a_{n+1}:\mathbf b_1\textbf{o}_1\mathbf b_2\ldots\mathbf b_n\textbf{o}_n\mathbf b_{n+1}::\mathbf a_1\mathbf u_1\mathbf a_2\ldots\mathbf a_n\mathbf u_n\mathbf a_{n+1} :\mathbf b_1\mathbf u_1\mathbf b_2\ldots\mathbf b_n\mathbf u_n\mathbf b_{n+1},
\end{align*} for all $\mathbf a_1,\mathbf a_{n+1},\mathbf b_1,\mathbf b_{n+1}\in A^\ast$, $\mathbf a_2,\ldots,\mathbf a_n,\mathbf b_2,\ldots,\mathbf b_n\in A^+$, and $\textbf{o}_i,\mathbf u_i\in A^\ast$, $n\geq 1$.
\end{theorem}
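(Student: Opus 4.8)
The plan is to exhibit a single characteristic justification and invoke the Uniqueness \prettyref{l:UL}.

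Write $\mathbf a:=\mathbf a_1\mathbf o_1\mathbf a_2\cdots\mathbf a_n\mathbf o_n\mathbf a_{n+1}$, $\mathbf b:=\mathbf b_1\mathbf o_1\mathbf b_2\cdots\mathbf b_n\mathbf o_n\mathbf b_{n+1}$, $\mathbf c:=\mathbf a_1\mathbf u_1\mathbf a_2\cdots\mathbf a_n\mathbf u_n\mathbf a_{n+1}$, and $\mathbf d:=\mathbf b_1\mathbf u_1\mathbf b_2\cdots\mathbf b_n\mathbf u_n\mathbf b_{n+1}$. By \prettyref{d:abcd}, the claim $\mathbf a:\mathbf b::\mathbf c:\mathbf d$ amounts to the four arrow proportions $\mathbf a\to\mathbf b\righttherefore\mathbf c\to\mathbf d$, $\mathbf b\to\mathbf a\righttherefore\mathbf d\to\mathbf c$, $\mathbf c\to\mathbf d\righttherefore\mathbf a\to\mathbf b$, and $\mathbf d\to\mathbf c\righttherefore\mathbf b\to\mathbf a$, all in $(A^+,\cdot,A^+)$. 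Each of these is an instance of the arrow proportion $\mathbf a\to\mathbf b\righttherefore\mathbf c\to\mathbf d$ after relabelling the parameters ($\mathbf a_\bullet\leftrightarrow\mathbf b_\bullet$ and/or $\mathbf o_\bullet\leftrightarrow\mathbf u_\bullet$), and the hypotheses on the parameters are invariant under those relabellings; so it suffices to establish $\mathbf a\to\mathbf b\righttherefore\mathbf c\to\mathbf d$.

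For this I would use the rewrite rule
\[
s\to t,\qquad s:=\mathbf a_1 x_1\mathbf a_2 x_2\cdots\mathbf a_n x_n\mathbf a_{n+1},\qquad t:=\mathbf b_1 x_1\mathbf b_2 x_2\cdots\mathbf b_n x_n\mathbf b_{n+1},
\]
which is a legitimate $L$-justification since $X(t)=\{x_1,\dots,x_n\}=X(s)$. Substituting $\mathbf o_1,\dots,\mathbf o_n$ for $x_1,\dots,x_n$ turns $s\to t$ into $\mathbf a\to\mathbf b$, and substituting $\mathbf u_1,\dots,\mathbf u_n$ turns it into $\mathbf c\to\mathbf d$, so $s\to t\in\ \uparrow(\mathbf a\to\mathbf b\righttherefore\mathbf c\to\mathbf d)$, and $s\to t$ is non-trivial. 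It then remains to see that $s\to t$ is a characteristic justification of this arrow proportion --- that $\mathbf c$ forces $\mathbf d$ --- which by the Uniqueness \prettyref{l:UL} holds as soon as $\mathbf c\in\mathbbm 1(s)$, i.e.\ as soon as the displayed factorisation $\mathbf c=\mathbf a_1\mathbf u_1\mathbf a_2\cdots\mathbf a_n\mathbf u_n\mathbf a_{n+1}$ is the only factorisation of that shape (and, for the three remaining arrow proportions, as soon as $\mathbf a\in\mathbbm 1(s)$ and $\mathbf b,\mathbf d\in\mathbbm 1(t)$).

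The main obstacle is exactly this uniqueness of factorisation. For $n=1$ the maps $x\mapsto\mathbf p x\mathbf q$ are injective, so every word lies in $\mathbbm 1$ and the Uniqueness Lemma applies verbatim; this recovers the $(\Leftarrow)$-direction of \prettyref{t:m_words}. For $n\ge2$ the $n$-variable word maps need not be injective, and a factorisation can genuinely fail to be unique in degenerate configurations --- for instance when an inner separator $\mathbf a_i$ overlaps itself. I would therefore proceed by induction on $n$: in the inductive step, if all four factorisations along $s$ and $t$ are unique the Uniqueness Lemma finishes the proof, and otherwise a factorisation clash in some block --- here one exploits that the inner separators $\mathbf a_2,\dots,\mathbf a_n,\mathbf b_2,\dots,\mathbf b_n$ are non-empty --- lets one coalesce adjacent blocks into a coarser pattern of $n-1$ blocks presenting the same words $\mathbf a,\mathbf b,\mathbf c,\mathbf d$, to which the induction hypothesis applies. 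Checking that every such clash yields a legitimate coarsening is the one genuinely combinatorial step.
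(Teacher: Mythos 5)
Your route is the paper's route: the same rewrite rule $s\to t$ with $s=\mathbf a_1x_1\mathbf a_2\cdots\mathbf a_nx_n\mathbf a_{n+1}$ and $t=\mathbf b_1x_1\mathbf b_2\cdots\mathbf b_nx_n\mathbf b_{n+1}$, followed by the Uniqueness \prettyref{l:UL}. The only difference is that the paper simply asserts that these terms are injective whenever the inner separators are non-empty, whereas you (rightly) refuse to believe this for $n\geq 2$. Your scepticism is vindicated: over $A=\{a,b,c\}$ the term $s=cx_1ax_2c$ satisfies $s(b,ab)=cbaabc=s(ba,b)$, so non-empty separators do not restore injectivity and $cbaabc\notin\mathbbm 1(s)$. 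The step you single out as ``the main obstacle'' is thus exactly the step at which the paper's own proof breaks down.

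Unfortunately your proposed repair --- induction on $n$, coalescing adjacent blocks at a factorisation clash --- cannot be completed, because the statement is false as written for $n\geq 2$. Take $n=2$, $\mathbf a_1=\mathbf a_3=\mathbf b_1=\mathbf b_3=c$, $\mathbf a_2=a$, $\mathbf b_2=b$, $(\mathbf o_1,\mathbf o_2)=(b,ab)$, $(\mathbf u_1,\mathbf u_2)=(ba,b)$. Then $\mathbf a=\mathbf c=cbaabc$ while $\mathbf b=cbbabc\neq cbabbc=\mathbf d$. The ground rule $\mathbf a\to\mathbf b$ is a non-trivial justification lying in $\uparrow(\mathbf a\to\mathbf b\righttherefore\mathbf a\to\mathbf b)$ but not in $\uparrow(\mathbf a\to\mathbf b\righttherefore\mathbf a\to\mathbf d)$, and the latter set contains the non-trivial justification $cx_1ax_2c\to cx_1bx_2c$ and is contained in the former; hence $d$-maximality fails and $\mathbf a\to\mathbf b\not\righttherefore\mathbf a\to\mathbf d$, so the claimed proportion does not hold. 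In particular no coarser presentation with fewer blocks can exist: any presentation with $\mathbf a=\mathbf c$ forces, by cancellation in the free monoid, $\mathbf o'=\mathbf u'$ and hence $\mathbf b=\mathbf d$. The theorem needs an additional hypothesis --- e.g.\ that $\mathbf a,\mathbf c\in\mathbbm 1(s)$ and $\mathbf b,\mathbf d\in\mathbbm 1(t)$ for the displayed $s\to t$, which is automatic only for $n=1$ --- and under such a hypothesis your argument (which is the paper's argument) is complete via the Uniqueness \prettyref{l:UL}.
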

\begin{proof} Since $\mathbf a_1x_1\mathbf a_2\ldots\mathbf a_nx_n\mathbf a_{n+1}$ and $\mathbf b_1x_1\mathbf b_2\ldots\mathbf b_nx_n\mathbf b_{n+1}$ are injective in $A^+$ whenever $\mathbf a_2,\ldots,\mathbf a_n$ and $\mathbf b_2,\ldots,\mathbf b_n$ are non-empty words (notice that two variables side by side yield non-injective word terms), the justification $\mathbf a_1x_1\mathbf a_2\ldots\mathbf a_nx_n\mathbf a_{n+1}\to \mathbf b_1x_1\mathbf b_2\ldots\mathbf b_nx_n\mathbf b_{n+1}$ is a characteristic justification by the Uniqueness \prettyref{l:UL}.
\end{proof}

\begin{corollary}\label{c:a_eaf__c_ecf} $\mathbf a :\mathbf{eaf} :: \mathbf c :\mathbf{ecf} \quad\text{and}\quad \mathbf a :\mathbf b :: \mathbf{eaf} :\mathbf{ebf}$.
\end{corollary}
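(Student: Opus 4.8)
The plan is to read off both proportions as instances of the preceding theorem, which already does the real work for us: it produces an (injective) characteristic justification of the required shape and then invokes the Uniqueness Lemma \prettyref{l:UL}. So all that remains is to choose the parameters $\mathbf a_i,\mathbf b_i,\mathbf o_i,\mathbf u_i$ correctly.

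For $\mathbf a:\mathbf{eaf}::\mathbf c:\mathbf{ecf}$ I would apply the theorem with $n:=1$, a trivial left frame $\mathbf a_1:=\mathbf a_2:=\varepsilon$, the right frame $\mathbf b_1:=\mathbf e$ and $\mathbf b_2:=\mathbf f$, and fillers $\mathbf o_1:=\mathbf a$, $\mathbf u_1:=\mathbf c$. Then the four displayed terms collapse to exactly $\mathbf a$, $\mathbf{eaf}$, $\mathbf c$, $\mathbf{ecf}$, and the characteristic justification handed back is $x\to\mathbf e x\mathbf f$, whose two sides are patently injective word functions; this version has no non-emptiness constraint (the blocks required to lie in $A^+$ are vacuous for $n=1$), so it holds for all words $\mathbf a,\mathbf c$ and all $\mathbf e,\mathbf f\in A^\ast$.

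For $\mathbf a:\mathbf b::\mathbf{eaf}:\mathbf{ebf}$ a single variable no longer suffices, because it is now the surrounding context (the $\mathbf e,\mathbf f$) rather than a single inserted block that changes between the two rows; so I would use $n:=2$. Take $\mathbf a_1:=\mathbf a_3:=\mathbf b_1:=\mathbf b_3:=\varepsilon$, $\mathbf a_2:=\mathbf a$, $\mathbf b_2:=\mathbf b$, and fillers $\mathbf o_1:=\mathbf o_2:=\varepsilon$, $\mathbf u_1:=\mathbf e$, $\mathbf u_2:=\mathbf f$. Substituting, the four terms become $\mathbf a$, $\mathbf b$, $\mathbf e\mathbf a\mathbf f=\mathbf{eaf}$, $\mathbf e\mathbf b\mathbf f=\mathbf{ebf}$, and the justification produced is $x_1\mathbf a x_2\to x_1\mathbf b x_2$.

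The only hypothesis to keep an eye on is that the theorem requires its inner blocks $\mathbf a_2,\mathbf b_2$ to be non-empty, so this argument delivers the second proportion for non-empty $\mathbf a,\mathbf b$; the degenerate cases (such as $\mathbf a=\mathbf b=\varepsilon$, which is just inner p-reflexivity, or $\mathbf a=\varepsilon\ne\mathbf b$) I would dispatch by exhibiting a suitable justification directly. I do not expect any genuine obstacle: once the parameters are plugged in, everything is a matter of matching concatenations of words, and the substantive points — injectivity of the associated term functions and the appeal to the Uniqueness Lemma — are already contained in the preceding theorem.
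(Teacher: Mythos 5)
Your proposal is correct and is exactly the instantiation of the preceding theorem that the paper intends (the corollary is stated there without its own proof): $n=1$ with characteristic justification $x\to\mathbf e x\mathbf f$ for the first proportion, and $n=2$ with justification $x_1\mathbf a x_2\to x_1\mathbf b x_2$ for the second. Your caveat that the second instantiation requires the inner blocks $\mathbf a,\mathbf b$ to be non-empty (the theorem's $A^+$ hypothesis) is a genuine edge case the paper glosses over, and handling it separately as you suggest is the right move.
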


\begin{proposition}\label{p:r} Given non-empty words $\mathbf{a,b}\in A^+$ of equal length, we have
\begin{align}\label{equ:a_a^r_b_b^r} 
    \mathbf a:\mathbf a^r::\mathbf b:\mathbf b^r.
\end{align} Consequently,
\begin{align*} 
    \mathbf a:\mathbf a^r::\mathbf a^r:\mathbf a.
\end{align*}
\end{proposition}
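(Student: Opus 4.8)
The plan is to exhibit a single justification — the letter-wise reversal — and to apply the Uniqueness \prettyref{l:UL} to it, which yields the full relation $::$ at once. Write $n:=|\mathbf a|=|\mathbf b|\geq 1$ and $\mathbf a=a_1\cdots a_n$, $\mathbf b=b_1\cdots b_n$ with $a_i,b_i\in A$, so that $\mathbf a^r=a_n\cdots a_1$ and $\mathbf b^r=b_n\cdots b_1$. The candidate justification is the \emph{reversal}
\begin{align*}
    s\to t:=x_1x_2\cdots x_n\to x_nx_{n-1}\cdots x_1,
\end{align*}
which is well formed since $X(t)=\{x_1,\ldots,x_n\}=X(s)$. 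First I would check that $s\to t$ justifies both arrows in $(A^+,\cdot,A^+)$: evaluating gives $s^{A^+}(a_1,\ldots,a_n)=\mathbf a$, $t^{A^+}(a_1,\ldots,a_n)=a_n\cdots a_1=\mathbf a^r$, and symmetrically $s^{A^+}(b_1,\ldots,b_n)=\mathbf b$, $t^{A^+}(b_1,\ldots,b_n)=\mathbf b^r$; hence $s\to t\in\ \uparrow(\mathbf a\to\mathbf a^r)\ \cap\ \uparrow(\mathbf b\to\mathbf b^r)=\ \uparrow(\mathbf a\to\mathbf a^r\righttherefore\mathbf b\to\mathbf b^r)$.

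Next I would verify the four uniqueness conditions required by the second implication of \prettyref{l:UL}, namely $\mathbf a\in\mathbbm 1(s)$, $\mathbf a^r\in\mathbbm 1(t)$, $\mathbf b\in\mathbbm 1(s)$, $\mathbf b^r\in\mathbbm 1(t)$, all in $(A^+,\cdot,A^+)$. The point is that a word of length exactly $n$ admits only one factorization into $n$ nonempty words, namely the letter-by-letter one: if $o_1\cdots o_n=\mathbf a$ with every $o_i$ a nonempty word, then $\sum_i|o_i|=n$ together with $|o_i|\geq 1$ forces $|o_i|=1$, whence $o_i=a_i$. Thus $\langle s,\mathbf a\rangle=\{(a_1,\ldots,a_n)\}$ is a singleton, so $\mathbf a\in\mathbbm 1(s)$; the identical length count applied to $t=x_n\cdots x_1$ gives $\langle t,\mathbf a^r\rangle=\{(a_1,\ldots,a_n)\}$, so $\mathbf a^r\in\mathbbm 1(t)$, and the same argument with $\mathbf b$ in place of $\mathbf a$ disposes of the remaining two. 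The Uniqueness \prettyref{l:UL} then delivers $\mathbf a:\mathbf a^r::\mathbf b:\mathbf b^r$, i.e. \prettyref{equ:a_a^r_b_b^r}. For the stated consequence I would just substitute $\mathbf b:=\mathbf a^r$ into \prettyref{equ:a_a^r_b_b^r}: indeed $\mathbf a^r$ is again a nonempty word, of the same length $n=|\mathbf a|$, with $(\mathbf a^r)^r=\mathbf a$, so we obtain $\mathbf a:\mathbf a^r::\mathbf a^r:\mathbf a$.

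The step I expect to be the main obstacle is precisely this uniqueness verification. The reversal term $x_1\cdots x_n\to x_n\cdots x_1$ places all its variables adjacently, so $s$ and $t$ are \emph{not} injective word functions and the convenient injectivity shortcuts (the Functional Proportion \prettyref{t:FPT}) are unavailable; one must argue about $\mathbbm 1(s)$ and $\mathbbm 1(t)$ directly by counting factorization lengths. This is also exactly where the hypothesis $|\mathbf a|=|\mathbf b|$ enters: it lets $\mathbf a$ and $\mathbf b$ share one and the same $n$-variable justification, and it is what pins down the relevant factorizations uniquely — dropping the equal-length assumption breaks both the argument and the conclusion.
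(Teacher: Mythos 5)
Your proof is correct and follows the same route as the paper: both exhibit the reversal rule $x_1\cdots x_n\to x_n\cdots x_1$ as a characteristic justification and invoke the Uniqueness Lemma, then obtain the second claim from $\mathbf a^{rr}=\mathbf a$. Your explicit factorization-counting argument (a word of length $n$ splits into $n$ nonempty factors only letter-by-letter, so the relevant solution sets are singletons) merely fills in the detail the paper leaves implicit when it asserts the rule is characteristic.
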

\begin{proof} Since $|\mathbf a|=|\mathbf b|$ holds by assumption (here $|\,.\,|$ denotes the length of $\mathbf a$), the rule $$x_1\ldots x_{|\mathbf a|}\to x_{|\mathbf a|}\ldots x_1$$ is a characteristic justification of $\mathbf a:\mathbf a^r::\mathbf b:\mathbf b^r$ in $A^+$ by the Uniqueness \prettyref{l:UL}. The second assertion is a consequence of the first together with the identity $\mathbf a^{rr}=\mathbf a$.
\end{proof}

In the proof of \prettyref{p:r}, two assumptions are essential: first, we exclude the empty word since otherwise the Uniqueness \prettyref{l:UL} is no longer applicable; and second, we assume that $\mathbf a$ and $\mathbf b$ have the same length for the same reason. We can avoid both assumptions by simply adding the \textit{injective} reverse operation to our algebra since by the Functional Proportion \prettyref{t:FPT}, we then have in $(A^\ast,\cdot,.^r,A^\ast)$ (notice that we now have included the empty word) the proportions
\begin{align*} 
    \mathbf a:\mathbf a^r::\mathbf b:\mathbf b^r,\quad\text{for \textit{all $\mathbf{a,b}\in A^\ast$}}.
\end{align*} This is a common pattern and we encourage users of our framework to add (preferable injective) algebraic operations to the word algebra as needed for intended applications.

The following notion of word proportion is an instance of \citeA[Definition 2]{Stroppa06}, originally given in the more general context of semigroups: 

\begin{definition}[\citeA{Stroppa06}] Given $\mathbf{a,b,c,d}\in A^+$, define
\begin{align*} 
    \mathbf a: \mathbf b::_{SY}\, \mathbf c: \mathbf d \quad:\Leftrightarrow\quad \mathbf a=a_1\ldots a_n,\quad \mathbf b=b_1\ldots b_n,\quad\mathbf c=c_1\ldots c_n,\quad\mathbf d=d_1\ldots d_n,
\end{align*} for some symbols $a_1,\ldots,a_n,b_1,\ldots,b_n,c_1,\ldots,c_n,d_1,\ldots,d_n\in A$ and $n\geq 0$ such that
\begin{align*}
    (a_i=b_i \quad\text{and}\quad c_i=d_i) \quad\text{or}\quad (a_i=c_i \quad\text{and}\quad b_i=d_i)\quad\text{holds for all $1\leq i\leq n$.}
\end{align*}
\end{definition}

\begin{example} $a:a::_{SY}bb:bb$ and $abc:abd::_{SY}bbc:bbd$.
\end{example}

We have the following implication:

\begin{theorem}\label{t:Stroppa06} $\mathbf a: \mathbf b::_{SY}\, \mathbf c: \mathbf d \quad\Rightarrow\quad \mathbf a: \mathbf b:: \mathbf c: \mathbf d$.
\end{theorem}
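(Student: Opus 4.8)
The plan is to show that every Stroppa--Yvon word proportion becomes, after regrouping letters into maximal blocks, an instance of the word-proportion theorem proved earlier in this section --- the one asserting that $\mathbf a_1\mathbf o_1\mathbf a_2\cdots\mathbf o_n\mathbf a_{n+1} : \mathbf b_1\mathbf o_1\mathbf b_2\cdots\mathbf o_n\mathbf b_{n+1} :: \mathbf a_1\mathbf u_1\mathbf a_2\cdots\mathbf u_n\mathbf a_{n+1} : \mathbf b_1\mathbf u_1\mathbf b_2\cdots\mathbf u_n\mathbf b_{n+1}$ holds as soon as the interior skeleton blocks $\mathbf a_2,\ldots,\mathbf a_n$ and $\mathbf b_2,\ldots,\mathbf b_n$ are non-empty words, with characteristic justification $\mathbf a_1x_1\mathbf a_2\cdots x_n\mathbf a_{n+1}\to\mathbf b_1x_1\mathbf b_2\cdots x_n\mathbf b_{n+1}$. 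So assume $\mathbf a:\mathbf b::_{SY}\mathbf c:\mathbf d$, witnessed by $\mathbf a=a_1\cdots a_m$, $\mathbf b=b_1\cdots b_m$, $\mathbf c=c_1\cdots c_m$, $\mathbf d=d_1\cdots d_m$ with, for each $i$, either ($a_i=b_i$ and $c_i=d_i$) or ($a_i=c_i$ and $b_i=d_i$). Call position $i$ \emph{constant} if $a_i=b_i$ and $c_i=d_i$, and \emph{skeletal} otherwise (so then $a_i=c_i$ and $b_i=d_i$); positions where all four letters coincide may go into either class, say the skeletal one.

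First I would group $1,\ldots,m$ into maximal runs of equal type, so that the runs alternate between constant runs and skeletal runs. Reading off the corresponding factor words gives $\mathbf a=\mathbf a_1\mathbf o_1\mathbf a_2\mathbf o_2\cdots\mathbf a_n\mathbf o_n\mathbf a_{n+1}$, where the $\mathbf o_i$ are the contents of the $n$ maximal constant runs and the $\mathbf a_j$ the contents of the maximal skeletal runs (with $\mathbf a_1$ and/or $\mathbf a_{n+1}$ equal to $\varepsilon$ when the word begins resp.\ ends with a constant run). Since $a_i=b_i$ on constant positions and $a_i=c_i$ on skeletal positions, the very same block structure yields $\mathbf b=\mathbf b_1\mathbf o_1\mathbf b_2\cdots\mathbf o_n\mathbf b_{n+1}$, $\mathbf c=\mathbf a_1\mathbf u_1\mathbf a_2\cdots\mathbf u_n\mathbf a_{n+1}$, $\mathbf d=\mathbf b_1\mathbf u_1\mathbf b_2\cdots\mathbf u_n\mathbf b_{n+1}$, with $\mathbf u_i$ the content of the $i$-th constant run read off from $\mathbf c$ (equivalently from $\mathbf d$). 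An interior skeletal run lies strictly between two \emph{distinct} maximal constant runs, hence is non-empty, so $\mathbf a_2,\ldots,\mathbf a_n\in A^+$ and likewise $\mathbf b_2,\ldots,\mathbf b_n\in A^+$ --- exactly the hypotheses of the earlier theorem; applying it yields $\mathbf a:\mathbf b::\mathbf c:\mathbf d$. The degenerate case $n=0$ (no constant position) gives $\mathbf a=\mathbf c$ and $\mathbf b=\mathbf d$, so the claim reduces to p-reflexivity $\mathbf a:\mathbf b::\mathbf a:\mathbf b$ (valid in $(A^+,\cdot,A^+)$ by \citeA{Antic22}); alternatively one inserts a single empty hole $\mathbf o_1:=\mathbf u_1:=\varepsilon$ and invokes the earlier theorem with $n=1$, where there are no interior skeleton blocks to constrain.

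The point I would treat most carefully --- the main obstacle --- is the grouping step: merging consecutive same-type positions into single blocks is precisely what prevents two variables from standing side by side in $\mathbf a_1x_1\mathbf a_2x_2\cdots x_n\mathbf a_{n+1}$, which is what keeps this word term and its $\mathbf b$-side counterpart injective so that the Uniqueness \prettyref{l:UL} applies; without the merging a run of several consecutive constant positions would contribute adjacent variables and injectivity would be lost. I would also check that the arbitrary choice made for positions whose four letters all coincide is immaterial --- it is, since such a position can be absorbed into an adjacent constant or skeletal run without changing any of the four factorizations above --- and that a leading or trailing constant run is harmless since the earlier theorem allows $\mathbf a_1,\mathbf a_{n+1}\in A^\ast$. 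Once the shape matches, the earlier theorem does all the work, so the whole proof is essentially this reduction plus routine bookkeeping.
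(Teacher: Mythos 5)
Your reduction is correct, but it is not the route the paper takes. The paper does not regroup positions into maximal blocks: it introduces one fresh variable $x_{i_j}$ for \emph{each individual position} at which the letters change, forms $s=a_1\ldots a_{i_1-1}x_{i_1}a_{i_1+1}\ldots a_n$ and $t=c_1\ldots c_{i_1-1}x_{i_1}c_{i_1+1}\ldots c_n$, checks that $s\to t$ justifies $\mathbf a\to\mathbf b\righttherefore\mathbf c\to\mathbf d$, and then verifies by hand that this justification is characteristic --- taking an arbitrary $\vec{\mathbf e}$ with $\mathbf a=s(\vec{\mathbf e})$ and arguing that $\mathbf b=t(\vec{\mathbf e})$ is forced --- handling the degenerate case by p-reflexivity exactly as you do. The trade-off is instructive: because consecutive changing positions yield adjacent variables, the paper's terms $s,t$ need not be injective, so it cannot simply cite injectivity and must argue uniqueness of the induced value directly (the passage from $a_{i_1}\cdots a_{i_m}=\mathbf e_1\cdots\mathbf e_m$ to $t(\vec{\mathbf e})=\mathbf b$ is precisely the delicate point that your block-merging is designed to eliminate). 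Your version instead outsources all of that to the earlier block theorem, which makes the proof of the present statement essentially pure bookkeeping; the price is that you inherit that theorem's injectivity claim for terms of the form $\mathbf a_1x_1\mathbf a_2\cdots x_n\mathbf a_{n+1}$, which is itself delicate over a universe containing the empty word (e.g.\ $x_1bx_2$ sends both $(b,\varepsilon)$ and $(\varepsilon,b)$ to $bb$) --- but that is a burden of the cited theorem, not of your reduction. Your handling of the ambiguous all-equal positions, of the non-emptiness of interior skeletal runs via maximality, and of the degenerate cases is sound.
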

\begin{proof} Let $\mathbf a=a_1\ldots a_n$, $\mathbf b=b_1\ldots b_n$, $\mathbf c=c_1\ldots c_n$, and $\mathbf d=d_1\ldots d_n$, $n\geq 1$, be decompositions of $\mathbf{a,b,c,d}$ into letters of the alphabet. Let $I$ be the set of indices $i$ such that $a_i\neq c_i$ and $b_i\neq d_i$
, $1\leq i\leq n$, and let $m$ be the finite cardinality of $I=\{i_1,\ldots,i_m\}$. 

If $m=0$ then we have $\mathbf a=\mathbf c$ and $\mathbf b=\mathbf d$ which together with p-reflexivity of word proportions implies $\mathbf{a:b::c:d}$. 

Otherwise, we define the terms $s$ and $t$ over $A^\ast$\todo{$A^+$?} and $\{x_{i_1},\ldots,x_{i_m}\}$ as follows. For every $i\in [1,n]$, if $a_i=c_i$ and $b_i=d_i$ define $s_i:=a_i$ and $t_i:=c_i$ and, otherwise, define $s_i:=x_i$ and $t_i:=x_i$; finally, define
\begin{align*} 
   s(x_{i_1},\ldots,x_{i_m}):=s_1\ldots s_n \quad\text{and}\quad t(x_{i_1},\ldots,x_{i_m}):=t_1\ldots t_n.
\end{align*} By construction, we have
\begin{align*} 
   s(x_{i_1},\ldots,x_{i_m})&=a_1\ldots a_{i_1-1}x_{i_1}a_{i_1+1}\ldots a_{i_m-1}x_{i_m}a_{i_m+1}\ldots a_n,\\
    t(x_{i_1},\ldots,x_{i_m})&=c_1\ldots c_{i_1-1}x_{i_1}c_{i_1+1}\ldots c_{i_m-1}x_{i_m}c_{i_m+1}\ldots c_n,
\end{align*} and
\begin{align}\label{equ: 2022-12-08-a->b_c->d} 
    \mathbf{a\to b}=s(a_{i_1},\ldots,a_{i_m})\to t(a_{i_1},\ldots,a_{i_m}) \quad\text{and}\quad \mathbf{c\to d}=s(c_{i_1},\ldots,c_{i_m})\to t(c_{i_1},\ldots,c_{i_m}),
\end{align} which shows that $s\to t$ is a justification of $\mathbf{a\to b \righttherefore c\to d}$ in $A^\ast$\todo{$A^+$?}. Notice that if $s$ does 

It remains to show that $s\to t$ is a characteristic justification. For this, we seek to apply the Uniqueness \prettyref{l:UL}. Let $\vec{\mathbf e}=(\mathbf e_1,\ldots,\mathbf e_m)\in (A^\ast)^m$ be an arbitrary sequence of words satisfying $\mathbf a=s(\vec{\mathbf e})$. We need to show $\mathbf b=t(\vec{\mathbf e})$. By \prettyref{equ: 2022-12-08-a->b_c->d}, we have
\begin{align*} 
    \mathbf b=c_1\ldots c_{i_1-1}a_{i_1}c_{i_1+1}\ldots c_{i_m-1}a_{i_m}c_{i_m+1}\ldots c_n.
\end{align*} From $\mathbf a=s(\vec{\mathbf e})$, we deduce
\begin{align*} 
    \mathbf a=a_1\ldots a_{i_1-1}\mathbf e_1a_{i_1+1}\ldots a_{i_m-1}\mathbf e_ma_{i_m+1}\ldots a_n,
\end{align*} which entails
\begin{align*} 
    a_1\ldots a_{i_1-1}a_{i_1}a_{i_1+1}\ldots a_{i_m-1}a_{i_m}a_{i_m+1}\ldots a_n=a_1\ldots a_{i_1-1}\mathbf e_1a_{i_1+1}\ldots a_{i_m-1}\mathbf e_ma_{i_m+1}\ldots a_n.
\end{align*} This implies
\begin{align*} 
    a_{i_1}\ldots a_{i_m}=\mathbf e_1\ldots\mathbf e_m.
\end{align*} From this it is easy to see that
\begin{align*} 
    t(\vec{\mathbf e})=c_1\ldots c_{i_1-1}\mathbf e_1c_{i_1+1}\ldots c_{i_m-1}\mathbf e_mc_{i_m+1}\ldots c_n=\mathbf b.
\end{align*} 

The other cases of Item 4 are analogous.
\end{proof}

Notice that \citeA{Stroppa06} define word proportions only for words over the same alphabet, which means that we  cannot expect the converse of \prettyref{t:Stroppa06} to be true --- the following example shows that it may fail even in the case of a single alphabet:

\begin{example}\label{e:Stroppa06} Let $A:=\{a,b,c\}$. Since $xby$ and $xcy$ are injective in $A^+$\todo{later $A^\ast$ used}, the rewrite rule $$xby\to xcy$$ is a characteristic justification of
\begin{align}\label{equ:ab_ac_bc_cc} 
    ab:ac::bc:cc
\end{align} in $(A^+,\cdot,A^+)$ by the Functional Proportion \prettyref{t:FPT}. This solution formalizes the intuitive observation that $ac$ is obtained from $ab$ by replacing $b$ by $c$ --- analogously, $cc$ is obtained from $bc$ by again replacing $b$ by $c$. This solution cannot be obtained with respect to $::_{SY}$ as the first letter of $ab$ and $ac$ is identical, whereas the first letter of $bc$ and $cc$ differs. (However, we should mention that \prettyref{equ:ab_ac_bc_cc} does hold with respect to $::_{SY}$ in case the empty word is allowed).
\end{example}

\section{Tree proportions}\label{§:TP}

In this section, we study analogical proportions between trees called tree proportions within the free term algebra $\mathfrak T_{L,X}$ (see \prettyref{§:P}).

The following simple observation highlights a peculiar property of the tree setting:

\begin{fact}\label{f:char} Every term function is injective in the term algebra and thus every justification is a characteristic one.
\end{fact}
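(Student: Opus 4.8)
The statement has two parts: first, that every term function $s^{\mathfrak T_{L,X}}$ is injective; second, that consequently every justification $s\to t$ is a characteristic justification of any arrow proportion it justifies. The plan is to prove injectivity directly from the definition of the term algebra, and then to derive the characterization claim by invoking the Uniqueness Lemma (Lemma \ref{l:UL}).

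For the first part, I would fix a term $s\in T_{L,X}$ with variables among $\mathbf x=x_1,\ldots,x_{r(s)}$ and consider two substitution tuples $\mathbf p=(p_1,\ldots,p_{r(s)})$ and $\mathbf q=(q_1,\ldots,q_{r(s)})$ in $T_{L,X}^{r(s)}$ with $s^{\mathfrak T_{L,X}}(\mathbf p)=s^{\mathfrak T_{L,X}}(\mathbf q)$, i.e.\ $s\sigma_{\mathbf p}=s\sigma_{\mathbf q}$ where $\sigma_{\mathbf p}$ and $\sigma_{\mathbf q}$ are the corresponding substitutions. The argument is a straightforward structural induction on $s$: if $s$ is a variable $x_i$ then the hypothesis reads $p_i=q_i$ and we are done (if $s$ is a variable not among $\mathbf x$, or a constant, the function is constant on a one-point domain and injectivity is vacuous); if $s=f(s_1,\ldots,s_{r(f)})$ then since function symbols are interpreted by themselves in $\mathfrak T_{L,X}$, the equality $f(s_1\sigma_{\mathbf p},\ldots)=f(s_1\sigma_{\mathbf q},\ldots)$ forces $s_j\sigma_{\mathbf p}=s_j\sigma_{\mathbf q}$ for each $j$ by unique readability of terms, and the induction hypothesis applied to each $s_j$ (using the relevant sub-tuple of variables) gives $p_i=q_i$ for every $x_i$ actually occurring in $s$. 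For variables of $\mathbf x$ not occurring in $s$ the value of $p_i$ is irrelevant, so strictly speaking injectivity holds as a function of the occurring variables; this is the sense in which $s^{\mathfrak T_{L,X}}$ is injective, and it is exactly the sense needed for $\mathbbm 1_{\mathfrak T_{L,X}}(s)=T_{L,X}$, since $\langle s,a\rangle$ is then a singleton whenever it is non-empty.

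For the second part, suppose $s\to t\in\ \uparrow_{\mathfrak{(A,B)}}(a\to b\righttherefore c\to d)$ with $\mathfrak A=\mathfrak B=\mathfrak T_{L,X}$ and $a,b,c,d$ terms. By the first part and the displayed observation $\mathbbm 1(s)=\mathbbm 1(t)=T_{L,X}$, we have $a\in\mathbbm 1_{\mathfrak A}(s)$, $b\in\mathbbm 1_{\mathfrak A}(t)$, $c\in\mathbbm 1_{\mathfrak B}(s)$, $d\in\mathbbm 1_{\mathfrak B}(t)$, so the second implication of the Uniqueness \prettyref{l:UL} yields $a:b::_{\mathfrak{(A,B)}}c:d$ outright; in particular $\{s\to t\}$ satisfies condition (1) of a characteristic set by hypothesis, and condition (2) follows because if $\{s\to t\}\subseteq\ \uparrow_{\mathfrak{(A,B)}}(a\to b\righttherefore c\to d')$ then the same Uniqueness Lemma (using $c\in\mathbbm 1_{\mathfrak B}(s)$) forces $a\to b\righttherefore_{\mathfrak{(A,B)}}c\to d'$ to be $d'$-maximal with the single solution, hence $d'=d$. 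Thus $s\to t$ is a characteristic justification.

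The only mild subtlety — which I would flag rather than belabor — is the treatment of variables not occurring in $s$ (or $t$): the term function $s^{\mathfrak T_{L,X}}$ is a map $T_{L,X}^{r(s)}\to T_{L,X}$ that simply ignores the coordinates of non-occurring variables, so it is not literally injective on the full product unless $X(s)=\mathbf x$. The honest reading, and the one the rest of the paper uses, is that injectivity means $\langle s,a\rangle$ has at most one element for each $a$, equivalently that the induced map on the occurring variables is injective; under the standing conventions for justifications one typically takes $\mathbf x$ to enumerate exactly the occurring variables, so this is a non-issue. I expect this bookkeeping to be the main (and only) obstacle; the structural induction itself is entirely routine given unique readability of terms in $\mathfrak T_{L,X}$.
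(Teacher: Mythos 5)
Your proof is correct and follows the paper's route: the paper's entire proof of this fact is the single line ``a direct consequence of the Uniqueness Lemma,'' which is exactly your second paragraph, while your structural induction merely supplies the injectivity claim that the paper asserts without proof. The only blemish is the literal assertion $\mathbbm 1(s)=T_{L,X}$, which fails for non-surjective $s$ since $\mathbbm 1_{\mathfrak A}(s)$ requires $|\langle s,a\rangle|=1$ rather than $\leq 1$; what you actually need, and have already established, is that $a\in\mathbbm 1(s)$ whenever $\langle s,a\rangle\neq\emptyset$, which the hypothesis $s\to t\in\ \uparrow(a\to b\righttherefore c\to d)$ guarantees for each of $a,b,c,d$.
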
 
\begin{proof} A direct consequence of the Uniqueness \prettyref{l:UL}.
\end{proof}

\begin{example} The following example shows that there may be terms $p,q,r,u,u'$, $u\neq u'$, such that
\begin{align*} 
    p\to q \righttherefore r\to u \quad\text{and}\quad p\to q \righttherefore r\to u'
\end{align*} both hold in $\mathfrak T_{L,X}$:\footnote{Here we have joined two diagrams into one for brevity separated by semicolons.}
\begin{center}
\begin{tikzpicture}[node distance=1cm and 0.5cm]
\node (a)               {$f(a,a,a)$};
\node (d1) [right=of a] {$\to$};
\node (b) [right=of d1] {$f(a,a,a)$};
\node (d2) [right=of b] {$\righttherefore $};
\node (c) [right=of d2] {$f(a,b,c)$};
\node (d3) [right=of c] {$\to$};
\node (d) [right=of d3] {$f(a,c,b);f(c,b,a)$.};
\node (s) [below=of b] {$f(a,x,y)$};
\node (t) [above=of c] {$f(a,y,x);f(y,x,a)$};

\draw (a) to [edge label'={$(x,y)/(a,a)$}] (s); 
\draw (c) to [edge label={$(x,y)/(b,c)$}] (s);
\draw (b) to [edge label={$(x,y)/(a,a)$}] (t);
\draw (d) to [edge label'={$(x,y)/(b,c)$}] (t);
\end{tikzpicture}
\end{center} The diagram is a compact representation of the fact that
\begin{align*} 
    f(a,x,y)\to f(a,y,x)
\end{align*} is a justification of 
\begin{align*} 
    f(a,a,a)\to f(a,a,a) \righttherefore f(a,b,c)\to f(a,c,b)
\end{align*} and
\begin{align*} 
    f(a,x,y)\to f(y,x,a)
\end{align*} is a justification of
\begin{align*} 
    f(a,a,a)\to f(a,a,a) \righttherefore f(a,c,b)\to f(c,b,a),
\end{align*} and \prettyref{f:char} shows that they are in fact characteristic ones.
\end{example}

Given some $L$-term $p$ and $s\in\ \uparrow p$, by \prettyref{f:char} there is always a \textit{unique} $\textbf{o}\in T_{L,X}^{r(s)}$ such that $p=s(\textbf{o})$ which we will denote by $o(s,p)$, that is,
\begin{align*} 
   s(o(s,p))=p.
\end{align*} Moreover, define, for some $\textbf{o}\in T_{L,X}^k$, $k\geq 1$,
\begin{align*} 
    \uparrow^\textbf{o} q:=\{t\in T_{L,X}\mid q=t(\textbf{o})\}.
\end{align*}

We are now ready to prove a simple characterization of solutions to proportional term equations:

\begin{theorem}\label{t:Sol} $\mathscr S(p\to q \righttherefore r\to \mathfrak x)=\bigcup_{s\in\uparrow(p\Uparrow_\chi r)}(\uparrow^{o(s,p)} q)(o(s,r))$.
\end{theorem}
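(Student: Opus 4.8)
The plan is to show that the left-hand side $\mathscr S(p\to q\righttherefore r\to\mathfrak x)$ equals the set of all $u\in T_{L,X}$ for which $p\to q\righttherefore r\to u$ admits a justification in $\mathfrak T_{L,X}$, and then to check that this coincides with the right-hand side by parametrizing those justifications.

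First I would dispose of the definitional subtleties of $\righttherefore$ in the term algebra. Since $\mathfrak T_{L,X}$ has no trivial justifications we have $\emptyset_{\mathfrak T_{L,X}}=\emptyset$, and the arrow $p\to q$ is always justified by the rule $p\to q$ itself, so clause (a) of \prettyref{d:abcd}(3) never applies and only clause (b) matters. Moreover, by \prettyref{f:char} every justification of an arrow proportion in the term algebra is characteristic, from which I would deduce that a non-empty $\uparrow(p\to q\righttherefore r\to u)$ is automatically $u$-maximal: if $s\to t$ lies in it and also in $\uparrow(p\to q\righttherefore r\to u')$, then since $s\to t$ is characteristic $u'=u$, so the two sets coincide. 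Hence $p\to q\righttherefore r\to u$ holds iff it possesses at least one justification, i.e. $\mathscr S(p\to q\righttherefore r\to\mathfrak x)=\{u\mid\ \uparrow(p\to q\righttherefore r\to u)\neq\emptyset\}$.

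Next I would parametrize these justifications using injectivity of term functions (\prettyref{f:char}). A rewrite rule $s\to t$ justifies $p\to q$ exactly when $s$ is a generalization of $p$ — so that the tuple $o(s,p)$ with $s(o(s,p))=p$ exists and is unique — and $q=t(o(s,p))$, i.e. $t\in\ \uparrow^{o(s,p)}q$. Since $\uparrow(p\to q\righttherefore r\to u)=\ \uparrow(p\to q)\ \cap\ \uparrow(r\to u)$, such a rule extends to a justification of $r\to u$ for some $u$ exactly when, in addition, $s$ is a generalization of $r$, and then the target is forced to $u=t(o(s,r))$. Hence the set of admissible $u$ equals $\bigcup_s(\uparrow^{o(s,p)}q)(o(s,r))$, the union ranging over all common generalizations $s$ of $p$ and $r$. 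It remains to identify this index set with $\uparrow(p\Uparrow_\chi r)$: the inclusion $\uparrow(p\Uparrow_\chi r)\subseteq\ \uparrow p\ \cap\ \uparrow r$ follows from $p,r\lesssim p\Uparrow_\chi r$ together with transitivity of $\lesssim$, while the reverse inclusion is precisely the defining property of the least general generalization. Substituting gives the claimed formula.

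The step I expect to require the most care is the variable-condition bookkeeping behind the parametrization: one must check that the terms $t$ gathered in $\uparrow^{o(s,p)}q$ are exactly the admissible right-hand sides of rewrite rules over the variables of $s$, so that each pair $(s,t)$ genuinely corresponds to a rule $s\to t\in J_{L,X}$ and conversely — together with the trivial edge case where $s$ is ground. Everything else is a routine unwinding of the definitions.
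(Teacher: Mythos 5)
Your proof is correct and follows essentially the same route as the paper's: reduce solvability to the existence of a single justification via \prettyref{f:char} (every justification in $\mathfrak T_{L,X}$ is characteristic, so one justification already forces $u$-maximality), parametrize the justifications $s\to t$ by common generalizations $s\in\ \uparrow(p\Uparrow_\chi r)$ of $p$ and $r$ together with $t\in\ \uparrow^{o(s,p)}q$, and read off $u=t(o(s,r))$. If anything, you are more explicit than the paper's one-paragraph argument about why clause (a) of \prettyref{d:abcd} can be dispensed with and about the variable-condition bookkeeping $X(t)\subseteq X(s)$, both of which the paper leaves implicit.
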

\begin{proof} Every justification $s\to t$ of $p\to q \righttherefore r\to u$ has the following form:
\begin{center}
\begin{tikzpicture}[node distance=1cm and 0.5cm]
\node (a)               {$p$};
\node (d1) [right=of a] {$\to$};
\node (b) [right=of d1] {$q$};
\node (d2) [right=of b] {$\righttherefore $};
\node (c) [right=of d2] {$r$};
\node (d3) [right=of c] {$\to$};
\node (d) [right=of d3] {$u$};
\node (d4) [right=of d] {$=t(o(s,r))$.};
\node (s) [below=of b] {$p\Uparrow_\chi r$};
\node (s') [below=of s] {$s$};
\node (t) [above=of c,yshift=1cm] {$t$};
\node (t') [right=of t,xshift=-0.5cm] {$\in\ \uparrow^{o(s,p)}q$};

\draw[dashed] (a) to  (s); 
\draw (a) to [edge label'={$o(s,p)$}] (s'); 
\draw[dashed] (c) to (s);
\draw[dashed] (s') to (s);
\draw (c) to [edge label={$o(s,r)$}] (s');
\draw (b) to [edge label={$o(s,p)$}] (t);
\draw (d) to [edge label'={$o(s,r)$}] (t);
\end{tikzpicture}
\end{center} Since every justification is a characteristic one by \prettyref{f:char}, every solution $u$ to $p\to q \righttherefore r\to \mathfrak x$ has the form $u=t(o(s,r))$, for some $s\in\ \uparrow(p\Uparrow_\chi r)$ and $t\in\ \uparrow^{o(s,p)}q$.
\end{proof}

For a set of terms $S$ and a term $s$, define
\begin{align*} 
    X_S(s):=X(s)- X(S).
\end{align*} Moreover, we write $s\langle p_i/q_i\mid i\in I\rangle$ for the term which we obtain from $s$ by replacing one or more occurrences of the subterm $p_i$ in $s$ by $q_i$, for every $i\in I$ (this is different from substitutions which replace \textit{all} occurrences of \textit{variables} at once). Notice that we have $s\langle\;\rangle=s$.

\begin{lemma}\label{l:u=r} $u=r\langle p/q\mid p\chi q\in X_{\{p,q,r,u\}}(r\Uparrow_\chi u)\rangle$.
\end{lemma}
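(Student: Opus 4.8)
The plan is to unwind the recursive definition of $w:=r\Uparrow_\chi u$ and read off, position by position, how $u$ differs from $r$. Writing $t|_\pi$ for the subterm of $t$ at position $\pi$, I would prove by induction on the recursion computing $w$ that the positions of $w$ carrying a variable are pairwise incomparable, that the variable at such a position $\pi$ is $\chi(r|_\pi,u|_\pi)$, and that $u$ is obtained from $r$ by simultaneously replacing, at each such $\pi$, the occurrence of $r|_\pi$ there by $u|_\pi$. In the inductive step, if $r=f(r_1,\dots,r_n)$ and $u=f(u_1,\dots,u_n)$ share the head $f$ then $w=f(r_1\Uparrow_\chi u_1,\dots,r_n\Uparrow_\chi u_n)$ and everything reduces componentwise to the inductive hypotheses; otherwise $w=r\chi u=\chi(r,u)$ is one variable at the root, $r$ and $u$ are the recorded subterms, and the single replacement turns $r$ into $u$. (Incomparability holds since variable positions of $w$ are leaves of $w$, so the replacement is unambiguous.)

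It remains to identify the index set $\{p/q\mid p\chi q\in X_{\{p,q,r,u\}}(w)\}$ of the bracket expression in the statement. A variable can enter $w$ only through the clause $p\chi q=\chi(p,q)$ --- matching function symbols are retained, and two matching constants become that constant via $a\Uparrow_\chi a=a$ --- so $X(w)$ lies in the image of $\chi$, and by injectivity of $\chi$ each $x\in X(w)$ has a unique preimage $(p',q')$ with $x=\chi(p',q')$, which by the first paragraph equals $(r|_\pi,u|_\pi)$ for the position $\pi$ of $x$ in $w$. Since $\chi$ takes fresh values --- in particular, none of them occurs in $p$, $q$, $r$, or $u$ --- we have $X_{\{p,q,r,u\}}(w)=X(w)$, so the pairs $p'/q'$ indexed by $p'\chi q'\in X_{\{p,q,r,u\}}(w)$ are precisely the pairs $(r|_\pi,u|_\pi)$ over the variable positions $\pi$ of $w$. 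Performing exactly the replacements of the first paragraph is therefore a legitimate instance of the operation $r\langle\,p'/q'\mid p'\chi q'\in X_{\{p,q,r,u\}}(w)\,\rangle$, and by the first paragraph it produces $u$.

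The main obstacle, I expect, is the semantics of the bracket operation rather than anything deep: a subterm $r|_\pi$ may occur in $r$ also outside the variable positions of $w$, and distinct variable positions of $w$ may carry equal $r$-subterms but different $u$-subterms, so one must replace precisely the occurrences at the variable positions of $w$, not all occurrences of a given subterm; correspondingly the equality in the statement should be read as ``$u$ is among the terms obtainable from $r$ by these replacements''. The induction of the first paragraph is exactly what certifies that this specific choice of occurrences carries $r$ to $u$; the rest is a routine unfolding of the definitions of $\Uparrow_\chi$, $X_S(\cdot)$, and the bracket notation.
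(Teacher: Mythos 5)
Your proof is correct and takes essentially the same route as the paper's: a structural induction following the recursion that computes $r\Uparrow_\chi u$, where the base clause $p\chi q$ records exactly the subterm pairs to be swapped, and the function-symbol case reduces componentwise; your position-based bookkeeping (incomparability of variable positions, reading the equality as ``$u$ is obtainable from $r$ by replacing the occurrences at those positions'') just makes explicit what the paper's case analysis leaves implicit. The only minor divergence is your claim $X_{\{p,q,r,u\}}(r\Uparrow_\chi u)=X(r\Uparrow_\chi u)$ via freshness of $\chi$: the paper assumes only injectivity and in its base case $r=u=x$ treats $x\Uparrow_\chi x$ as $x$, so that the subtraction of $X(\{p,q,r,u\})$ does real work there --- but since any excluded variable can only sit at a position where $r$ and $u$ agree, the corresponding replacement is the identity and the discrepancy is harmless.
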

\begin{proof} By structural induction on the shape of $u$ and $r$:
\begin{itemize}
    \item $u=a\in L_0$:
        \begin{itemize}
            \item $r=a$: $X_{\{p,q,r,u\}}(r\Uparrow_\chi u)=\emptyset$ and $a=a\langle\;\rangle$.
            \item $r\neq a$: $X_{\{p,q,r,u\}}(r\Uparrow_\chi u)=\{r\chi a\}$ and $a=r\langle r/a\rangle$.
        \end{itemize}
    \item $u=x\in X$:
        \begin{itemize}
            \item $r=x$: $X_{\{p,q,r,u\}}(r\Uparrow_\chi u)=X_{\{p,q,r,u\}}(x\Uparrow_\chi x)=X_{\{p,q,r,u\}}(x)=\emptyset$ and $x=x\langle\;\rangle$.
            \item $r\neq x$: $X_{\{p,q,r,u\}}(r\Uparrow_\chi u)=\{r\chi x\}$ and $x=r\langle r/x\rangle$.
        \end{itemize}
    \item $u=f(u_1,\ldots,u_{r(f)})\in T_{L,X}$:
        \begin{itemize}
            \item $r=f(r_1,\ldots,r_{r(f)})$: By induction hypothesis, we have
            \begin{align*} 
                u_i=r_i\langle p/q\mid p\chi q\in X_{\{p,q,r,u\}}(u_i\Uparrow_\chi r_i)\rangle.
            \end{align*} This implies\todo{check}
            \begin{align*} 
                f&(u_1,\ldots,u_{r(f)})\\
                    &=f(r_1\langle p/q\mid p\chi q\in X(u_1\Uparrow_\chi r_1)\rangle,\ldots,r_{r(f)}\langle p/q\mid p\chi q\in X(u_{r(f)}\Uparrow_\chi r_{r(f)})\rangle)\\
                    &=f(r_1,\ldots,r_{r(f)})\left\langle p/q \;\middle|\;  p\chi q\in\bigcup_{i=1}^{r(f)}X_{\{p,q,r,u\}}(u_i\Uparrow_\chi r_i)\right\rangle.
            \end{align*}

            \item $r=g(r_1,\ldots,r_{rg})$: $X_{\{p,q,r,u\}}(r\Uparrow_\chi u)=\{r\chi u\}$ and $u=r\langle r/u\rangle$.
        \end{itemize}
\end{itemize}
\end{proof}

\begin{lemma}\label{l:ZX_iff_uparrow} $X_{\{p,q,r,u\}}(q\Uparrow_\chi u)\subseteq X_{\{p,q,r,u\}}(p\Uparrow_\chi r) \quad\Leftrightarrow\quad (p\Uparrow_\chi r)\to (q\Uparrow_\chi u)\in\ \uparrow(p\to q \righttherefore r\to u)$.
\end{lemma}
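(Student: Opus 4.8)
The plan is to unfold what it means for the least general generalizations to witness a justification of the arrow proportion $p\to q\righttherefore r\to u$, using the structural description of lggs together with the fact (from \prettyref{f:char}) that in the term algebra every justification is characteristic and every term $s\in\ \uparrow p$ has a unique witness $o(s,p)$. Write $s:=p\Uparrow_\chi r$ and $t:=q\Uparrow_\chi u$. The key observation is that $s$ is, by \citeS{Huet76} algorithm, exactly the term one obtains from $p$ by abstracting, at each ``clash'' position between $p$ and $r$, the disagreement pair into the variable $p'\chi r'$; equivalently $s=p\langle p'/(p'\chi r')\mid p'\chi r'\in X_{\{p,q,r,u\}}(p\Uparrow_\chi r)\rangle$ by the (mirror of) \prettyref{l:u=r}. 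Consequently $p=s(o(s,p))$ where $o(s,p)$ substitutes $p'$ for $p'\chi r'$, and $r=s(o(s,r))$ where $o(s,r)$ substitutes $r'$ for the same variable. Dually $q=t(o(t,q))$ and $u=t(o(t,u))$ with $t$ built from the clash positions of $q$ versus $u$.

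Next I would observe that $s\to t\in\ \uparrow(p\to q\righttherefore r\to u)$ means, by \prettyref{equ: 230824-langle} and \prettyref{f:char}, that there is a single substitution $\sigma$ with $p=s\sigma$ and $q=t\sigma$, and a single substitution $\tau$ with $r=s\tau$ and $u=t\tau$; moreover these substitutions are forced to be $\sigma=o(s,p)$ restricted appropriately and $\tau=o(s,r)$, since the witness is unique. The content of the lemma is therefore the assertion that $s\to t$ is a legitimate justification at all, i.e.\ that $q$ can actually be written as $t$ applied to the very same witness that turns $s$ into $p$, and likewise for $u,r$. For this to make sense we first need $t$ to be a well-formed right-hand side of a rewrite rule with left-hand side $s$, i.e.\ $X(t)\subseteq X(s)$, which is precisely the set inclusion $X_{\{p,q,r,u\}}(q\Uparrow_\chi u)\subseteq X_{\{p,q,r,u\}}(p\Uparrow_\chi r)$ after noting that the variables introduced into $t$ by the anti-unification are exactly $X_{\{p,q,r,u\}}(q\Uparrow_\chi u)$ (the $\chi$-images of the clash pairs), and similarly for $s$.

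So the proof splits into the two directions. For $(\Leftarrow)$: assuming $s\to t\in\ \uparrow(p\to q\righttherefore r\to u)$, the rule $s\to t$ is in particular an $L$-justification, hence by our standing Convention on rewrite rules $X(t)\subseteq X(s)$; translating via the identification of $X(t)$ with $X_{\{p,q,r,u\}}(q\Uparrow_\chi u)$ and $X(s)$ with $X_{\{p,q,r,u\}}(p\Uparrow_\chi r)$ gives the inclusion. For $(\Rightarrow)$: assuming the inclusion, I would exhibit $\sigma:=o(s,p)$ explicitly --- it sends each variable $p'\chi r'$ (for a clash pair $p',r'$ of $p$ versus $r$) to $p'$ --- and check $p=s\sigma$ (immediate from the structural description of $s$) and $q=t\sigma$. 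The latter is where the inclusion is used: every variable occurring in $t$ is some clash variable of $q$ versus $u$, which by hypothesis is also a clash variable of $p$ versus $r$, hence lies in the domain of $\sigma$ and, by comparing the subterms of $q$ and $p$ at the corresponding positions (they agree off the clash positions because $q\Uparrow_\chi u$ and $p\Uparrow_\chi r$ have the ``same'' non-variable skeleton relative to the four-term variable set), $t\sigma$ reconstructs $q$; symmetrically with $\tau:=o(s,r)$ one gets $r=s\tau$ and $u=t\tau$. Then \prettyref{equ: 230824-langle} yields $s\to t\in\ \uparrow(p\to q\righttherefore r\to u)$.

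The main obstacle I expect is the bookkeeping in the $(\Rightarrow)$ direction: one must argue carefully that the non-variable part of $q$ relative to $u$ lines up positionally with the non-variable part of $p$ relative to $r$ once the variable-index set inclusion holds, so that the single substitution $o(s,p)$ simultaneously produces $p$ from $s$ and $q$ from $t$. This is most cleanly handled by a structural induction parallel to the one in \prettyref{l:u=r}, treating the leaf cases (constants, variables) and the case of a common head symbol $f$, and observing that a head clash $r=g(\ldots)$ against $p=f(\ldots)$ forces the corresponding position to already be a shared clash variable, so the inductive skeletons cannot diverge.
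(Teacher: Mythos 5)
Your proposal is correct and follows essentially the same route as the paper: the right-to-left direction is immediate from the convention that a rewrite rule $s\to t$ must satisfy $X(t)\subseteq X(s)$, and the left-to-right direction is established by a nested structural induction on $p,q,r,u$ (parallel to \prettyref{l:u=r}) showing that the witness substitution turning $p\Uparrow_\chi r$ into $p$ (resp.\ $r$) simultaneously turns $q\Uparrow_\chi u$ into $q$ (resp.\ $u$). The only caveat is your passing identification of $X(t)$ with $X_{\{p,q,r,u\}}(q\Uparrow_\chi u)$ --- these can differ by variables already occurring in the four terms --- but the inclusion you need survives subtracting $X(\{p,q,r,u\})$ from both sides, so nothing breaks.
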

\begin{proof} The direction from right to left holds by definition of justifications. We prove the other direction by nested structural induction on the shape of $p,q,r,u$:
\begin{itemize}
    \item $p=a\in L_0$:
        \begin{itemize}
            \item $r=a$: $X_{\{p,q,r,u\}}(p\Uparrow_\chi r)=X_{\{p,q,r,u\}}(a\Uparrow_\chi a)=\emptyset$ implies $X_{\{p,q,r,u\}}(q\Uparrow_\chi u)=\emptyset$ implies $q=u$. We then have
            \begin{align*} 
                (p\Uparrow_\chi r)\to (q\Uparrow_\chi u)=a\to q\in\ \uparrow(a\to q \righttherefore a\to q).
            \end{align*}

            \item $r\neq a$: $X_{\{p,q,r,u\}}(p\Uparrow_\chi r)=\{ a\chi r\}$:
                \begin{itemize}
                    \item $X_{\{p,q,r,u\}}(q\Uparrow_\chi u)=\emptyset$ implies $q=u$. Hence,
                    \begin{align*} 
                        (p\Uparrow_\chi r)\to (q\Uparrow_\chi u)= a\chi r\to q\in\ \uparrow(a\to q \righttherefore r\to q).
                    \end{align*}

                    \item $X_{\{p,q,r,u\}}(q\Uparrow_\chi u)=\{ a\chi r\}$ implies $u=q\langle s/t\mid s\chi t\in X_{\{p,q,r,u\}}(q\Uparrow_\chi u)\rangle=q\langle a/r\rangle$ (\prettyref{l:u=r}). Hence,\todo{check}
                    \begin{align*} 
                        (p\Uparrow_\chi r)\to (q\Uparrow_\chi u)= a\chi r\to q\langle a/ a\chi r\rangle\in\ \uparrow(a\to q \righttherefore r\to q\langle a/r\rangle).
                    \end{align*}
                \end{itemize}
        \end{itemize}
        \item $p=x\in X$: $X_{\{p,q,r,u\}}(p\Uparrow_\chi r)=\{ x\chi r\}$:
            \begin{itemize}
                \item $X_{\{p,q,r,u\}}(q\Uparrow_\chi u)=\emptyset$ implies $q=u$ and
                \begin{align*} 
                    (p\Uparrow_\chi r)\to (q\Uparrow_\chi u)= x\chi r\to q\in\ \uparrow(x\to q \righttherefore r\to q).
                \end{align*}

                \item $X_{\{p,q,r,u\}}(q\Uparrow_\chi u)=\{ x\chi r\}$ implies $u=q\langle s/t\mid s\chi t\in X_{\{p,q,r,u\}}(q\Uparrow_\chi u)\rangle=q\langle x/r\rangle$ (\prettyref{l:u=r}). Hence,
                \begin{align*} 
                    (p\Uparrow_\chi r)\to (q\Uparrow_\chi u)=x\to q\langle x/ x\chi r\rangle\in\ \uparrow(x\to q \righttherefore r\to q\langle x/r\rangle).
                \end{align*}
            \end{itemize}
        \item $p=f(p_1,\ldots,p_{r(f)})\in T_{L,X}$:
            \begin{itemize}
                \item $r=f(r_1,\ldots,r_{r(f)})$: 

                \begin{itemize}
                    \item $q=a\in L_0$:
                        \begin{itemize}
                            \item $u=a\in L_0$: $(p\Uparrow_\chi r)\to (q\Uparrow_\chi u)=p\Uparrow_\chi r\to a\in\ \uparrow(p\to a \righttherefore r\to a)$.

                            \item $u\neq a$: $X_{\{p,q,r,u\}}(a\Uparrow_\chi u)=\{ a\chi u\}\subseteq X_{\{p,q,r,u\}}(p\Uparrow_\chi r)$ implies $ a\chi u\in X_{\{p,q,r,u\}}(p\Uparrow_\chi r)$. By induction hypothesis, we have $$p_i\Uparrow_\chi r_i\to a\chi u\in\ \uparrow(p_i\to a \righttherefore r_i\to u),$$ for all $1\leq i\leq r(f)$. This holds iff 
                            \begin{align*} 
                                p_i&=(p_i\Uparrow_\chi r_i)(\textbf{o}_i),\\
                                r_i&=(p_i\Uparrow_\chi r_i)(\textbf{o}'_i),
                            \end{align*} for some $\textbf{o}_i=(o_{i,1},\ldots,o_{i,|X_{...}(p_i\Uparrow_\chi r_i)|})\in T_{L,X}^{|X_{...}(p_i\Uparrow_\chi r_i)|}$\todo{``...'' in index} such that $o_{i,j}= a\chi u$, for some $1\leq j\leq |X_{...}(p_i\Uparrow_\chi r_i)|$\todo{``...'' in index}. We then have:
                            \begin{center}
                            \begin{tikzpicture}[node distance=1.5cm and 0cm]
                                \node (a)               {$f(p_1,\ldots,p_{r(f)})$};
                                \node (d1) [right=of a] {$\to$};
                                \node (b) [right=of d1] {$a$};
                                \node (d2) [right=of b] {$\righttherefore $};
                                \node (c) [right=of d2] {$f(r_1,\ldots,r_{r(f)})$};
                                \node (d3) [right=of c] {$\to$};
                                \node (d) [right=of d3] {$u$.};
                                \node (s) [below=of b] {$f(p_1\Uparrow_\chi r_1,\ldots,p_{r(f)}\Uparrow_\chi r_{r(f)})(x\mid x\in X_{\{p,q,r,u\}}(p_i\Uparrow_\chi r_i),1\leq i\leq r(f))$};
                                \node (t) [above=of c] {$ a\chi u$};

                                \draw (a) to [edge label'={$\mathbf x/(\textbf{o}_1,\ldots,\textbf{o}_{r(f)})$}] (s); 
                                \draw (c) to [edge label={$\mathbf x/(\textbf{o}_1',\ldots,\textbf{o}_{r(f)}')$}] (s);
                                \draw (b) to [edge label={$\mathbf x/(\textbf{o}_1,\ldots,\textbf{o}_{r(f)})$}] (t);
                                \draw (d) to [edge label'={$\mathbf x/(\textbf{o}_1',\ldots,\textbf{o}_{r(f)}')$}] (t);
                            \end{tikzpicture}
                            \end{center} This shows
                            \begin{align*} 
                                (p\Uparrow_\chi r)\to (q\Uparrow_\chi u)=f(p_1\Uparrow_\chi r_1,\ldots,p_{r(f)}\Uparrow_\chi r_{r(f)})\to a\chi u\in\ \uparrow(p\to a \righttherefore r\to u).
                            \end{align*}
                        \end{itemize}

                        \item $q=x\in X$:
                            \begin{itemize}
                                \item $u=x$: $(p\Uparrow_\chi r)\to x\in\ \uparrow(p\to z \righttherefore r\to z)$.
                                \item $u\neq x$: $X_{\{p,q,r,u\}}(q\Uparrow_\chi u)=\{ x\chi u\}\subseteq X_{\{p,q,r,u\}}(p\Uparrow_\chi r)$ implies $ x\chi u\in X_{\{p,q,r,u\}}(p\Uparrow_\chi r)$. Now proceed as in the case ``$u\neq a$'' above.
                            \end{itemize}

                        \item $q=g(q_1,\ldots,q_{rg})$:
                            \begin{itemize}
                                \item $u=q$: trivial.
                                \item $u\neq q$: $X_{\{p,q,r,u\}}(q\Uparrow_\chi u)=\{ q\chi u\}\subseteq X_{\{p,q,r,u\}}(p\Uparrow_\chi r)$ implies $ q\chi u\in X_{\{p,q,r,u\}}(p\Uparrow_\chi r)$. Now proceed as in the case ``$u\neq a$'' above.
                            \end{itemize}

                    \item $r=g(r_1,\ldots,r_{rr})$:\footnote{Notice that $r$ means here two things: (i) an $L$-term; and (ii) the rank function. Hence, $rr$ stands for the rank of the term $r$.} $p\Uparrow_\chi r= p\chi r$ and $X_{\{p,q,r,u\}}(q\Uparrow_\chi u)\subseteq\{ p\chi r\}$. We distinguish two cases:
                        \begin{enumerate}
                            \item $X_{\{p,q,r,u\}}=\emptyset$ implies $q=u$.
                            \item $X_{\{p,q,r,u\}}=\{ p\chi r\}$ implies $u=q\langle p/r\rangle$ (\prettyref{l:u=r}). Hence,
                            \begin{align*} 
                                 p\chi r\to q\langle p/ p\chi r\rangle\in\ \uparrow(p\to q \righttherefore r\to u).
                            \end{align*}
                        \end{enumerate}
                \end{itemize}
            \end{itemize}
\end{itemize}
\end{proof}

\begin{lemma}\label{l:p_q_r_u} $X_{\{p,q,r,u\}}(q\Uparrow_\chi u)\subseteq X_{\{p,q,r,u\}}(p\Uparrow_\chi r)$ implies $p\to q \righttherefore_{\mathfrak T_{L,X}}\, r\to u$.\todo{iff?}
\end{lemma}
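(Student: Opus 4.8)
The plan is to deduce the arrow proportion directly from the first implication of the Uniqueness \prettyref{l:UL}, feeding it a justification supplied by \prettyref{l:ZX_iff_uparrow} and discharging the side condition with \prettyref{f:char}.

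First I would abbreviate $s:=p\Uparrow_\chi r$ and $t:=q\Uparrow_\chi u$. The hypothesis $X_{\{p,q,r,u\}}(q\Uparrow_\chi u)\subseteq X_{\{p,q,r,u\}}(p\Uparrow_\chi r)$ together with \prettyref{l:ZX_iff_uparrow} gives $s\to t\in\ \uparrow_{\mathfrak T_{L,X}}(p\to q\righttherefore r\to u)$; in particular $s\to t$ is a genuine $L$-justification of this arrow proportion, so the condition $X(t)\subseteq X(s)$ holds automatically.

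Next I would verify the side condition $r\in\mathbbm 1_{\mathfrak T_{L,X}}(s)$ needed to apply the first part of the Uniqueness \prettyref{l:UL} (with $\mathfrak A=\mathfrak B=\mathfrak T_{L,X}$, the arrow $p\to q$ in the first position, $r\to u$ in the second, and the justification $s\to t$). Since $s=p\Uparrow_\chi r$ is by definition a generalization of $r$, the set $\langle s,r\rangle_{\mathfrak T_{L,X}}$ is non-empty --- it contains the tuple $o(s,r)$ --- and since every term function is injective in $\mathfrak T_{L,X}$ by \prettyref{f:char}, this set has at most one element. Hence $|\langle s,r\rangle_{\mathfrak T_{L,X}}|=1$, that is, $r\in\mathbbm 1_{\mathfrak T_{L,X}}(s)$. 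The Uniqueness \prettyref{l:UL} then yields $p\to q\righttherefore_{\mathfrak T_{L,X}}\,r\to u$, as desired.

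I do not expect a real obstacle here: the substantive work has already been done in \prettyref{l:ZX_iff_uparrow} and \prettyref{f:char}, and the remainder is bookkeeping. The only point deserving a moment's attention is the non-emptiness of $\langle s,r\rangle_{\mathfrak T_{L,X}}$, which is immediate from the fact that the least general generalization $p\Uparrow_\chi r$ is a common generalization of $p$ and $r$; the ``at most one'' half of the singleton claim is exactly what \prettyref{f:char} provides.
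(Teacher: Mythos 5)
Your proposal is correct and matches the paper's argument, which simply cites \prettyref{f:char} and \prettyref{l:ZX_iff_uparrow}: you have merely unfolded \prettyref{f:char} into its underlying ingredients (injectivity of term functions in $\mathfrak T_{L,X}$ plus the first implication of the Uniqueness \prettyref{l:UL}), and your extra remark that $\langle p\Uparrow_\chi r,\,r\rangle$ is non-empty is already guaranteed by the membership $s\to t\in\ \uparrow(p\to q\righttherefore r\to u)$ supplied by \prettyref{l:ZX_iff_uparrow}. No gap; this is the paper's proof made explicit.
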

\begin{proof} A direct consequence of \prettyref{f:char} and \prettyref{l:ZX_iff_uparrow}.
\end{proof}

We have thus arrived at the following sufficient condition for tree proportions:

\begin{theorem}\label{t:pqru} $X_{\{p,q,r,u\}}(q\Uparrow_\chi u)=X_{\{p,q,r,u\}}(p\Uparrow_\chi r)$ implies $p:q::_{\mathfrak T_{L,X}}r:u$.\todo{iff?}
\end{theorem}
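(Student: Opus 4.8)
The plan is to unfold the definition of $p:q::_{\mathfrak T_{L,X}}r:u$ and to obtain each of the required arrow proportions from \prettyref{l:p_q_r_u}. By \prettyref{d:abcd}(4), since here both underlying algebras are $\mathfrak T_{L,X}$, the proportion $p:q::_{\mathfrak T_{L,X}}r:u$ is precisely the conjunction of the four arrow proportions
\[
    p\to q\righttherefore r\to u,\quad q\to p\righttherefore u\to r,\quad r\to u\righttherefore p\to q,\quad u\to r\righttherefore q\to p
\]
in $\mathfrak T_{L,X}$. Put $S:=\{p,q,r,u\}$; the hypothesis is the \emph{equality} $X_S(q\Uparrow_\chi u)=X_S(p\Uparrow_\chi r)$, so both inclusions are available.

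For the first two arrow proportions I would simply invoke \prettyref{l:p_q_r_u}, using that $X_S$ depends only on the underlying set $S$, not on the order in which its members are listed. Instantiating the lemma at the quadruple $(p,q,r,u)$ with the inclusion $X_S(q\Uparrow_\chi u)\subseteq X_S(p\Uparrow_\chi r)$ gives $p\to q\righttherefore r\to u$; instantiating it at $(q,p,u,r)$ with the reverse inclusion gives $q\to p\righttherefore u\to r$.

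To get the remaining two, \prettyref{l:p_q_r_u} instantiated at $(r,u,p,q)$ and at $(u,r,q,p)$ asks instead for the companion equality $X_S(u\Uparrow_\chi q)=X_S(r\Uparrow_\chi p)$, that is, the hypothesis with the two arguments of each least general generalization transposed. Establishing this transfer is the step I expect to be the main obstacle. The idea is that \citeS{Huet76} algorithm run on $(v,w)$ and on $(w,v)$ produces the same term except that each introduced variable $v'\chi w'$ is replaced by $w'\chi v'$ at the corresponding position; by injectivity of $\chi$ the assignment $v'\chi w'\mapsto w'\chi v'$ is a well-defined involution on variables, and it acts consistently on $p\Uparrow_\chi r$ and on $q\Uparrow_\chi u$ (a variable occurring in both forces the same underlying disagreement pair, again by injectivity of $\chi$). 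Hence it restricts to bijections $X(p\Uparrow_\chi r)\leftrightarrow X(r\Uparrow_\chi p)$ and $X(q\Uparrow_\chi u)\leftrightarrow X(u\Uparrow_\chi q)$ which, under the customary convention that $\chi$ takes its values outside the variables of $p,q,r,u$ (the usual fresh-variable setting for anti-unification), coincide with the corresponding $X_S(\cdot)$ sets. Applying the involution to the hypothesis then yields $X_S(u\Uparrow_\chi q)=X_S(r\Uparrow_\chi p)$, and feeding its two inclusions into \prettyref{l:p_q_r_u} produces $r\to u\righttherefore p\to q$ and $u\to r\righttherefore q\to p$. Together with the first two, this establishes all four arrow proportions, hence $p:q::_{\mathfrak T_{L,X}}r:u$.
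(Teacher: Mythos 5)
Your proof is correct and follows the same route as the paper, which simply reads the theorem off from Lemma \ref{l:p_q_r_u} together with the decomposition of $p:q::r:u$ into the four arrow proportions of Definition \ref{d:abcd}(4). The one place you go beyond the paper --- the involution $v\chi w\mapsto w\chi v$ used to convert the hypothesis into $X_{\{p,q,r,u\}}(u\Uparrow_\chi q)=X_{\{p,q,r,u\}}(r\Uparrow_\chi p)$ for the two reversed arrow proportions --- is a genuine detail that the paper's one-line proof leaves implicit, and your treatment of it (relying on injectivity and freshness of $\chi$) is sound.
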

\begin{proof} A direct consequence of \prettyref{l:p_q_r_u} and the definition of a tree proportion in terms of arrow proportions.
\end{proof}


\begin{corollary} For every term function $f:T_{L,X}\to T_{L,X}$,
\begin{align*} 
    p:q::_{\mathfrak T_{L,X}} r:u \quad\Leftrightarrow\quad f(p):f(q)::_{\mathfrak T_{L,X}} f(r):f(u).
\end{align*}
\end{corollary}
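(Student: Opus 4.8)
The plan is to route everything through the least-general-generalization criterion of \prettyref{t:pqru}, after observing that passing from $p,q,r,u$ to $f(p),f(q),f(r),f(u)$ leaves that criterion literally unchanged. Write $f=f(x)$ for the term inducing the term function $f$; since $f$ takes a single argument, $x$ is the unique variable of $f$ and in particular occurs in $f$ (for a constant $f$ the $(\Leftarrow)$-direction would already fail, so this is implicit in calling $f$ a term function $T_{L,X}\to T_{L,X}$).

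The one computation that does any work is the identity $f(w_1)\Uparrow_\chi f(w_2)=f[x:=(w_1\Uparrow_\chi w_2)]$, valid for all terms $w_1,w_2$, which follows by induction along the recursive definition of $\Uparrow_\chi$: descending through the common scaffold of $f$ the two arguments agree symbol by symbol, and at each occurrence of $x$ they read off $w_1$ and $w_2$, whose generalization is $w_1\Uparrow_\chi w_2$ — the same term at every occurrence, since $\chi$ is a fixed (deterministic) mapping. Together with $X(f(w))=(X(f)\setminus\{x\})\cup X(w)$ this yields $X_{\{f(p),f(q),f(r),f(u)\}}(f(w_1)\Uparrow_\chi f(w_2))=X_{\{p,q,r,u\}}(w_1\Uparrow_\chi w_2)$ for all $w_1,w_2\in\{p,q,r,u\}$, because the variables of $f$ distinct from $x$ occur in all four of $f(p),f(q),f(r),f(u)$ and are therefore cancelled by the $X_{\{\cdot\}}$-operation (and no $\chi$-generated variable coincides with one of them).

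Now I invoke \prettyref{t:pqru} in the form of an equivalence: $p:q::_{\mathfrak T_{L,X}}r:u$ holds iff $X_{\{p,q,r,u\}}(q\Uparrow_\chi u)=X_{\{p,q,r,u\}}(p\Uparrow_\chi r)$. (The implication stated in the excerpt is one direction; the converse follows from \prettyref{l:ZX_iff_uparrow} together with \prettyref{f:char}, which makes every justification in $\mathfrak T_{L,X}$ characteristic, so that the required $d$-maximality is automatic whenever a non-trivial justification exists, plus the observation that $\uparrow(p\to q)=\emptyset$ iff $\uparrow(f(p)\to f(q))=\emptyset$ to deal with the vacuous clause.) Applying this equivalence twice — once to $p,q,r,u$ and once to $f(p),f(q),f(r),f(u)$ — and using the identity of the previous paragraph to see that the two resulting equations of sets of variables are the \emph{same} equation, we conclude $p:q::_{\mathfrak T_{L,X}}r:u\Leftrightarrow f(p):f(q)::_{\mathfrak T_{L,X}}f(r):f(u)$.

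The step I expect to be the real obstacle is the converse of \prettyref{t:pqru} (that the $\Uparrow_\chi$-condition is necessary, not merely sufficient, for a tree proportion), since only the forward implication appears in the excerpt. One can sidestep it by arguing directly on the four arrow proportions in the definition of $::$: for $(\Rightarrow)$, lift a characteristic justification $s\to t$ of each arrow proportion to $f[x:=s]\to f[x:=t]$, which is again a justification of the corresponding $f$-arrow and, by \prettyref{f:char} and the Uniqueness \prettyref{l:UL}, forces it; for $(\Leftarrow)$, descend a justification of an $f$-arrow through the scaffold of $f$ — cutting $f$ off at its $x$-positions, which in $f(p),f(q),f(r),f(u)$ carry the uniform subterms $p,q,r,u$ — back to a justification of the original arrow. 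In that route it is the descent in $(\Leftarrow)$, together with the bookkeeping for empty filters, that is delicate; the passage through \prettyref{t:pqru} packages all of it at once.
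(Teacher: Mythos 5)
Your main route is exactly the paper's own proof: derive the corollary from \prettyref{t:pqru} together with the invariance of the condition $X_{\{p,q,r,u\}}(q\Uparrow_\chi u)=X_{\{p,q,r,u\}}(p\Uparrow_\chi r)$ under $p,q,r,u\mapsto f(p),f(q),f(r),f(u)$, which you justify (more explicitly than the paper) via the commutation identity $f(w_1)\Uparrow_\chi f(w_2)=f[x:=(w_1\Uparrow_\chi w_2)]$. You are also right to flag the reliance on the \emph{converse} of \prettyref{t:pqru}: the paper states that theorem only as an implication (and itself marks the converse as open), yet its one-line proof of this corollary silently uses it as an equivalence, so your explicit acknowledgment of that gap and your fallback argument on the four arrow proportions make your version, if anything, more complete than the original.
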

\begin{proof} A direct consequence of \prettyref{t:pqru} and the fact that
\begin{align*} 
    X_{\{p,q,r,u\}}(q\Uparrow_\chi u) &= X_{\{p,q,r,u\}}(p\Uparrow_\chi r)\\ 
        \quad&\Leftrightarrow\quad X_{\{f(p),f(q),f(r),f(u)\}}(f(q)\Uparrow_\chi f(u))=X_{\{f(p),f(q),f(r),f(u)\}}(f(p)\Uparrow_\chi f(r)).
\end{align*}
\end{proof}

\todo[inline]{Example}

\section{Anti-unification}\label{§:AU}

In the previous section, we have seen that classical syntactic anti-unification can be used to compute tree proportions and motivated by that result, we initiate in this section the study of interactions between analogical proportions and anti-unification beyond the free term algebras. For this, we first need a notion of anti-unification generalized from term to arbitrary algebras which has been recently provided by the author \cite{Antic23-19} and which we shall now briefly recall:

\begin{definition} Define the set of \textit{\textbf{minimally general generalizations}} (or \textit{\textbf{mggs}}) of two elements $a\in A$ and $b\in B$ in $\mathfrak{(A,B)}$ by
\begin{align*} 
    a \Uparrow_{ \mathfrak{(A,B)}} b:=\min_{\sqsubseteq_{ \mathfrak{(A,B)}}}(a\uparrow_{ \mathfrak{(A,B)}} b),
\end{align*} where
\begin{align*} 
    a\uparrow_{ \mathfrak{(A,B)}} b:=(\uparrow_\mathfrak A a)\cap (\uparrow_\mathfrak B b)
\end{align*} and
\begin{align*} 
    \uparrow_ \mathfrak A a:=\left\{s\in T_{L,X} \;\middle|\; a=s^ \mathfrak A(\textbf{o}),\text{ for some $\textbf{o}\in A^{r(s)}$}\right\}.
\end{align*} In case $a\Uparrow_{ \mathfrak{(A,B)}} b=\{s\}$ contains a single generalization, we call $s$ the \textit{\textbf{least general generalization}} of $a$ and $b$ in $\mathfrak{(A,B)}$.
\end{definition}

We now wish to \textit{initiate} the study of connections between algebraic anti-unification and analogical proportions with an illustrative example:

\begin{example}\label{e:20_4_30_x} In \citeA[Example 66]{Antic22}, the author has computed the solutions of
\begin{align*} 
    20:4::30: \mathfrak x
\end{align*} in the multiplicative algebra $ \mathfrak M:=( \mathbb N_2,\cdot, \mathbb N_2)$, where $\mathbb N_2:=\{2,3,\ldots\}$, as
\begin{align*} 
    \mathscr S_ \mathfrak M(20:4::30: \mathfrak x)=\{6,9\}.
\end{align*} The two solutions are characteristically justified respectively by
\begin{align*} 
    10x\to 2x \quad\text{and}\quad 10x\to x^2.
\end{align*} We see that in both cases, the generalization $10x$ occurs on the left-hand side of the respective justifications --- this seems not to be an accident as we have
\begin{align*} 
    20\Uparrow 30=\{10x\}
\end{align*} as we are now going to show; in other words, $10x$ is the least general generalization of $20$ and $30$ in $\mathfrak M$.

Recall from \citeA[Example 66]{Antic22} that we have (we omit the subscript $\mathfrak M$)
\begin{align*}
    \uparrow 4&=\{4,2x,xy,x^2,x\},\\
    \uparrow 20&=\left\{
    \begin{array}{ccc}
        20 & 10x & 5x^2\\
         & 4x & 5xy\\
        2xy & & xyz\\
        2x & 5x & xy\\
        x^2y & x
    \end{array}
    \right\},\\
    \uparrow 30&=\left\{
    \begin{array}{ccc}
        30 & 15x & \\
        10x & 6z & 5xy\\
        2xy & 3xy & xyz\\
        2x & 5x & xy\\
        & 3x & x
    \end{array}
    \right\}.
\end{align*} Hence, we have
\begin{align*} 
    20\uparrow 30=\{10x,2xy,5xy,xyz,2x,5x,xy,x\}.
\end{align*} We now want to find the $\sqsubseteq$-minimal generalizations in $20\Uparrow 30$. The generalization $x$ is, of course, not minimal. We have
\begin{align*} 
    \downarrow 2xy=\{8,12,18,\ldots\}\sqsubset\{4,6,8,10,12,\ldots\}=\ \downarrow 2x
\end{align*} which shows
\begin{align*} 
    2x\sqsubset 2xy.
\end{align*} Hence, we can exclude $2xy$. Analogously,
\begin{align*} 
    5x\sqsubset 5xy.
\end{align*} shows that we can exclude $5xy$. Moreover, we clearly have
\begin{align*} 
    10x\sqsubset 2x \quad\text{und}\quad 10x\sqsubset 5x,
\end{align*} which means that we can exclude $2x$ and $5x$. We are thus left with the generalizations
\begin{align*} 
    10x \quad\text{and}\quad xy \quad\text{and}\quad xyz.
\end{align*} We clearly have
\begin{align*} 
    10x\sqsubset xy \quad\text{und}\quad 10x\sqsubset xyz,
\end{align*} which means that we are left with
\begin{align*} 
    20\Uparrow 30=\min_\sqsubseteq{(20\uparrow 30)}=\{10x\}.
\end{align*}
\end{example}

Of course, a single example is not enough to establish a strong connection between two concepts --- this brief section is to be understood only as an \textit{inspiration} for a deeper investigation of the relation between algebraic anti-unification and analogical proportions.

\section{Finite algebras}\label{§:FA}

In this section, we provide an algorithm for the computation of the analogical proportion relation within the $k$-fragment (cf. \prettyref{§:kl}) in finite algebras via tree automata.

Recall that a (\textit{\textbf{frontier-to-root}}) \textit{\textbf{tree automaton}} \cite<see e.g.>{Gecseg15}
\begin{align*} 
    \mathscr T_{\mathfrak A,k,\alpha,F}:=(\mathfrak A,L,X_k,\alpha,F)
\end{align*} consists of
\begin{itemize}
    \item a \textit{finite} $L$-algebra $\mathfrak A$,
    \item an \textit{\textbf{initial assignment}} $\alpha:X_k\to A$, and
    \item a set $F\subseteq A$ of \textit{\textbf{final states}}.
\end{itemize} The \textit{\textbf{regular tree language}} recognized by $\mathscr T_{\mathfrak A,k,\alpha,F}$ is given by
\begin{align*} 
    ||\mathscr T_{\mathfrak A,k,\alpha,F}||:=\left\{s\in T_{L,X_k} \;\middle|\; s^ \mathfrak A\alpha\in F\right\}.
\end{align*} We can thus rewrite the set of $k$-generalizations of $a$ in $\mathfrak A$ (see \prettyref{§:kl}),
\begin{align*} 
    \uparrow^k_ \mathfrak A a := (\uparrow_ \mathfrak A a)\cap T_{L,X_k},
\end{align*} by
\begin{align*} 
    \uparrow^k_ \mathfrak A a=\bigcup_{\alpha\in A^{X_k}}||\mathscr T_{\mathfrak A,k,\alpha,\{a\}}||,
\end{align*} and the set of $k$-justifications of an arrow $a\to b$ in $\mathfrak A$ in terms of tree automata as
\begin{align*} 
    \uparrow^k_\mathfrak A(a\to b)=\bigcup_{\alpha\in A^{X_k}}\left[(||\mathscr T_{\mathfrak A,k,\alpha,\{a\}}||\to ||\mathscr T_{\mathfrak A,k,\alpha,\{b\}}||)\cap\{s\to t\mid X(t)\subseteq X(s)\}\right],
\end{align*} where for two forests $S$ and $T$,
\begin{align*} 
    S\to T:=\{s\to t\mid s\in S,t\in T\}.
\end{align*} Notice that the set $A^{X_k}$ of all initial assignments $X_k\to A$ is \textit{finite} by our assumption that $X_k$ is a finite set of variables. Since it is well-known that tree automata are closed under finite unions, there is a tree automaton
\begin{align*} 
    \mathscr T_{\mathfrak A,k,a\to b},
\end{align*} for every arrow $a\to b$, such that
\begin{align*} 
    \uparrow^k_\mathfrak A(a\to b)=||\mathscr T_{\mathfrak A,k,a\to b}||\cap \{s\to t\mid X(t)\subseteq X(s)\}.
\end{align*} Now since tree automata are closed under intersection as well, there is a tree automaton
\begin{align*} 
    \mathscr T_{\mathfrak{(A,B)},k,a\to b \righttherefore c\to d},
\end{align*} for every arrow proportion $a\to b \righttherefore c\to d$, such that
\begin{align*} 
    \uparrow^k_\mathfrak{(A,B)}(a\to b \righttherefore c\to d)=||\mathscr T_{\mathfrak{(A,B)},k,a\to b \righttherefore c\to d}||\cap\{s\to t\mid X(t)\subseteq X(s)\}.
\end{align*} Since $\mathfrak B$ is finite by assumption, checking the $d$-maximality of $\uparrow^k_\mathfrak{(A,B)}(a\to b \righttherefore c\to d)$ can thus be easily achieved with a search linear in the size of $\mathfrak B$; checking the emptiness of $\uparrow^k_\mathfrak A(a\to b)$ and $\uparrow^k_\mathfrak B(c\to d)$ is well-known to be decidable as well \shortcite<cf.>[p. 40]{Comon08}; and checking $X(t)\subseteq X(s)$ is a simple syntactic comparison, which in total gives us an algorithm for deciding
\begin{align*} 
    a\to b \righttherefore_{\mathfrak{(A,B)},k}\, c\to d
\end{align*} and thus for deciding
\begin{align*} 
    a:b::_{\mathfrak{(A,B)},k}c:d.
\end{align*}

\begin{pseudocode}[Decision algorithm]\label{pseudocode:decision_algorithm} Given $k\geq 1$, $a,b\in A$, $c,d\in B$, and a pair of finite $L$-algebras $\mathfrak{(A,B)}$, we design an algorithm for deciding whether
\begin{align*} 
    a:b\stackrel{?}{::}_{ \mathfrak{(A,B)},k}c:d.
\end{align*} We first provide an algorithm for deciding whether
\begin{align}\label{equ:a_to_b__c_to_d} 
    a\to b\stackrel?{\righttherefore}_{ \mathfrak{(A,B)},k}\, c\to d.
\end{align}
\begin{enumerate}
    \item Construct the tree automata $\mathscr T_{\mathfrak A,k,a\to b}$ and $\mathscr T_{\mathfrak B,k,c\to d}$ as described above. If 
    \begin{align*} 
        (||\mathscr T_{\mathfrak A,k,a\to b}||\cup ||\mathscr T_{\mathfrak B,k,c\to d}||)\cap\{s\to t\mid X(t)\subseteq X(s)\}=\emptyset
    \end{align*} then stop with answer ``yes''.
    \item Otherwise, construct the tree automaton $\mathscr T_{\mathfrak{(A,B)},k, a\to b\righttherefore c\to d}$ as described above and compute the forest
    \begin{align*} 
        T := ||\mathscr T_{\mathfrak{(A,B)},k, a\to b\righttherefore c\to d}||.
    \end{align*}
    \item Compute the set of $k$-justifications
    \begin{align*} 
        J := T-\{s\to t\mid X(t)\not\subseteq X(s)\}
    \end{align*} applying a simple syntactic check on each rule in $T$ (notice that $T$ may be infinite).
    \item We now want to check whether $J$ is $d$-maximal:
    \begin{enumerate}
	    \item For each $d'\neq d\in B$: 
	        \begin{enumerate}
	            \item Construct the set of $k$-justifications
	            \begin{align*} 
	                J':=||\mathscr T_{\mathfrak{(A,B)},k, a\to b\righttherefore c\to d'}||-\{s\to t\mid X(t)\not\subseteq X(s)\}
	            \end{align*} as above.
	            \item If $J\subsetneq J'$ then stop with answer ``no''.
	        \end{enumerate}
	    \item Return the answer ``yes''.
	\end{enumerate}
	\item At this point, we have decided \prettyref{equ:a_to_b__c_to_d}. Now repeat the above steps for the remaining arrow proportions 
	\begin{align*} 
	    b\to a \righttherefore d\to c \quad\text{and}\quad c\to d \righttherefore a\to b \quad\text{and}\quad d\to c \righttherefore b\to a
	\end{align*} and return ``yes'' iff the answer is ``yes'' in each case.
\end{enumerate} 
\end{pseudocode}

\begin{pseudocode}\label{pseudocode:abcd} Given a pair of $L$-algebras $\mathfrak{(A,B)}$ and $k\geq 1$, computing the set
\begin{align*} 
    \{(a,b,c,d)\in A^2\times B^2\mid a:b::_{\mathfrak{(A,B)},k}c:d\}
\end{align*} can be done using \prettyref{pseudocode:decision_algorithm} to decide $a:b\stackrel?{::}_{ \mathfrak{(A,B)},k}c:d$ on each of the finitely many tuples.
\end{pseudocode}

\begin{pseudocode}[Solving proportional equations] Given a proportional equation
\begin{align*} 
    a:b::_k c: \mathfrak x,
\end{align*} finding some/all $d\in B$ such that $a:b::_{\mathfrak{(A,B)},k}c:d$ can be achieved using \prettyref{pseudocode:abcd}.
\end{pseudocode} 


\section{Conclusion}

The purpose of this paper was to expand the mathematical theory of analogical proportions within the abstract algebraic framework recently introduced by the author within the general setting of universal algebra.

We shall now discuss some lines of potential future research.

Section \prettyref{§:kl} introduced the $(k,\ell)$-fragments and in \prettyref{§:MAAP} -- \prettyref{§:MWP} we have studied the monolinear $(1,1)$-fragment in the setting of numbers and words. The monolinear fragment for sets appears more challenging given that union and intersection are highly non-injective operations. The next step is to study the \textbf{linear fragment} $(\infty,1)$ --- consisting of justifications with arbitrary many variables occurring at most once --- for numbers, words, and sets. From a theoretical point, it is interesting to analyze relationships between different such fragments as it may be the case that we obtain a \textbf{$(k,\ell)$-hierarchy} of increasing expressive power.

In \prettyref{§:WP}, we have derived some first partial results regarding word proportions where we showed in \prettyref{t:Stroppa06} that an important notion of word proportions is subsumed by our framework. However, a \textbf{full characterization of the word proportion relation} --- analogous to the one provided in the monolinear fragment in \prettyref{t:m_words} --- remains a challenging and practically relevant open problem.

\textbf{Infinite trees} naturally arise in the study of programming languages \cite<see e.g.>{Courcelle83}. It is thus interesting to generalize the concepts and results of \prettyref{§:TP} to tree proportions between infinite trees.

The algorithms of \prettyref{§:FA} regarding analogical proportions in finite algebras are restricted to the $k$-fragment where justifications may contain at most $k$ different variables and is thus bounded. This is necessary for the techniques of the theory of tree automata to be applicable. It appears challenging to \textbf{generalize} those \textbf{algorithms for finite algebras to the full framework} where the number of variables in justifications is unbounded. Even more challenging is the task to \textbf{find algorithms for the computation of analogical proportions beyond finite algebras}, most importantly in finitely representable algebras such as automatic structures \cite{Blumensath00,Blumensath04}.

\bibliographystyle{theapa}
\bibliography{/Users/christianantic/Bibdesk/Bibliography,/Users/christianantic/Bibdesk/Publications_J,/Users/christianantic/Bibdesk/Publications_C,/Users/christianantic/Bibdesk/Preprints,/Users/christianantic/Bibdesk/Submitted,/Users/christianantic/Bibdesk/Notes}
\if\isdraft1
\newpage

\section{Proportional homomorphisms}

\section{Analogies}

\fi
\end{document}